\newcommand{\ignore}[1]{}
\newtheorem{remark}{Remark}
\newenvironment{proof}{\noindent{\em Proof}.}{}
\newtheorem{definition}{\mbox{Definition}}[section]
\newtheorem{theorem}{\mbox{Theorem}}[section]
\newtheorem{corollary}{\mbox{Corollary}}[section]
\newtheorem{proposition}{\mbox{Proposition}}[section]
\newcommand{\qed}{{\hfill $\Box$}}
\newcommand{\A}{{\cal A}}
\newcommand{\Ora}{{\cal O}}
\newcommand{\TRUE}{\text {\sc true}}
\newcommand{\FALSE}{\text {\sc false}}
\newcommand{\GC}{\ensuremath{\mathcal{GC}}}
\newcommand{\OO}{\ensuremath{\mathcal{O}}}
\newcommand{\N}{\ensuremath{\mathbb{N}}}
\newcommand{\ID}{\ensuremath{\mathbb{I}\mathbb{D}}}
\newcommand{\SGC}{{\sf SGC}}
\newcommand{\SSGC}{{\sf SSGC}}
\newcommand{\setup}{{\sf Setup}}
\newcommand{\join}{{\sf Join}}
\newcommand{\leave}{{\sf Leave}}
\newcommand{\rekey}{{\sf Rekey}}
\newcommand{\keyset}{\text{\sc keyset}}
\newcommand{\userset}{\text{\sc userset}}
\newcommand{\bk}{\ensuremath{\bar{\sf k}}}
\newcommand{\hk}{\ensuremath{\hat{\sf k}}}
\newcommand{\acc}{{\sf acc}}
\begin{document}

\title{On the Security of Group Communication Schemes\thanks{An
extended abstract of this paper appeared as \cite{XuSASN05}.}}

\author{Shouhuai Xu \\
Department of Computer Science, University of Texas at San Antonio \\
{\tt shxu@cs.utsa.edu}
}

\date{}

\maketitle

\begin{abstract}
Secure group communications are a mechanism facilitating protected
transmission of messages from a sender to multiple receivers, and
many emerging applications in both wired and wireless networks need
the support of such a mechanism. There have been many secure group
communication schemes in wired networks, which can be
directly adopted in, or appropriately adapted to, 
wireless networks such as mobile ad hoc networks (MANETs) and sensor
networks. In this paper we show that the popular group communication
schemes that we have examined are vulnerable to the following
attack: An outside adversary who compromises a certain legitimate
group member could obtain {\em all} past and present group keys (and
thus all the messages protected by them); this is in sharp contrast
to the widely-accepted belief that a such adversary can only obtain
the present group key (and thus the messages protected by it).
In order to understand and deal with the attack, we formalize two security models
for stateful and stateless group communication schemes.
We show that some practical methods can make a {\em
subclass} of existing group communication schemes immune to the attack.
\end{abstract}

\noindent{\bf Keywords}: security, key management, group communication, multicast.

\section{Introduction}

Secure group communications are useful in both wired and wireless
networks, because they facilitate protected transmission of messages
from a sender to multiple receivers. One important property of
secure group communications is to ensure that only the legitimate
members (or users, receivers) can have access to the multicast or
broadcast data. There have been many secure group communication
schemes in the setting of wired networks; popular ones include the
stateful LKH \cite{WGL00,WallnerIETF98} and OFT
\cite{ShermanOFT03,BalensonOFTDraft} as well as stateless ones
\cite{NaorCrypto01,JhoEurocrypt05}. These schemes can be directly
adopted in, or appropriately adapted to, the setting of wireless
networks such as mobile ad hoc networks (MANETs) and sensor
networks. The core component of a secure group communication scheme
is its key management method. A common feature among these schemes'
key management methods is that each user holds a set of keys that
are then utilized to help establish some group keys (which are
common to all the group members and are used to encrypt actual
messages).

In this paper we show that these group communication schemes, or more specifically
their key management methods, are subject to
the following attack: An outside adversary who compromises a
certain legitimate group member could obtain {\em all} past
and present group keys (and thus the data encrypted using these keys).
This is in sharp contrast to the widely-accepted belief
that such an adversary can only obtain the present group key.
This attack is powerful also because it provides the adversary
the following flexibility: There are potentially many
legitimate group members such that compromising any (or a small number)
of them leads to the exposure of both past and present
group keys. This flexibility may be particularly relevant in
the setting of MANETs and sensor networks because they are typically deployed in
a small area and the adversary can capture and compromise the easiest-to-obtain node(s).

\subsection{Motivating Problems}
\label{sec:motivation}

Now we explore some attack scenarios
against the stateful LKH \cite{WGL00,WallnerIETF98} and
OFT \cite{ShermanOFT03,BalensonOFTDraft}, and
against stateless ones \cite{NaorCrypto01,JhoEurocrypt05}.
The emphasis is on the case of LKH.

\smallskip

\noindent{\bf Vulnerability of the LKH and LKH+ schemes:}
Let's first briefly review the LKH group communication scheme.
Following the notations of \cite{WGL00}, we let
$$
x \to \{y_1, \ldots, y_\ell \} : \{z\}_w
$$
denote that $x$ sends the users $y_1, \ldots, y_\ell$ (via multicast or unicast)
the encryption of plaintext $z$ using key $w$, namely the ciphertext $\{z\}_w$.

Consider the simple scenario, as shown in Figure
\ref{fig:attack}.(a), of a group consisting of a key server $s$ and
users $u_1, \ldots, u_8$. The server is responsible for initiating
and maintaining the group in the presence of user dynamics (i.e.,
joins and leaves). The keys are organized as a {\em key tree}, where
the leaves are the users and the inner nodes are the keys. Moreover,
each user holds the keys corresponding to the inner nodes on the
path starting from the parent of the user and ending at the root.
For example, in Figure \ref{fig:attack}.(a), user $u_1$ holds keys
$k_1$, $k_{123}$, and $k_{1-8}$, where $k_{1-8}$ is the {\em group
key} that can be used to encrypt the communications within the
group.

\begin{figure*}[ht]
\begin{center}
\includegraphics[height=6.0in]{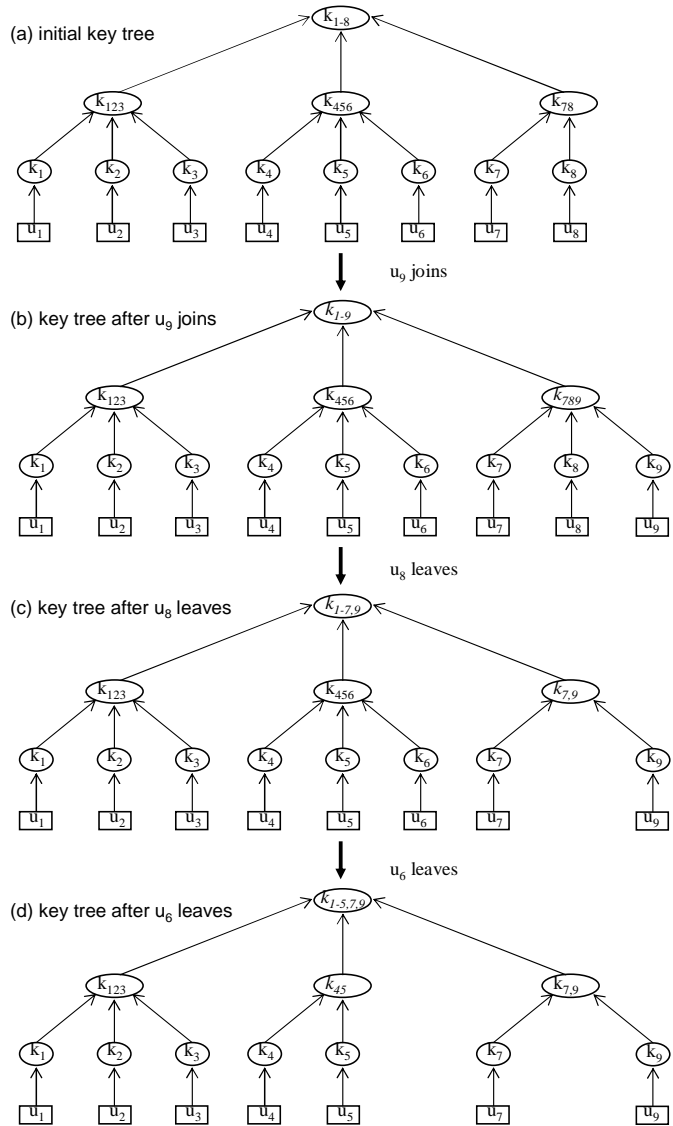}
\end{center}
\caption{A scenario of LKH}
\label{fig:attack}
\end{figure*}

In order to maintain secure communications, each join or leave would require the key server to
change some keys that also need to be securely distributed to certain users (via some
{\em rekeying messages}).
Ignoring for a moment certain details such as authorization of joining the group and
authentication of the messages sent by the key server, in what follows we explain how the key
server responds to group dynamics.

After granting a join request from user $u_9$, server $s$ shares a key $k_9$ with user $u_9$.
Certain keys need to be changed and sent to certain relevant users.
As shown in Figure \ref{fig:attack}.(b), in order to
prevent $u_9$ from accessing past communications,
$k_{78}$ and $k_{1-8}$ are changed to $k_{789}$ and $k_{1-9}$, respectively.
Moreover, the new group key $k_{1-9}$ needs to be securely sent to users $u_1,\ldots,u_9$, and
$k_{789}$ needs to be securely sent to users $u_7$, $u_8$, and $u_9$.
One efficient way to do this is the following algorithm (which corresponds to the so-called
{\em group-oriented rekeying} strategy):
\begin{eqnarray*}
\begin{array}{lll}
s \to \{u_1,\ldots,u_8\} & : & \{k_{1-9}\}_{k_{1-8}}, \{k_{789}\}_{k_{78}} \\
s \to \{u_9\} & : & \{k_{1-9},k_{789}\}_{k_9}
\end{array}
\end{eqnarray*}
Furthermore, $k_{1-8}$ is securely erased by $u_1,\ldots,u_8$, and
$k_{78}$ is securely erased by $u_7$ and $u_8$.

Now suppose $u_8$ leaves. To prevent $u_8$ from accessing future communications,
as shown in Figure \ref{fig:attack}.(c), server $s$ needs to change
$k_{1-9}$ and $k_{789}$ to $k_{1-7,9}$ and $k_{7,9}$, respectively.
Moreover, the new group key
$k_{1-7,9}$ needs to be securely sent to users $u_1, \ldots,u_7,u_9$,
and $k_{7,9}$ needs to be securely sent to $u_7$ and $u_9$.
One efficient way to do this is the following algorithm (which also corresponds to
the group-oriented rekeying strategy):
\begin{eqnarray*}
\begin{array}{lll}
s \to \{u_1,\ldots,u_7,u_9\} & : & \{k_{1-7,9}\}_{k_{123}},
\{k_{1-7,9}\}_{k_{456}}, \{k_{1-7,9}\}_{k_{7,9}},
\{k_{7,9}\}_{k_7}, \{k_{7,9}\}_{k_9}
\end{array}
\end{eqnarray*}
Furthermore, $k_{1-9}$ is securely erased by $u_1,\ldots,u_7,u_9$, and
$k_{789}$ is securely erased by $u_7$ and $u_9$.

Now suppose $u_6$ leaves also. To prevent $u_6$ from accessing future communications,
as shown in Figure \ref{fig:attack}.(d), server $s$ needs to change
$k_{1-7,9}$ and $k_{456}$ to $k_{1-5,7,9}$ and $k_{45}$, respectively.
Moreover, the new group key
$k_{1-5,7,9}$ needs to be securely sent to users $u_1, \ldots, u_5, u_7, u_9$,
and $k_{45}$ needs to be securely sent to users $u_4$ and $u_5$.
One efficient way to do this is the following algorithm (which also corresponds to
the group-oriented rekeying strategy):
\begin{eqnarray*}
\begin{array}{lll}
s \to \{u_1,\ldots,u_5,u_7,u_9\} & : & \{k_{1-5,7,9}\}_{k_{123}},
\{k_{1-5,7,9}\}_{k_{45}},  \{k_{1-5,7,9}\}_{k_{7,9}},
\{k_{45}\}_{k_{4}}, \{k_{45}\}_{k_5}
\end{array}
\end{eqnarray*}
Furthermore, $k_{1-7,9}$ is securely erased by $u_1,\ldots,u_5,u_7,u_9$, and
$k_{456}$ is securely erased by $u_4$ and $u_5$.

Given the above system setting,
let us now examine the consequences of a legitimate user being compromised.
\begin{itemize}
\item Suppose an adversary compromises user $u_9$.
It is of course true that the adversary is able to
obtain the present group key $k_{1-5,7,9}$,
no matter how the group rekeying scheme works.
Moreover, the adversary can obtain $k_{7,9}$ and $k_9$.
We observe that the adversary who has recorded the network traffic
is also able to obtain the past group keys $k_{1-9}$ and $k_{1-7,9}$,
because it can decrypt the messages incurred by the events
that $u_9$ joins the group and that $u_8$ leaves the group:
\begin{eqnarray*}
\begin{array}{lll}
s \to \{u_9\} & : & \{k_{1-9},k_{789}\}_{k_9}, \\
s \to \{u_1,\ldots,u_7,u_9\} & : & \{k_{1-7,9}\}_{k_{123}},
\{k_{1-7,9}\}_{k_{456}}, \{k_{1-7,9}\}_{k_{7,9}},
\{k_{7,9}\}_{k_7}, \{k_{7,9}\}_{k_9}.
\end{array}
\end{eqnarray*}
We stress that this is true even though the past group keys $k_{1,9}$ and
$k_{1-7,9}$ were securely erased by $u_9$.
As a consequence, the adversary can decrypt the communications encrypted using
the past and present group keys
$k_{1-9}$, $k_{1-7,9}$, and $k_{1-5,7,9}$. We notice that
the initial group key $k_{1-8}$ is never accessible to $u_9$.

\item Suppose an adversary compromises user $u_7$. Then, the adversary knows
$k_{1-5,7,9}$, $k_{7,9}$, and $k_7$. Note that the adversary can
obtain $k_{1-7,9}$ from the recorded traffic corresponding to the event that
$u_8$ leaves the group:
\begin{eqnarray*}
\begin{array}{lll}
s \to \{u_1,\ldots,u_7,u_9\} & : & \{k_{1-7,9}\}_{k_{123}},
\{k_{1-7,9}\}_{k_{456}}, \{k_{1-7,9}\}_{k_{7,9}},
\{k_{7,9}\}_{k_7}, \{k_{7,9}\}_{k_9}.
\end{array}
\end{eqnarray*}
We stress that this is true even though the past group key
$k_{1-7,9}$ was securely erased by $u_7$.
We notice that the above analysis is based on the implicit assumption that the initial
group key $k_{1-8}$ was ``magically" sent to $u_7$. In practice, $k_{1-8}$
might have been sent to $u_7$ via its individual key $k_7$.
This means that the adversary can obtain $k_{1-8}$,
and thus $k_{1-9}$ through the recorded traffic
corresponding to the event that $u_9$ joins the group:
\begin{eqnarray*}
\begin{array}{lll}
s \to \{u_1,\ldots,u_8\} & : & \{k_{1-9}\}_{k_{1-8}}, \{k_{789}\}_{k_{78}} \\
\end{array}
\end{eqnarray*}
As a consequence, {\em all} past and present group keys, namely $k_{1-8}$,
$k_{1-9}$, $k_{1-7,9}$ and $k_{1-5,7,9}$, are compromised even if the first three
were securely erased by $u_7$.

\item Suppose $u_5$ is compromised. Then, the adversary knows $k_{1-5,7,9}$,
$k_{45}$, and $k_5$. Further, if $k_{1-8}$ was sent to $u_5$ through
an encryption using its individual key $k_5$, then $k_{1-8}$ is exposed.
Moreover, $k_{1-9}$ can be obtained by the adversary from the
recorded traffic corresponding to the event that
$u_9$ joins the group:
\begin{eqnarray*}
\begin{array}{lll}
s \to \{u_1,\ldots,u_8\} & : & \{k_{1-9}\}_{k_{1-8}}, \{k_{789}\}_{k_{78}} \\
\end{array}
\end{eqnarray*}
As a consequence, the past and present group keys, namely $k_{1-8}$,
$k_{1-9}$ and $k_{1-5,7,9}$ are compromised, even if they
were securely erased by $u_5$.

A similar analysis applies to the case that $u_4$ is compromised.

\item Suppose $u_1$ is compromised. Then $k_{1-5,7,9}$, $k_{123}$, and $k_1$
are obtained by the adversary. This means that the adversary can further obtain
$k_{1-7,9}$ from the recorded traffic corresponding to the event that
$u_8$ leaves the group:
\begin{eqnarray*}
\begin{array}{lll}
s \to \{u_1,\ldots,u_7,u_9\} & : & \{k_{1-7,9}\}_{k_{123}},
\{k_{1-7,9}\}_{k_{456}}, \{k_{1-7,9}\}_{k_{7,9}},
\{k_{7,9}\}_{k_7}, \{k_{7,9}\}_{k_9}.
\end{array}
\end{eqnarray*}
Further, the above analysis is based on the implicit assumption that the initial
group key $k_{1-8}$ was ``magically" sent to $u_1$. In practice, $k_{1-8}$
might have been sent to $u_1$ via its individual key $k_1$.
This means that the adversary can obtain $k_{1-8}$,
and thus $k_{1-9}$ through the recorded traffic
corresponding to the event that $u_9$ joins the group:
\begin{eqnarray*}
\begin{array}{lll}
s \to \{u_1,\ldots,u_8\} & : & \{k_{1-9}\}_{k_{1-8}}, \{k_{789}\}_{k_{78}} \\
\end{array}
\end{eqnarray*}
As a consequence, {\em all} past and present group keys, namely $k_{1-8}$,
$k_{1-9}$, $k_{1-7,9}$ and $k_{1-5,7,9}$, are compromised even if the first three
were securely erased by $u_1$.

A similar analysis applies to the case
that $u_2$ or $u_3$ is compromised.
\end{itemize}

In summary, the above discussion shows, in sharp contrast to the
desired property that the adversary can only obtain the present
group key $k_{1-5,7,9}$, that compromising any of $u_1,u_2,u_3,u_7$
could lead to the exposure of all past and present group keys, and
compromising any of $u_4,u_5,u_9$ leads to the exposure of most past
and present group keys. This means that the adversary has considerable
flexibility in selecting the {\em weakest} node(s) to compromise.
Finally, we remark that the attack is not fundamentally related to
the group-oriented rekeying strategy, and that LKH+
\cite{VersaKeyJSAC99}, which was seemingly motivated from an
efficiency perspective, resolves only a piece of the problem because
the above attack remains effective when the group dynamics are
incurred by leaving events.

\begin{remark}
While there could be other methods to bootstrap the initial keys
(e.g., $k_{1-8}$ is not protected by $k_7$), the following scenario
would still support the above conclusion. Suppose at system
initialization there is no user but the server, then users join the
system one by one via LKH's join protocol (cf. Appendix
\ref{appendix:wgl}). In this case, transmission of group keys is
always protected by individual keys, meaning that compromise of some
user (or users) could lead to the exposure of all past and present
group keys.
\end{remark}

\begin{remark}
One may observe that the compromise of past group keys may not be a
serious problem. This is so because if a node stored all the
past communication content, it will be leaked to the adversary
when the node is compromised. However, there are situations, such as
sensitive applications, where the nodes do not, or even are not allowed to,
store past communication content. We
notice that this issue is also relevant to \cite{BellareCTRSA03,CanettiHKEurocrypt03}.
\end{remark}

\smallskip

\noindent{\bf Vulnerability of the One-way Function Tree (OFT) scheme:}
OFT \cite{ShermanOFT03,BalensonOFTDraft} is a stateful group communication scheme.
The basic idea underlying the OFT scheme is the following (we refer the reader to
\cite{ShermanOFT03,BalensonOFTDraft} for details).
The center maintains a binary tree, each node $x$ of
which is associated with two cryptographic keys: a node
key $k_x$ and a blinded node key $k'_x =g(k_x)$, where $g$ is an appropriate one-way function.
Every leaf of the tree is associated with a group member, and the center assigns a randomly
chosen key $k_x$ to each member $x$. Let $f$ be a ``mixing" function (e.g., $\oplus$).
The interior node keys are defined by the rule
$$
k_u = f(g(k_{left(u)}), g(k_{right(u)}))
$$
where $left(u)$ and $right(u)$ are the left and right children of the node $u$, respectively.
This way, the node key associated with the root of the tree is the group key.
In order for a member $u$ to derive the group key, the center (or server, sender)
sends the blinded node keys of
nodes adjacent to the nodes on (i.e., of the nodes ``hanging" off) the path from $u$ to the root.

When a new member joins the group, an existing leaf node $u$ is split, the member associated with
$u$ is now associated with $left(u)$, and the new member is associated with $right(u)$.
Both members are given new keys. The new blinded node keys that have been changed are
{\em securely sent} to the appropriate subgroups of members.

When the member associated with a node $u$ is evicted from the group, the member assigned to
the sibling of $u$ is reassigned to the parent $p$ of $u$ and given a new leaf key.
If the sibling $s$ of $u$ is the root of a subtree,
then $p$ becomes $s$, moving the subtree closer to the
root, and one of the leaves of this subtree is given a new key. The new blinded node keys are
{\em securely sent} to the appropriate subgroups of members.

Now we show why the OFT scheme is also vulnerable to a similar attack. The key
observation is that whenever there is a change to any blinded node
key, the center needs to {\em securely send} the new blinded node
keys to certain other legitimate nodes. It seems that any reasonable
method would be based on the keys possessed by the respective
nodes (e.g., $u$). Since $u$ can derive the new group key after
receiving the rekeying message, an {\em outsider} adversary could
use the following strategy to recover the group key: it first
records the rekeying message, and then breaks into $u$'s computer
{\em after} the next rekeying event (assuming that $u$ is still
legitimate). Moreover, compromising any of the nodes that
have not been evicted enables the adversary to recover past and
present group keys.

\smallskip

\noindent{\bf Vulnerability of the stateless subset-cover framework:}
Naor et al. \cite{NaorCrypto01} presented the first practical
stateless group communication scheme,
which has its roots in broadcast encryption \cite{FN93}.
Compared with the stateful group communication schemes discussed above, stateless schemes
have the nice feature that they do not assume the receivers
(or users, members) being always on-line.
Since the receivers do not necessarily update their state from session to session,
stateless schemes are especially good for applications over lossy channels
(e.g., MANETs and sensor networks). We stress that the security analysis
presented in \cite{NaorCrypto01} remains sound in its adversarial model;
whereas the present paper considers a strictly stronger adversarial model.

\begin{figure*}[h]
\begin{center}
\fbox{
\begin{minipage}{38pc}
\begin{description}
\item[Initialization:] Every receiver $u$ is assigned private
information $I_u$. For all $1 \leq i \leq w$
such that $u \in S_i$, $I_u$ allows $u$ to deduce the key $L_i$
corresponding to the set $S_i$.

\item[Broadcasting:] Given a set ${\cal R}$ of revoked receivers,
the center (or server, group controller,
sender) executes the following:
\begin{enumerate}
\item Choose a session encryption key $K$.
\item Find a partition of the users
in ${\cal N} \setminus {\cal R}$ into disjoint subsets $S_{i_1}, \ldots, S_{i_m}$.
Let $L_{i_1},\ldots, L_{i_m}$ be the keys associated with the above subsets.
\item Encrypt $K$ with keys $L_{i_1},\ldots, L_{i_m}$ and send the ciphertext
$$
\langle [i_1, \ldots, i_m, E_{L_{i_1}}(K), \ldots, E_{L_{i_m}}(K)], F_K(M)\rangle.
$$
\end{enumerate}
\item[Decryption:] A receiver $u$, upon receiving a broadcast message
$\langle [i_1, \ldots, i_m, C_1, \ldots, C_m], C \rangle$, executes as follows.
\begin{enumerate}
\item Find $i_j$ such that $u \in S_{i_j}$ (in the case $u \in
{\cal R}$ the result is {\sc null}). \item Extract the
corresponding key $L_{i_j}$ from $I_u$. \item Decrypt $C_j$ using
key $L_{i_j}$ to obtain $K$. \item Decrypt $C$ using key $K$ to
obtain the message $M$.
\end{enumerate}
\end{description}
\end{minipage}
}
\end{center}
\caption{The subset-cover revocation framework}
\label{fig:stateless:framework}
\end{figure*}

The subset-cover framework of \cite{NaorCrypto01} is reviewed in
Fig. \ref{fig:stateless:framework}, where
${\cal N}$ is the set of all users,
${\cal R} \subset {\cal N}$ is a group of $|{\cal R}|=r$ users whose decryption privileges
should be revoked, and $E_L$ and $F_K$ are two appropriate symmetric key cryptosystems
(whose properties will be specified later).
The goal of a stateless group communication scheme is to allow a center to transmit a message $M$
to all users such that any user $u \in {\cal N} \setminus {\cal R}$ can decrypt
the message correctly, while even a coalition consisting of all members of ${\cal R}$
cannot decrypt it.
Suppose $S_1,\ldots,S_w$ are a collection of subsets of users, where
$S_j \subseteq {\cal N}$ for $1 \leq j \leq w$, and each $S_j$ is assigned a long-lived
key $L_j$ such that each $u \in S_j$ should be able to
deduce $L_j$ from its secret information $I_u$.
Given a revoked set ${\cal R}$, if one can partition ${\cal N} \setminus {\cal R}$
into (ideally disjoint) sets $S_{i_1},\ldots,S_{i_m}$ such that
${\cal N} \setminus {\cal R} \subseteq \cup_{\ell=1}^m S_{i_\ell}$, then
a session key $K$ can be encrypted $m$ times with $L_{i_1},\ldots,L_{i_m}$, and
each user $u \in {\cal N} \setminus {\cal R}$ can obtain $K$ and thus $M$.

The subset-cover framework has a vulnerability similar to the one against
the stateful group schemes. Specifically, suppose an adversary
${\cal A} \notin {\cal N}$ records all the encrypted communications over the channels.
If ${\cal A}$ breaks into a legitimate user $u \in {\cal N}$ at a later point in time,
then $\A$ obtains $I_u$, which allows it to recover the $L_{i_j}$ (and thus the encrypted $M$)
that $u$ was entitled to obtain.
In the extreme case that $u$ has never been revoked,
${\cal A}$ can recover all past and present keys.

\ignore{
Regarding this, there are two important issues that
deserve mentioning. First, in \cite{NaorCrypto01} a weaker
security property of $F_K$ is presented. This is due to the fact
that $F_K$ uses short-lived keys (i.e., session keys). We notice
that this is a matter orthogonal to the focus of the present
paper. Second, in \cite{NaorCrypto01} a stronger security property
of $E_L$, namely security against an adaptive {\em
chosen-ciphertext} attack, is required. We observe that a secure
group communication should always be integrated with an
authentication mechanism so that the receivers are ensured that
the messages are indeed from the center. No matter how such an
authentication mechanism is implemented (see Section
\ref{sec:stateless:discussion} for more discussions),
chosen-plaintext security suffices as long as each honest receiver
only decrypts the authenticated messages. In other words,
chosen-ciphertext security is achieved by a mechanism at a
higher-layer (we stress that this does not conflict with the
rule-of-thumb cryptographic design principle that cross-layer
assurance should be avoided in general). We remark that the above
two issues are more relevant in a theoretical sense than in a
practical one.
}

\subsection{Our Contributions}

We trace the above vulnerability of group communication schemes
back to that their security models (if any) are not sufficient. We
formalize two adversarial models.
One is called the {\em passive attack model}, in which the
adversary is passive in the sense that it is only allowed to join
and leave the group in an arbitrary fashion, but not allowed to
corrupt any legitimate members. This model has seemingly been implicitly
adopted in the existing group communication literature.
The other more realistic one is called
the {\em active outsider attack model}, in which the adversary is
further allowed to corrupt legitimate members.
This model aids understanding and dealing with the aforementioned attack.
In each of the two models, we define two security notions, namely {\em
forward-security} meaning that the revoked or evicted members,
even if they collude, cannot obtain the future group keys, and
{\em backward-security} meaning that a newly admitted member
cannot obtain the past group keys.\footnote{The terms follow the
group communication literature
(see, e.g., \cite{WGL00,WallnerIETF98}).
Their meanings are indeed different from the ones adopted in
the cryptographic literature \cite{AndersonCCS97,BellareCrypto99,BellareCTRSA03}.} This allows
us to obtain the following interesting results about the
relationships between these security notions, which are equally
applicable to both stateful and stateless group communication
schemes (see Sections \ref{sec:stateful-relationships} and
\ref{sec:stateless:relationships}, respectively).
\begin{enumerate}
\item In the active outsider attack model, backward-security (also
called {\tt strong-security}) is {\em strictly} stronger
than forward-security (also called {\tt security}).
This means that in the active
outsider attack model one only needs to prove the
backward-security property.

\item In the passive attack model, backward-security is equivalent to
forward-security.

\item Backward-security in the active outsider attack model (i.e., {\tt strong-security})
is {\em strictly} stronger than backward-security in the passive attack model.
However, we do {\em not} know whether forward-security in the active outsider attack model is
also {\em strictly} stronger than its counterpart in the passive attack model
(we only know that when the adversary is {\em static} they are equivalent).

\item The security achieved in existing group communication schemes
(e.g., \cite{WGL00,ShermanOFT03,NaorCrypto01,JhoEurocrypt05})
is indeed, as we will show, forward-security in the
active outsider attack model (i.e., {\tt security}).
This has not become clear until now because there were no formal models specified before
(in spite of the fact that the passive attack model
has seemingly been implicitly adopted in the literature).
The achieved {\tt security} property is at least as
strong as what we call backward-security in the passive attack model,
but {\em strictly} weaker than what we call backward-security
in the active outsider attack model
(i.e., {\tt strong-security}) --- a property that blocks the attack discussed above.
\end{enumerate}

Besides the above general results, we show that some practical
methods can transform a {\em subclass} of the group communication
schemes (including LKH \cite{WGL00,WallnerIETF98}, LKH+
\cite{VersaKeyJSAC99}, and the complete subtree method
\cite{NaorCrypto01}) into ones that achieve the desired {\tt
strong-security}.
The methods are based on two general compilers. The
extra complexity imposed by the compilers is typically that at each rekeying
event a group member conducts logarithmically-many pseudorandom
function evaluations. This should not jeopardize their utility
even in the setting of MANETs and sensor networks, as pseudorandom
functions may be implemented with block ciphers in practice. We
also present instantiations of the compilers, which lead to
concrete schemes that achieve the desired {\tt strong-security}.

Although the technical means underlying the transformation
is to evolve the keys based on a secure pseudorandom function family --- an idea inspired by
\cite{BellareCTRSA03}, there are some subtle issues in our settings.
First, we must allow the adversary to corrupt
some group keys that are used to encrypt the communications {\em before}
the rekeying message of interest.
Of course, the corrupt members must have been revoked before that rekeying message.
On the other hand, in \cite{BellareCTRSA03} no such corruption is allowed before
the event of interest.
Second, from an adversary's perspective, there could be many ``valuable"
users in our settings, meaning that an adversary only needs to compromise the
{\em weakest} one(s) of them.
Whereas, no such flexibility is given to
the adversary in the setting of \cite{BellareCTRSA03}.

\subsection{Related Work}

LKH was independently invented in
\cite{WGL00,WallnerIETF98}.
Although these schemes are mainly invented
for secure multicast applications, we believe that many other
applications can utilize such a scheme;
we refer the reader to
\cite{RafaeliACMSurvey03,CanettiINFOCOM99} for a survey, including
the relationship between the schemes of
\cite{WGL00,WallnerIETF98} and the schemes of
\cite{FN93,NaorCrypto01}. We notice that the stateless schemes
(e.g., \cite{NaorCrypto01,JhoEurocrypt05}) are perhaps more useful in an environment
of lossy channels.
Although the LKH scheme has been extended in several directions, these extensions are
motivated to improve performance rather than to achieve strictly stronger security.
For example, performance can be improved by periodic group rekeying \cite{SetiaOkland00}
or batch rekeying \cite{LamWWW01}, and improved trade-offs between storage and
communication are available in \cite{CanettiEurocrypt99,CanettiINFOCOM99,MicciancioEurocrypt04}.
Nevertheless, these techniques may also be utilized in
our {\tt strongly-secure} group communication schemes.
To the best of our knowledge, our work is the first one
that identifies a new and realistic attack,
and presents solutions for (a subclass of) the popular group communication schemes.
The variant presented in \cite{VersaKeyJSAC99} (which is also known as LKH+)
is similar to our performance optimization in that the communication complexity
incurred by joining events can be substantially reduced.
However, there was no formal treatment of
the utilized key evolvement, nor was their scheme resistant against the attack
introduced in Section \ref{sec:motivation}.

While secure group-oriented communications have been extensively
investigated in the setting of wired networks, their counterparts
in the setting of wireless networks have yet to be thoroughly explored.
Although the aforementioned schemes can be directly deployed in wireless settings,
a simple-minded adoption may not lead to the desired performance
(see, e.g., \cite{ZhuMobiQuitous04,RadhaWINET05}).
Fortunately, there have been some interesting investigations that show that
these schemes can be adapted (e.g., by taking into account some physical characteristics
of ad hoc networks \cite{RadhaVTC04,RadhaICC04,RadhaICASSP03,RadhaWINET05})
so that better performance can be achieved.
One of the practical values of the present paper is that the significantly improved security
guarantee in the popular group communication schemes can be seamlessly integrated into
the methods for improving performance \cite{RadhaVTC04,RadhaICC04,RadhaICASSP03,RadhaWINET05}.
Indeed, our schemes can be easily integrated into any other methods for improving performance
to achieve better security,
as long as the methods assume ``black-box" access to an
underlying security group communication scheme.
There have been a few other attempts at
securing group communications in such settings:
\cite{KayaSASN03} presented a scheme for secure multicast communications in MANETs
based on public key cryptosystems; \cite{ZhuMobiQuitous04} investigated
a different approach to secure group communications.

A similar study on enhancing security of public key cryptography based
broadcast encryption was investigated in \cite{YaoCCS04}.

\subsection{Outline}

The rest of the paper is organized as follows.
In Section \ref{sec:tools} we briefly review some cryptographic preliminaries.
In Section \ref{sec:model} we present formal
models and security definitions of stateful
group communication schemes, as well as the relationships between the security notions.
In Section \ref{sec:compiler} we present a compiler for stateful group communication schemes
and investigate its properties.
The compiler is utilized in Section \ref{sec:concrete}
to derive a concrete {\tt strongly-secure} stateful group communication scheme
from the merely {\tt secure} LKH, which is reviewed in
Appendix \ref{appendix:wgl} for completeness.
In Section \ref{sec:stateless-case} we explore stateless group communication
schemes in parallel to their stateful counterparts.
We conclude the paper in Section \ref{sec:conclusion}.

\section{Cryptographic Preliminaries}
\label{sec:tools}

A function $\epsilon: {\mathbb N} \to {\mathbb R}^+$ is {\em negligible} if
$\forall c \ \exists \kappa_c \text{ s.t. } \forall \kappa>\kappa_c, \mbox{ we have }
\epsilon(\kappa)< 1/\kappa^c$.

We will base security of group
communication schemes on the security of pseudorandom function families.
For a security parameter $\kappa$, a pseudorandom function (PRF)
family $\{f_{k}\}$ parameterized by a secret value
$k \in _R \{0,1\}^{\kappa}$ has the following
property \cite{GGM86}: A probabilistic polynomial-time adversary $\A$ has
only a negligible (in $\kappa$) advantage in distinguishing $f_{k}$
from a perfect random function (with the same domain and range).
It is well-known that pseudorandom functions can be naturally used to
construct symmetric key encryption schemes that are secure against chosen-plaintext attacks
(which suffices for our treatment of LKH).
Informally, this means that no adversary is able to learn any significant information
about the encrypted content. We refer the reader to \cite{KYSTOC00} for a thorough treatment
on this subject.

\begin{definition}\emph{(computational independence)}
Consider a set $S=\{s_1,\ldots,s_\ell\}$ of secret binary strings of length $\kappa$,
where $\ell = poly(\kappa)$ for some polynomial $poly$.
We say $s_1,\ldots,s_\ell$ are {\em computationally independent} of each other if
for any probabilistic polynomial-time algorithm $\A$,
$$
\left|\Pr[ \A(s_1, \ldots, s_\ell) \text{ returns ``real" }]-
\Pr[\A(r_1, \ldots, r_\ell) \text{ returns ``real" }] \right|= \epsilon(\kappa)
$$
where $r_1,\ldots,r_\ell \in _R\{0,1\}^\kappa$ are uniformly drawn at random,
and $\epsilon$ is a negligible function.
\end{definition}

\section{Model and Definition of Stateful Secure Group Communications}
\label{sec:model}

In Section \ref{sec:s-model1} we present a formal security model
for stateful (and symmetric key cryptography based) group
communications. In Section \ref{sec:s-model2} we specify the
adversarial models and desired security properties.
In Section \ref{sec:stateful-relationships} we explore the relationships between
the security notions.

\subsection{Model}
\label{sec:s-model1}

Let $\kappa$ be a security parameter, and $\ID$ be the set of possible group members
(i.e., users, receivers, or principals)
such that $|\ID|$ is polynomially-bounded in $\kappa$.
There is a special entity called a {\em Group Controller}
(i.e., key server, center, server, or sender),
denoted by $\GC$, such that $\GC \notin \ID$.

Since a stateful group communication scheme is driven
by ``rekeying" events (because of joining or leaving
operations below), it is
convenient to treat the events as occurring at ``virtual time" $t=0,1,2,\ldots$, because
the group controller is able to maintain such an execution history. This indeed accommodates
the following important two cases:
(1) all the parties periodically update their keys,
even if there are no joining or leaving operations
--- this is relevant when a scheme achieves what
we call {\tt strong-security} specified below;
(2) the lengths of the time periods do not have to be the same ---
this is the case when the rekeying events occur in an arbitrary
fashion. At time $t$, let $\Delta^{(t)}$ denote the set of
legitimate group members,
$k^{(t)}=k_{\GC}^{(t)}=k_{U_1}^{(t)}=\ldots$ the group
key,\footnote{It is also known as a session key in the group
communication literature.} $K_{\GC}^{(t)}$ the set of keys held by
the $\GC$, $K_U^{(t)}$ the set of keys held by $U \in \Delta^{(t)}$,
and $\acc_U^{(t)}$ the state indicating whether $U \in \Delta^{(t)}$
has successfully received the rekeying message. Initially,
$\forall$ $U \in \ID, t \in \N$, set $\acc_U^{(t)} \leftarrow
\FALSE$. We assume that the $\GC$ treats joining and leaving operation
separately (e.g., first fulfilling the leaving operation and then
immediately the joining one), even if the requests are made
simultaneously. This strategy has indeed been adopted in the group
communication literature.

To simplify the presentation, we assume that during the system
initialization (i.e., $\setup$ below) or the admission of a joining
user, the $\GC$ can communicate with each legitimate member $U \in \ID$
through an {\em authenticated private} channel, and that after the
system initialization the $\GC$ can communicate with any $U$ through an
{\em authenticated} channel. We notice that authenticated channels
can by implemented by a digital signature scheme \cite{WGL00}, and
digital signatures are sometimes necessary \cite{BonehEurocrypt01}.

We will not make any {\em synchronization}
assumption about the underlying communication model, which
could therefore be asynchronous \cite{Lynch96}. However, known practical schemes
(e.g., \cite{WGL00,WallnerIETF98,CanettiEurocrypt99}) assume {\em reliable} delivery, which
would require some (loose) clock synchronization.

A group communication scheme has the following components:

\begin{description}
\item[{\sf Setup}:] The group controller $\GC$ generates a set of keys
$K_{\GC}^{(0)}$, and distributes appropriate subsets of $K_{\GC}^{(0)}$ to the present
group members (that may be determined by the adversary),
$\Delta^{(0)} \subseteq \ID$, through the authenticated private channels.
Each member $U_i \in \Delta^{(0)}$ holds a set of keys denoted by
$K_{U_i}^{(0)} \subset K_{\GC}^{(0)}$, and there is a key, $k^{(0)}$
that is common to all the present members, namely
$k^{(0)} \in K_{\GC}^{(0)} \cap K_{U_1}^{(0)} \cap \ldots \cap K_{U_{|\Delta^{(0)}|}}^{(0)}$.

\item[{\sf Join}:] This algorithm is executed by group controller $\GC$ at
time, say, $t$ due to some join request(s) (we abstract away
the out-of-band authentication and establishment of an individual
key for each of the new members). It takes as input: (1)
identities of previous group members, $\Delta^{(t-1)}$, (2) identities of
newly admitted group members, $\Delta' \subseteq \ID \setminus
\Delta^{(t-1)}$, (3) keys held by the group controller, $K_{\GC}^{(t-1)}$, and (4)
keys held by group members,
$\{K_{U_i}^{(t-1)}\}_{U_i \in \Delta^{(t-1)}} = \{K_{U_i}^{(t-1)} : U_i \in \Delta^{(t-1)} \}$.

It outputs updated system state information, including:
(1) identities of new group members,
$\Delta^{(t)} \leftarrow \Delta^{(t-1)} \cup \Delta'$,
(2) new keys for the $\GC$ itself, $K_{\GC}^{(t)}$,
(3) new keys for new group members,
$\{K_{U_i}^{(t)}\}_{U_i \in \Delta^{(t)}}$,
which are sent to the legitimate users
through the authenticated channels,
(4) new group key
$k^{(t)} \in K_{\GC}^{(t)} \cap K_{U_1}^{(t)} \cap \ldots \cap K_{U_{|\Delta^{(t)}|}}^{(t)}$.

Formally, denote it by
$(\Delta^{(t)}, K_{\GC}^{(t)}, \{K_{U_i}^{(t)}\}_{U_i \in \Delta^{(t)}})
\leftarrow {\sf Join}( \Delta^{(t-1)}, \Delta',
K_{\GC}^{(t-1)}, \{K_{U_i}^{(t-1)}\}_{U_i \in \Delta^{(t-1)}})$.

\item[{\sf Leave}:]  This algorithm is executed by the group controller $\GC$ at
time, say, $t$ due to leave or revocation operation(s).
It takes as input:
(1) identities of previous group members, $\Delta^{(t-1)}$,
(2) identities of leaving group members, $\Delta' \subseteq \Delta^{(t-1)}$,
(3) keys held by the controller, $K_{\GC}^{(t-1)}$, and
(4) keys held by group members, $\{K_{U_i}\}_{U_i \in \Delta^{(t-1)}}^{(t-1)}$.

It outputs updated system state information, including:
(1) identities of new group members,
$\Delta^{(t)} \leftarrow \Delta^{(t-1)} \setminus \Delta'$,
(2) new keys for $\GC$, $K_{\GC}^{(t)}$,
(3) new keys for new group members,
$\{K_{U_i}^{(t)}\}_{U_i \in \Delta^{(t)}}$, which
are sent to the legitimate users
through the authenticated channels,
(4) a new group key
$k^{(t)} \in K_{\GC}^{(t)} \cap K_{U_1}^{(t)} \cap \ldots \cap K_{U_{|\Delta^{(t)}|}}^{(t)}$.

Formally, denote it by
$(\Delta^{(t)}, K_{\GC}^{(t)}, \{K_{U_i}^{(t)}\}_{U_i \in \Delta^{(t)}})
\leftarrow {\sf Leave}( \Delta^{(t-1)}, \Delta',
K_{\GC}^{(t-1)}, \{K_{U_i}^{(t-1)}\}_{U_i \in \Delta^{(t-1)}})$.

\item[{\sf Rekey}:] This algorithm is executed by the legitimate
group members belonging to $\Delta^{(t)}$ at time $t$,
where $\Delta^{(t)}$ is derived from a {\sf Join} or {\sf Leave}
event. Specifically, $U_i \in \Delta^{(t)}$ runs this algorithm
upon receiving the message from the $\GC$ over the authenticated
channel. The algorithm takes as input the received message and
$U_i$'s secrets, and is supposed to output the updated keys for
the group member. If the execution of the algorithm is successful,
$U_i$ sets: (1) ${\sf acc}_{U_i}^{(t)} \leftarrow \TRUE$, (2)
$K_{U_i}^{(t)}$, where $k_{U_i}^{(t)} \in K_{U_i}^{(t)}$ is
supposed to be the new group key.

If the rekeying event is incurred by a {\sf Join} event,
every $U_i \in \Delta^{(t)}$ erases $K_{U_i}^{(t-1)}$ and any
temporary storage after obtaining $K_{U_i}^{(t)}$. If the rekeying event is incurred by
a {\sf Leave} event, every $U_i \in \Delta^{(t)}$
erases $K_{U_i}^{(t-1)}$ and any temporary storage after
obtaining $K_{U_i}^{(t)}$, and every {\em honest}
leaving group member $U_j \in \Delta'$ erases $K_{U_j}^{(t-1)}$
(although a {\em corrupt} one does not have to follow this protocol).
\end{description}

We require that any group communication scheme satisfy
the following {\tt correctness} requirement: for any $t=1,2,\ldots$ and
$\forall$ $U \in \Delta^{(t)}$, if
$\acc_U^{(t)} = \TRUE$, then $k_{U}^{(t)} = k^{(t)}$ and $K_{U}^{(t)} \subset K_{\GC}^{(t)}$.

\subsection{Security Definitions}
\label{sec:s-model2}

We consider an adversary that has complete control over all the
communications in the network. To simplify the definition, we
assume that the group controller is never compromised; this is not
necessarily a restriction because the adversary could have
compromised all the group members (and thus have obtained the
secrets the group controller holds).

An adversary's interaction with principals in the network is modeled
by allowing it to have access to (some of) the following oracles:

\begin{itemize}
\item $\Ora_{Send}(U,t,M,action)$: Send a message $M$ to $U \in \{\GC\} \cup \ID$
at time $t\geq 0$, and output its reply, where
$action \in \{{\sf Setup}, {\sf Join}, {\sf Leave}, {\sf Rekey}\}$ meaning that
$U$ will execute according to the corresponding protocol, and
$M$ specifies the needed information for executing the protocol.
Of course, the query of type {\sf Setup} is only made at time $t=0$.

These oracle accesses are meant to capture that the adversary can observe the reactions
of the non-corrupt participants (e.g., the incurred message exchanges).
For example, the adversary can let some
honest (i.e., non-corrupt) users join or leave the group in question.

\item $\Ora_{Reveal}(U,t)$: Output the group key held by $U \in  \Delta^{(t)}$
at time $t$, namely $k_U^{(t)}$.

\item $\Ora_{Corrupt}(U,t)$: Output the keys held by $U \in \Delta^{(t)}$ at time $t$, namely
$K_U^{(t)}$.

\item $\Ora_{Test}(U,t)$: This oracle may be queried only once, at any time
during the adversary's execution. A random bit $b$ is generated:
if $b=1$ the adversary is given $k_U^{(t)}$ where $U \in \Delta^{(t)}$,
and if $b=0$ the adversary is given a random key of length $|k_U^{(t)}|$.
\end{itemize}

Now we define the {\em active outsider attack model} that is strictly more powerful than
the {\em passive outsider attack model} that has been implicitly utilized in the literature.

\begin{definition}\emph{(active outsider attack model)}
\label{def:adversarial-model} In this model, the adversary $\A$
may have access to all the oracles specified above. In particular,
an ``outsider" $\A \notin \Delta^{(t)}$ is allowed to issue an
$\OO_{Reveal}(U,t)$ or $\OO_{Corrupt}(U,t)$ query for some $U \in
\Delta^{(t)}$.
\end{definition}

\begin{definition}\emph{(passive attack model)}
\label{def:adversarial-model}
In this model, the adversary is only allowed to make $\Ora_{Send}(\cdot,\cdot,\cdot,\cdot)$
and $\Ora_{Test}(\cdot,\cdot)$ queries.
In other words, the adversary is only allowed to join and leave the group (in an
arbitrary fashion though).
\end{definition}

In each of the two models, we define two security notions:
backward-security and forward-security.
This leads to four security notions:
(1) forward-security in the active outsider attack model or simply {\tt security},
(2) backward-security in the active outsider attack model or simply {\tt strong-security},
(3) forward-security in the passive attack model, and
(4) backward-security in the passive attack model.

\begin{definition}\emph{({\tt security})}
\label{def:stateful-forward-secrecy}
Intuitively, it means that $\A$ learns no
information about a group key if (1) with respect to the corresponding rekeying event
there is no corrupt legitimate member (this implicitly implies that
all the members that were corrupted by $\A$ must have been revoked), and
(2) no member is corrupted by $\A$ after the rekeying event.
Formally, consider the following event {\sf Succ}:
\begin{description}
\item[(1)] The adversary can make arbitrary oracle queries at any time
$t_1<t$, except the following restrictions hold.

\item[(2)] The adversary queries the $\Ora_{Test}(U,t)$ oracle
with ${\sf acc}_U^{(t)} =\TRUE$,
and correctly guesses the bit $b$ used by the $\Ora_{Test}(U,t)$
oracle in answering this query.

\item[(3)] There is no $\Ora_{Reveal}(V,t)$ query for any
$V \in \Delta^{(t)}$.
(Otherwise, the group key is trivially compromised.)

\item[(4)] For every $\Ora_{Corrupt}(V,t_1)$ query where $t_1 < t$,
there must have been an $\Ora_{Send}(\GC,t_2,V,{\sf Leave})$ query where
$t_1 < t_2 \leq t$.
This captures that the corrupt members must have been revoked
before the rekeying message at time $t$.

\item[(5)] There is no $\Ora_{Corrupt}(V,t_3)$ query for
any $t_3 \geq t$ and $V \in \Delta^{(t_3)}$.
\end{description}
The advantage of the adversary $\A$ in attacking the group communication scheme is defined as
${\sf Adv}_{\A}(\kappa) = | 2 \cdot \Pr[{\sf Succ}] -1|$,
where $\Pr[{\sf Succ}]$ is the probability that the event {\sf Succ} occurs, and the probability
is taken over the coins used by $\GC$ and by $\A$.
We say a scheme is {\tt secure} if for all
probabilistic polynomial-time adversary $\A$ it holds that
${\sf Adv}_{\A}(\kappa)$ is negligible in $\kappa$.
\end{definition}

\begin{definition}\emph{({\tt strong-security})}
\label{def:stateful-backward-secrecy}
Intuitively, it means that an adversary learns no
information about a group key if, with respect to the rekeying event of interest
there is no corrupt legitimate member (this implicitly implies that
all the previously corrupt members have been revoked).
Formally, consider the following event {\sf Succ}:
\begin{description}
\item[(1)-(4)] The same as in the definition of {\tt security}.

\item[(5)]
There is no $\Ora_{Corrupt}(V,t_3)$ query for $t_3=t$ and $V \in \Delta^{(t_3)}$.
(This does not rule out that there could be some $\Ora_{Corrupt}(V,t_3)$ query for $t_3 > t$.)
\end{description}
The advantage of the adversary $\A$ in attacking the group communication scheme is defined as
${\sf Adv}_{\A}(\kappa) = | 2 \cdot \Pr[{\sf Succ}] -1|$,
where $\Pr[{\sf Succ}]$ is the probability that the event {\sf Succ} occurs, and the probability
is taken over the coins used by $\GC$ and by $\A$.
We say a scheme is {\tt strongly-secure} if for all
probabilistic polynomial-time adversary $\A$ it holds that
${\sf Adv}_{\A}(\kappa)$ is negligible in $\kappa$.
\end{definition}

\begin{definition}\emph{(forward-security in the passive attack model)}
\label{def:stateful-forward-secrecy-passive}
Intuitively, it means that $\A$, which is not allowed to make any $\OO_{Reveal}$ or
$\OO_{corrupt}$ query, learns no
information about any group key after leaving the group.
Formally, consider the following event {\sf Succ}:
\begin{description}
\item[(1)] The adversary arbitrarily queries the
$\Ora_{Send}(\cdot,t_1,\cdot,\cdot)$ oracle for any $t_1 < t$.
Moreover, the adversary itself can arbitrarily join or leave the group at time $t_1$, provided
that the following restriction holds.

\item[(2)] The adversary queries the $\Ora_{Test}(U,t)$ oracle, where
(1) ${\sf acc}_U^{(t)} =\TRUE$ for an honest user $U$, and (2)
$\A \notin \Delta^{(t)}$.
Then, the adversary correctly guesses the bit $b$ used by the $\Ora_{Test}(U,t)$
oracle in answering this query.
\end{description}
The advantage of the adversary $\A$ in attacking the group communication scheme is defined as
${\sf Adv}_{\A}(\kappa) = | 2 \cdot \Pr[{\sf Succ}] -1|$,
where $\Pr[{\sf Succ}]$ is the probability that the event {\sf Succ} occurs, and the probability
is taken over the coins used by $\GC$ and by $\A$.
We say a scheme is {\tt secure} if for all
probabilistic polynomial-time adversary $\A$ it holds that
${\sf Adv}_{\A}(\kappa)$ is negligible in $\kappa$.
\end{definition}

\begin{definition}\emph{(backward-security in the passive attack model)}
\label{def:stateful-backward-secrecy-passive}
Intuitively, it means that $\A$, which is not allowed to make any $\OO_{Reveal}$ or
$\OO_{corrupt}$ query, learns no
information about any group key before joining the group (again).
Formally, consider the following event {\sf Succ}:
\begin{description}
\item[(1)] The adversary may arbitrarily query the
$\Ora_{Send}(\cdot,t_1,\cdot,\cdot)$ oracle for any $t_1 < t$.
Moreover, $\A$ can arbitrarily join or leave the group at time $t_1$, provided
that the following restriction holds.

\item[(2)] The adversary queries the $\Ora_{Test}(U,t)$ oracle, where
(1) ${\sf acc}_U^{(t)} =\TRUE$ for an honest user $U$, and (2)
$\A \notin \Delta^{(t)}$.

\item[(3)] The adversary queries the $\Ora_{Send}(\cdot,t_2,\cdot,\cdot)$ oracle for any $t_2 > t$.
Moreover, $\A$ can arbitrarily join or leave the group at time $t_2$.

\item[(4)] The adversary correctly guesses the bit $b$ used by the $\Ora_{Test}(U,t)$
oracle in answering this query.
\end{description}
The advantage of the adversary $\A$ in attacking the group communication scheme is defined as
${\sf Adv}_{\A}(\kappa) = | 2 \cdot \Pr[{\sf Succ}] -1|$,
where $\Pr[{\sf Succ}]$ is the probability that the event {\sf Succ} occurs, and the probability
is taken over the coins used by $\GC$ and by $\A$.
We say a scheme is {\tt secure} if for all
probabilistic polynomial-time adversary $\A$ it holds that
${\sf Adv}_{\A}(\kappa)$ is negligible in $\kappa$.
\end{definition}

It is trivial to see that {\tt strong-security} implies
{\em backward-security in the passive model}, and that
{\tt security} implies {\em forward-security in the passive attack model}.

\subsection{Relationships between the Security Notions}
\label{sec:stateful-relationships}

We summarize the relationships between the security notions of
stateful group communication schemes in Fig.
\ref{fig:relationships}, where $X \rightarrow Y$ means $X$ is
stronger than $Y$, $X \leftrightarrow Y$ means $X$ is equivalent
to $Y$, $X \not\rightarrow Y$ means $X$ does not imply $Y$, and $X
\stackrel{?}{\not\rightarrow} Y$ means it is unclear where $X$
does not imply $Y$.
Below we elaborate on the non-trivial relationships showed in Fig. \ref{fig:relationships}.

\begin{figure*}[h]
\unitlength=1.00mm \special{em:linewidth 0.4pt} \linethickness{0.4pt}
\begin{picture}(132.00,40.00)
\put(16.00,38.00){{backward-security in the}}
\put(16.00,34.00){{active outsider attack model}}
\put(16.00,30.00){{(i.e., {\tt strong-security})}}
\put(15.00,29.00){\framebox(46, 12)}
\put(68.00,40.00){{Proposition \ref{proposition:active-model:backward-implies-forward}}}
\put(61.00,39.00){\vector(1,0){38}}
\put(68.00,35.00){{Proposition \ref{proposition:active-model:backward-does-not-imply-forward}}}
\put(78.00,32.00){/}
\put(99.00,33.00){\vector(-1,0){38}}
\put(16.00,8.00){{backward-security in the}}
\put(16.00,2.00){{passive attack model}}
\put(15.00,0.00){\framebox(46, 12)}
\put(14.00,20.00){trivial}
\put(24.00,29.00){\vector(0,-1){17.00}}
\put(54.00,20.00){Theorem \ref{theorem:counter-example}}
\put(52.00,21.00){/}
\put(53.00,12.00){\vector(0,1){17.00}}
\put(100.00,38.00){{forward-security in the}}
\put(100.00,34.00){{active outsider attack model}}
\put(100.00,30.00){{(i.e., {\tt security})}}
\put(99.00,29.00){\framebox(46, 12)}
\put(68.00,7.00){{Proposition \ref{proposition:passive-model:backward-equals-to-forward}}}
\put(81.00,6.00){\vector(1,0){18}}
\put(81.00,6.00){\vector(-1,0){20.00}}
\put(100.00,8.00){{forward-security in the}}
\put(100.00,2.00){{passive attack model}}
\put(99.00,0.00){\framebox(46, 12)}
\put(98.00,20.00){trivial}
\put(108.00,29.00){\vector(0,-1){17.00}}
\put(136.00,21.00){{/}}
\put(139.00,18.00){{\huge ?}}
\put(137.00,12.00){\vector(0,1){17.00}}
\end{picture}
\caption{The relationships between the security notions in stateful group communication schemes}
\label{fig:relationships}
\end{figure*}
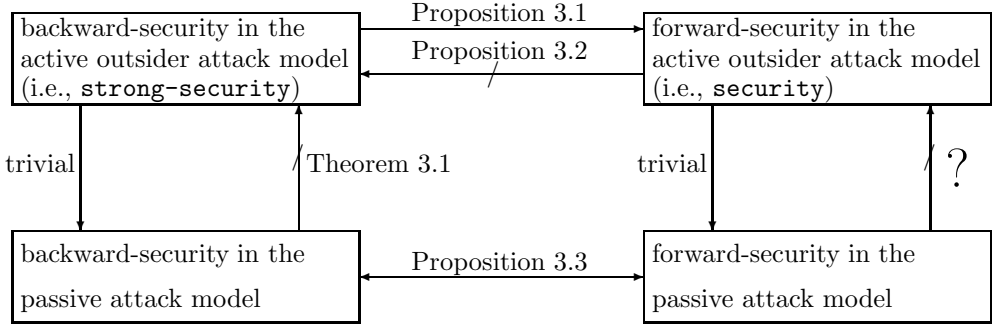

\begin{proposition}
\label{proposition:active-model:backward-implies-forward}
If a stateful group communication scheme is {\tt strongly-secure}, then it is also {\tt secure}.
\end{proposition}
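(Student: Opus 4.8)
The plan is to show that any adversary that is admissible in the {\tt security} game is automatically admissible in the {\tt strong-security} game, so that the class of legal {\tt security}-adversaries is contained in the class of legal {\tt strong-security}-adversaries. Once this containment is established, the proposition follows immediately from the hypothesis, since {\tt strong-security} bounds the advantage of the larger class. The entire argument rests on comparing the two lists of restrictions defining the event {\sf Succ}, so no cryptographic reduction is needed.

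First I would observe that restrictions (1)--(4) are, by definition, literally identical in the two notions. Hence the only point of difference is restriction (5): in the {\tt security} game it forbids every query $\Ora_{Corrupt}(V,t_3)$ with $t_3 \geq t$ and $V \in \Delta^{(t_3)}$, whereas in the {\tt strong-security} game it forbids such a query only when $t_3 = t$. Since $\{t_3 : t_3 = t\} \subseteq \{t_3 : t_3 \geq t\}$, any adversary respecting the {\tt security} version of (5) (no corruption at any $t_3 \geq t$) certainly respects the {\tt strong-security} version (no corruption at $t_3 = t$). Therefore every adversary that is legal in the {\tt security} game is also legal in the {\tt strong-security} game, and, since restrictions (1)--(4) and the winning condition coincide, it induces exactly the same event {\sf Succ} and hence the identical advantage ${\sf Adv}_{\A}(\kappa)$ in both games.

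Next I would invoke the hypothesis. Because the scheme is {\tt strongly-secure}, every probabilistic polynomial-time adversary admissible in the {\tt strong-security} game has negligible advantage in $\kappa$. By the containment just established, an arbitrary probabilistic polynomial-time adversary $\A$ admissible in the {\tt security} game is in particular admissible in the {\tt strong-security} game, so ${\sf Adv}_{\A}(\kappa)$ is negligible. As $\A$ was arbitrary, this is precisely the statement that the scheme is {\tt secure}.

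I expect no substantive obstacle here; the result is a direct consequence of the fact that {\tt strong-security} quantifies over a strictly larger class of adversaries, obtained by relaxing restriction (5). The only point requiring care is to confirm that relaxing (5) genuinely \emph{enlarges} rather than restricts the adversary class --- that is, that a corruption query at some time $t_3 > t$ is forbidden under {\tt security} but permitted under {\tt strong-security}. This is exactly what makes every legal {\tt security}-adversary automatically legal under {\tt strong-security}, and not conversely, which is the crux of the one-line inclusion that drives the proof.
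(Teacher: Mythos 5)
Your proposal is correct and takes essentially the same approach as the paper: the paper's proof is exactly the observation that {\tt strong-security} guarantees secrecy of $k^{(t)}$ even when the adversary corrupts members at times $t_3 > t$, whereas {\tt security} only demands it for adversaries that never do so, i.e., the admissible {\tt security}-adversaries form a subclass of the admissible {\tt strong-security}-adversaries. Your write-up simply makes this containment explicit by comparing restriction (5) in the two definitions, which is precisely where the paper locates the difference.
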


\begin{proof}
This is almost immediate because, on one hand, the definition of {\tt strong-security}
ensures the secrecy of $k^{(t)}$ even if $\A$ corrupts some $U \in \Delta^{(t_3)}$
where $t_3 > t$, and on the other hand,
the definition of {\tt security}
ensures the secrecy of $k^{(t)}$ only if $\A$ does not corrupt any $U \in \Delta^{(t_3)}$
for any $t_3>t$.  \qed
\end{proof}

\begin{proposition}
\label{proposition:active-model:backward-does-not-imply-forward} A
stateful group communication scheme that is {\tt secure} is not
necessarily {\tt strong-secure}.
\end{proposition}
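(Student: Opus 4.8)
The plan is to prove the separation by a generic transformation: starting from any scheme $\Pi$ that is {\tt secure} (Definition~\ref{def:stateful-forward-secrecy}), I would construct a scheme $\Pi'$ that remains {\tt secure} yet provably fails {\tt strong-security} (Definition~\ref{def:stateful-backward-secrecy}). This is the cleanest route because it isolates the single definitional gap between the two notions: condition~(5) of {\tt security} forbids {\em every} $\Ora_{Corrupt}(V,t_3)$ with $t_3 \ge t$, whereas condition~(5) of {\tt strong-security} only forbids $t_3 = t$ and therefore hands the adversary the extra power of corrupting continuing members strictly after the target time $t$. Concretely, $\Pi'$ behaves exactly like $\Pi$ except that, whenever a {\em continuing} member $U$ (one already present at time $t-1$) rekeys into time $t$, it does not erase the superseded group key $k^{(t-1)}$ but retains it as an inert extra component of $K_U^{(t)}$; newly joined members retain nothing, and $\GC$ also keeps $k^{(t-1)}$, so the correctness requirement $K_U^{(t)} \subset K_{\GC}^{(t)}$ is preserved.

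Next I would show $\Pi'$ is not {\tt strongly-secure} by exhibiting a winning adversary. The adversary drives the group through a rekeying into the target time $t$ with some honest continuing member $U$, issues $\Ora_{Test}(U,t)$ (so $\acc_U^{(t)} = \TRUE$), then lets the group rekey once more and issues $\Ora_{Corrupt}(V,t+1)$ against any continuing honest member $V \in \Delta^{(t+1)}$. By construction $K_V^{(t+1)}$ contains the retained key $k^{(t)}$, so the adversary recovers the tested group key and guesses the bit with certainty. Conditions~(1)--(4) can all be met (no reveal at time $t$, no corruption at time $t$, any previously corrupt member revoked before $t$), and since the single corruption occurs at $t+1 \neq t$, condition~(5) of {\tt strong-security} holds; hence the advantage is non-negligible. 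This is essentially the attack already illustrated against LKH in Section~\ref{sec:motivation}, now distilled into a minimal scheme.

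The main work — and the step I expect to be the real obstacle — is the reduction showing that $\Pi'$ is still {\tt secure}, i.e.\ that the retained keys confer no advantage against the target key under the {\tt security} restrictions. The key observation is that in the {\tt security} game condition~(5) bars every $\Ora_{Corrupt}$ query at times $\ge t$, so an adversary $\A'$ against $\Pi'$ can only ever see retained keys through corruptions at times $t_1 < t$, and each such retained key is some $k^{(t_1-1)}$ with $t_1-1 < t$. I would build a reduction $\A$ against $\Pi$ that runs $\A'$ and simulates its view perfectly: it forwards every oracle query verbatim, and upon an $\Ora_{Corrupt}(V,t_1)$ against a continuing $V$ it additionally issues $\Ora_{Reveal}(W,t_1-1)$ for some $W \in \Delta^{(t_1-1)}$ (which exists, e.g.\ $W = V$) to obtain $k^{(t_1-1)}$, appending it to the corrupt answer handed back to $\A'$. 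Every query $\A$ makes respects $\Pi$'s restrictions — all reveals and corruptions sit at times strictly below $t$, so condition~(3) (no reveal at $t$) and condition~(5) are untouched, and condition~(4) carries over from $\A'$'s own compliance — so the simulation is exact and $\A$ wins precisely when $\A'$ does. The delicate point to verify is therefore that these auxiliary reveal queries never collide with the target time and never disturb the revocation bookkeeping of condition~(4); granting that, ${\sf Adv}_{\A'}(\kappa) = {\sf Adv}_{\A}(\kappa)$ is negligible, completing the separation.
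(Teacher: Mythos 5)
Your proof is correct, but it takes a genuinely different route from the paper. The paper proves this proposition by pointing to a concrete counterexample: Theorem~\ref{theorem:wgl} shows that LKH is \texttt{secure}, while the attack scenario of Section~\ref{sec:motivation} (corrupt a still-legitimate member after the rekeying event of interest and decrypt the recorded rekeying traffic) shows that LKH is not \texttt{strongly-secure}. You instead give a generic separation: pad an arbitrary \texttt{secure} scheme $\Pi$ so that continuing members (and $\GC$) retain the superseded group key as inert state, then (i) break \texttt{strong-security} trivially by corrupting a continuing member at time $t+1$, and (ii) preserve \texttt{security} by a reduction in which the retained keys are simulated via $\Ora_{Reveal}$ queries at indices $t_1-1<t$ --- legitimate because condition~(3) of Definition~\ref{def:stateful-forward-secrecy} forbids Reveal only at the test index $t$, and condition~(4) constrains only Corrupt queries, so the forwarded queries of your adversary remain admissible. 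What your approach buys is modularity and rigor: the counterexample is engineered so that both halves of the argument are clean, and you never need to analyze LKH itself; what it costs is that it is conditional on the existence of \emph{some} \texttt{secure} scheme (otherwise the claim is vacuous), so it still implicitly leans on Theorem~\ref{theorem:wgl} (or any substitute) for non-vacuousness --- exactly the ingredient the paper invokes directly. Two small points would tighten your write-up: specify that each rekeying replaces (rather than accumulates) the retained key, or else have your reduction issue Reveal queries for every index at which the corrupted member was present; and note that your auxiliary Reveal query with index $t_1-1$ is issued after virtual time has already reached $t_1$, which is fine under the paper's index-based oracle semantics (cf.\ $\Ora_{Test}$, which ``may be queried \ldots at any time during the adversary's execution''), but deserves an explicit remark since the model never states outright that oracles may be queried retroactively.
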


\begin{proof}
The fact that {\tt security} does not imply {\tt strong-security} is implied by Theorem
\ref{theorem:wgl}, which states that LKH is {\tt secure}, and that
LKH is insecure against an active outside attacker (cf. the attack scenario in
Section \ref{sec:motivation}).
\qed
\end{proof}

The above proposition implies that for a stateful group communication scheme, one only needs to
show that it is {\tt strongly-secure}.

\begin{proposition}
\label{proposition:passive-model:backward-equals-to-forward}
A stateful group communication scheme is backward-secure in the passive attack model
iff it is forward-secure in the passive attack model.
\end{proposition}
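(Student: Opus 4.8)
The plan is to prove the two implications separately; within the passive attack model of this proposition they are of markedly unequal difficulty. That backward-security implies forward-security is a one-line inclusion: comparing Definitions~\ref{def:stateful-forward-secrecy-passive} and~\ref{def:stateful-backward-secrecy-passive}, a passive forward adversary is exactly a passive backward adversary whose post-test phase --- step~(3) of Definition~\ref{def:stateful-backward-secrecy-passive} --- is empty, so the two {\sf Succ} events coincide for every such $\A$ and the advantages are equal; hence negligible backward-advantage for all $\A$ yields negligible forward-advantage. I would dispose of this direction immediately.

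The content lies in the converse, that forward-security implies backward-security, where I must show that step~(3) --- letting $\A$ join, leave and record traffic at times $t_2>t$, \emph{after} the challenge at time $t$ is fixed --- confers no advantage in the passive model. I would argue by reduction. Given a passive backward adversary $\A_B$ of advantage $\varepsilon$, I build a passive forward adversary $\A_F$ that runs $\A_B$ internally: $\A_F$ relays each pre-test $\Ora_{Send}$ query and the single $\Ora_{Test}(U,t)$ query of $\A_B$ to its own oracles --- legitimately, since these obey restrictions~(1)--(2) of Definition~\ref{def:stateful-forward-secrecy-passive} ($\acc_U^{(t)}=\TRUE$ for an honest $U$ and $\A\notin\Delta^{(t)}$) --- and passes the challenge back to $\A_B$. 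The only queries $\A_F$ may not relay are those of step~(3); it must therefore \emph{simulate} the evolution of the group for $t_2>t$ on its own, acting as $\GC$ toward $\A_B$ with freshly sampled keys, and finally output $\A_B$'s guess. What makes this plausible is the structure of the model: the hidden bit $b$ enters only the single $\Ora_{Test}$ answer, every other (in particular every post-$t$) answer being a function of the group state alone; and since each {\sf Join}/{\sf Leave} replaces the group key while {\sf Rekey} erases the previous key material, the keys $\A_B$ would obtain by joining at some $t_2>t$ are derived from the time-$t$ state --- which $\A_B$ does not possess, as $\A_B\notin\Delta^{(t)}$ --- together with fresh randomness, and so are not linkable to anything in $\A_B$'s pre-$t$ view.

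The hard part, and where I expect essentially all the work to lie, is turning this intuition into a proof that the simulated post-$t$ tail is indistinguishable from the real one \emph{in the view of} $\A_B$, even though the real tail is correlated with the real $k^{(t)}$ (the very object of the test) while the simulated tail is not. I would close the gap with a hybrid that appeals to the forward-security hypothesis itself: swapping the real post-$t$ evolution for a fresh one can move $\A_B$'s acceptance probability only negligibly, since any non-negligible shift would constitute a distinguisher extracting information about the time-$t$ (and later) keys from pre-$t$ data alone, which forward-security forbids --- the erasure clause of {\sf Rekey} being exactly what decouples the pre-$t$ and post-$t$ halves of the timeline. Granting this, $\A_F$ reproduces $\A_B$'s output distribution up to a negligible term, so ${\sf Adv}_{\A_F}(\kappa)\geq\varepsilon-{\sf negl}(\kappa)$, contradicting forward-security and completing the equivalence. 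I note that this is precisely where the passive model parts company with the active one: there the extra post-$t$ power is \emph{corruption} rather than mere joining, which together with recorded traffic does expose past keys, so the analogous implication fails (cf.\ Proposition~\ref{proposition:active-model:backward-does-not-imply-forward}).
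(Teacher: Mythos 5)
Your first direction is correct and is essentially the paper's own argument (the paper phrases it contrapositively and appends a cosmetic re-join at $t_4$, but the content is the same trivial inclusion of adversary classes). The gap is in your second direction, at precisely the step you flag as carrying all the weight. Your hybrid --- swap the real post-$t$ evolution for a freshly keyed simulated one --- is justified by the claim that a non-negligible shift ``would constitute a distinguisher extracting information about the time-$t$ (and later) keys from pre-$t$ data alone, which forward-security forbids.'' But forward-security (Definition \ref{def:stateful-forward-secrecy-passive}) is far weaker than that: it asserts only that the single key $k^{(t)}$ is indistinguishable from random given a view \emph{truncated at} $t$. It says nothing about the post-$t$ transcript, which is a function of the entire secret state $K_{\GC}^{(t)}$ (auxiliary keys included), and the reduction you gesture at cannot even be executed: to exploit a real-tail-versus-fresh-tail distinguisher, the forward adversary $\A_F$ would have to manufacture the \emph{real} tail, which requires exactly the time-$t$ secrets it does not possess. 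So the appeal to forward-security is a restatement of the hope, not a reduction.

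Worse, the claim your hybrid needs is false for schemes satisfying the paper's model, so no argument can rescue it without extra hypotheses. In LKH itself, a join at time $t+1$ broadcasts $\{\hk_0\}_{\bk_0}$, i.e.\ the new group key encrypted under the challenge-time group key $k^{(t)}$ (Figure \ref{fig:wgl}). Instantiate the CPA-secure cipher so that decryption under a wrong key is detectable (e.g.\ encrypt $m\|0^{\kappa}$ and reject if the redundancy is absent); this is still covered by Theorem \ref{theorem:wgl}, so the scheme is forward-secure in the passive model. A passive backward adversary now wins with advantage essentially $1$: after the $\Ora_{Test}$ query it uses $\Ora_{Send}$ to make an honest user join at $t+1$, observes $\{k^{(t+1)}\}_{k^{(t)}}$, and checks whether the challenge key decrypts it validly. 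The real tail is thus correlated with the challenge in a way no freshly keyed simulation reproduces. For comparison, the paper's own proof of this direction takes a different and much shorter route: it simply re-reads the backward adversary's success ``with respect to the leaving event at time $t_2$'' as a forward attack --- that is, it \emph{asserts} that the post-$t$ queries contributed nothing, which is exactly the point in question and exactly what the example above contradicts. So you have correctly isolated the crux that the paper glosses over, but your hybrid does not close it; closing it requires scheme-specific assumptions (e.g.\ that everything a passive adversary can observe or legitimately receive after time $t$ is computationally independent of the time-$t$ key state --- the kind of key-independence condition the paper imposes on its compilers in Sections \ref{sec:compiler} and \ref{sec:stateless:compiler}), not forward-security alone.
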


\begin{proof}
First we show that a group communication scheme that is not forward-secure in the passive attack
model is also not backward-secure in the passive attack model.
 Suppose $\A$ first joins the group at time $t_1$ and then leaves the group
at time $t_2$ where $t_1 < t_2$. Since the scheme is not forward-secure in the passive attack
model, $\A$ can distinguish $k^{(t_3)}$ from a random string for
some $t_3 > t_2$ with a non-negligible probability.
Now suppose $\A$ re-joins the group at time $t_4$ for some $t_4 >t_3$.
Then, with respect to this re-joining event, $\A$
can distinguish $k^{(t_3)}$ from a random string with a non-negligible probability.
Since $\A$ did not make any $\OO_{Reveal}$ or $\OO_{corrupt}$ query,
the scheme is not backward-secure in the passive attack model.

Now we show that a group communication scheme that is
not backward-secure in the passive attack model is also not
forward-secure in the passive attack model.
Suppose $\A$ first joins the group at time $t_1$, leaves the group
at time $t_2$, and re-joins the group at time $t_3$, where
$t_1 < t_2 < t_3 $. Since the scheme is not backward-secure in the passive attack model,
$\A$ can distinguish $k^{(t)}$ from a random string for some $t_2 \leq t < t_3$
with a non-negligible probability.
This also means that, with respect to the leaving event at time $t_2$, $\A$
can distinguish $k^{(t)}$ from a random string for some $t \geq t_2$ with a
non-negligible probability.
Since $\A$ does not make any $\OO_{Reveal}$ or $\OO_{corrupt}$ query,
the scheme is not forward-secure in the passive attack model.
\qed
\end{proof}

We do not know whether forward-security in the passive attack mode
also implies forward-security in the active outsider attack model.
The relationship may seem trivial at a first glance, since all the
corrupt members are revoked before the ``challenge" session, and
the adversary is not allowed to corrupt any member after the
``challenge" session. Although it can indeed be shown that the
implication holds, provided that the adversary is {\em static}
(meaning that the adversary decides which principals in $\ID$ it
will corrupt at system initialization), in the more interesting
case that the adversary is {\em adaptive}, we do not know how to
prove it.

\begin{theorem}
\label{theorem:counter-example}
There exists a group communication scheme that is backward-secure in the passive attack model
but not {\tt strongly-secure} (i.e., backward-secure in the active outsider attack model).
\end{theorem}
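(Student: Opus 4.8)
The plan is to exhibit LKH itself as the separating witness, assembling results already available rather than inventing a new scheme. For the positive half of the statement I would argue that LKH is backward-secure in the passive attack model through a short implication chain: LKH is {\tt secure} (i.e., forward-secure in the active outsider attack model) by Theorem~\ref{theorem:wgl}; the trivial implication noted after Definition~\ref{def:stateful-backward-secrecy-passive} gives that {\tt security} implies forward-security in the passive attack model; and Proposition~\ref{proposition:passive-model:backward-equals-to-forward} upgrades this to backward-security in the passive attack model. So LKH satisfies the weaker notion.

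For the negative half I would formalize the informal attack of Section~\ref{sec:motivation} as a concrete {\tt strong-security} adversary and check it against the game constraints of Definition~\ref{def:stateful-backward-secrecy}. Concretely, take the challenge time $t$ to be the instant at which $k_{1-9}$ is the group key (just after $u_9$ joins), so that $u_9 \in \Delta^{(t)}$ with $\acc_{u_9}^{(t)}=\TRUE$. The adversary records all rekeying traffic, including the join message $\{k_{1-9},k_{789}\}_{k_9}$ sent to $u_9$, and then issues a single $\Ora_{Corrupt}(u_9,t_3)$ query at a later time $t_3>t$ (after $u_8$ and $u_6$ have left, with $u_9$ still legitimate). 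Since the leaving events change only $k_{789}$, $k_{456}$ and the group key but leave $u_9$'s individual key $k_9$ intact, the corrupted set $K_{u_9}^{(t_3)}$ contains $k_9$; decrypting the recorded join message recovers $k^{(t)}=k_{1-9}$, and the adversary answers $\Ora_{Test}(u_9,t)$ correctly with overwhelming probability.

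The step I expect to be the crux is verifying that this adversary is admissible for {\tt strong-security} yet inadmissible for {\tt security}, which is exactly what makes the separation (rather than a contradiction with Theorem~\ref{theorem:wgl}) go through. The lone corruption occurs at $t_3>t$, so condition (5) of Definition~\ref{def:stateful-backward-secrecy} (no corruption at $t_3=t$) holds; there is no corruption at any $t_1<t$, so condition (4) is vacuous; and no $\Ora_{Reveal}(\cdot,t)$ is issued, so condition (3) holds. Hence the adversary witnesses non-negligible advantage against {\tt strong-security}. By contrast, condition (5) of Definition~\ref{def:stateful-forward-secrecy} forbids any corruption at $t_3\geq t$, so the very same attacker is not a legitimate {\tt security}-adversary, consistently with LKH being {\tt secure}. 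Combining the two halves yields a scheme that is backward-secure in the passive attack model but not {\tt strongly-secure}.

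If one prefers a witness whose analysis does not hinge on the specifics of LKH, the same separation follows generically: starting from any backward-secure-passive scheme, append to each member's stored key set $K_U^{(t)}$ the list of all past group keys $k^{(0)},\ldots,k^{(t)}$, never transmitting this list on the channel. A passive re-joining adversary never sees it and so remains defeated, whereas a single $\Ora_{Corrupt}$ query on a not-yet-revoked member immediately exposes a past group key, breaking {\tt strong-security}.
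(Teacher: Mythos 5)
Your main argument is correct and is essentially the paper's own proof: the paper likewise takes LKH as the witness, derives backward-security in the passive model via Theorem~\ref{theorem:wgl}, the trivial implication, and Proposition~\ref{proposition:passive-model:backward-equals-to-forward}, and then simply cites the Section~\ref{sec:motivation} attack for the negative half. Where you go beyond the paper is in rigor: the paper leaves the negative half as a pointer to the informal attack scenario, whereas you instantiate it as an explicit adversary and verify admissibility under Definition~\ref{def:stateful-backward-secrecy} --- in particular that the single corruption at $t_3>t$ is permitted there but excluded by condition (5) of Definition~\ref{def:stateful-forward-secrecy}, which is exactly why the separation does not contradict Theorem~\ref{theorem:wgl}. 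This admissibility check is implicit but never spelled out in the paper, and making it explicit is the right thing to do. Your closing generic construction (padding each member's stored state with all past group keys, never transmitted on the channel) is a genuinely different route for the negative half: it separates the two notions starting from \emph{any} scheme that is backward-secure in the passive model, not just LKH, at the modest cost of checking that the padded scheme still fits the model --- e.g., the correctness requirement $K_U^{(t)} \subset K_{\GC}^{(t)}$ forces the group controller to retain the past group keys as well, and one must still invoke the LKH-based positive half (or some other instantiation) to know a starting scheme exists. Either route yields the theorem.
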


\begin{proof}
Theorem \ref{theorem:wgl} shows that LKH is {\tt secure}
(i.e., forward-secure in the active outsider attack model), which trivially means that
it is also forward-secure in the passive attack model.
Then, Proposition \ref{proposition:passive-model:backward-equals-to-forward}
shows that it is also backward-secure in the passive attack model.

On the other hand, the attack scenario shown in Section
\ref{sec:motivation} states that LKH is not
backward-secure in the active outsider attack model. \qed
\end{proof}

\section{A Compiler for Stateful Group Communication Schemes}
\label{sec:compiler}

Suppose $\{f_k \}$ is a secure pseudorandom function family.
Now we present a compiler that transforms a {\tt secure} group communication scheme,
$\SGC=(\setup,\join,\leave,\rekey)$,
into a {\tt strongly-secure} one, $\SSGC=(\setup^*,\join^*,\leave^*,\rekey^*)$.
The compiler applies to the subclass of stateful group communication schemes
where the different keys belong to $K_{\GC}^{(t)}$
are computationally independent of each other, where $t=0,1,2,\ldots$.
In what follows ``a key $k$ needs to be changed" means that it should be substituted with
a random key that is information-theoretically independent of $k$.

The key idea behind the compiler is to update the keys, which are possibly
used to encrypt the new keys that need to be securely sent to the legitimate
users, at each join, leave, or rekey event via an appropriate family of
pseudorandom functions. As a result, compromise of a current key does not
allow the adversary to recover the corresponding past keys.

\begin{description}
\item[${\sf Setup}^*$:] This is the same as $\SGC.\setup$.

\item[${\sf Join}^*$:] This algorithm is executed by $\GC$ at time, say, $t$.
Let $K$ be the set of keys that need to be
changed (including the group key $k^{(t-1)}$),
$K^*$ be the set of common key(s) shared between the $\GC$ and the joining user(s),
and $K^{**}$ be the new keys (including the new
group key $k^{(t)}$) that are used to replace the keys in $K$.

\begin{enumerate}
\item Execute $\SGC.\join$ except for the following:
(1) for every $k_i \in (K_{\GC}^{(t-1)} \setminus \{k^{(t-1)}\}) \cup K^*$,
let $f_{k_i}(0)$ play the role of $k_i$ in $\SGC.\join$;
(2) for every $k_i \in K^{**} \setminus \{k^{(t)}\}$ that is used as an encryption key in $\SGC.\join$,
let $f_{k_i}(0)$ play the role of $k_i$.

\item Every individual key
$k_i \in (K_{\GC}^{(t-1)} \setminus K) \cup K^* \cup (K^{**} \setminus \{k^{(t)}\})$
is replaced by $f_{k_i}(1)$.
\end{enumerate}

\item[${\sf Leave}^*$:] This algorithm is executed by $\GC$ at time, say, $t$.
Let $K$ be the set of keys that need to be
changed (including the group key $k^{(t-1)}$) or eliminated,
and $K^{**}$ be the new keys (including the new
group key $k^{(t)}$) that are used to replace (possibly a subset of) the keys in $K$.

\begin{enumerate}
\item Execute $\SGC.\leave$ except for the following:
(1) for every $k_i \in K_{\GC}^{(t-1)} \setminus K$,
let $f_{k_i}(0)$ play the role of $k_i$ in $\SGC.\leave$;
(2) for every $k_i \in K^{**} \setminus \{k^{(t)}\}$ that is used
as an encryption key in $\SGC.\leave$,
let $f_{k_i}(0)$ play the role of $k_i$.

\item Every individual key
$k_i \in (K_{\GC}^{(t-1)} \setminus K) \cup (K^{**} \setminus \{k^{(t)}\})$
is replaced by $f_{k_i}(1)$.
\end{enumerate}

\item[${\sf Rekey}^*$:] There are two cases.
\begin{itemize}
\item The rekeying event is incurred by a {\sf Leave} event at time $t$.
In this case, every honest leaving user should erase all the secrets as in $\SGC.\rekey$,
and every remaining user, $V \in \Delta^{(t)}$, executes the following.
Denote by $K^{'}_V \subseteq K_V^{(t-1)}$ the subset of keys
that need to be changed to a set of new keys $K^{''}_V$.
(We notice that both $K^{'}_V$ and $K^{''}_V$ can be derived by $V$
after receiving the rekeying message
and that $k^{(t)} \in K^{''}_V$.)
First, $V$ executes $\SGC.\rekey$ except for
letting $f_{k_i}(0)$ play the role of $k_i$
under the circumstance that $k_i \in K_{V}^{(t-1)} \setminus \{k^{(t-1)}\}$
or $k_i \in K^{''}_V \setminus \{k^{(t)}\}$ is used as an encryption key,
and updates every
$k_i \in (K_V^{(t-1)} \setminus K^{'}_V) \cup (K^{''}_V \setminus \{k^{(t)}\})$
as $f_{k_i}(1)$.
Second, $V$ erases the outdated keys (except $K_V^{(t)}$) as in $\SGC.\rekey$.

\item The rekeying event is incurred by a {\sf Join} event at time $t$.
We notice that user $V \in \Delta^{(t)}$ holds a set of keys $K_V^{(t-1)}$
(in the case of $V$ being a joining user,
$K_V^{(t-1)}$ consists of the only common key between $\GC$ and $V$),
of which a subset $K^{'}_V$ of keys (which may be empty)
are to be changed to a set of new keys $K^{''}_V$.
(We notice that both $K^{'}_V$ and $K^{''}_V$ can be derived by $V$
after receiving the rekeying message
and that $k^{(t)} \in K^{''}_V$.)
First, $V$ executes $\SGC.\rekey$ except for
letting $f_{k_i}(0)$ play the role of $k_i$
under the circumstance that $k_i \in K_{V}^{(t-1)} \setminus \{k^{t-1}\}$ or
$k_i \in K^{''}_V \setminus \{k^{(t)}\}$ is used as an encryption key, and
updates every $k_i \in (K_V^{(t-1)} \setminus K^{'}_V) \cup (K^{''}_V \setminus \{k^{(t)}\})$
as $f_{k_i}(1)$.
Second, $V \in \Delta^{(t)}$ erases the outdated keys
(other than $K_V^{(t)}$) as in $\SGC.\rekey$.
\end{itemize}
\end{description}

\subsection{Analysis}

First we analyze the complexity of $\SSGC$.
\begin{itemize}
\item It does not introduce any extra communication complexity over $\SGC$;
this is important in many applications such as MANETs
and sensor networks. (In Section \ref{sec:improvement} we further reduce the communication
complexity.)
\item It does not introduce any extra storage complexity over $\SGC$, provided that
the temporary storage for the keys such as $f_{k_i}(0)$ is
insignificant. This is at least true for most applications
including MANETs and sensor networks.

\item The only extra complexity of $\SSGC$ over $\SGC$ is the
evaluation of the pseudorandom functions. Specifically, the server
needs to conduct $O(\max\{|K_{\GC}^{(t-1)}|, |K_{\GC}^{(t)}|\})$
pseudorandom function evaluation operations;
a user $V$ needs to conduct $O(\max\{|K_{U}^{(t-1)}|, |K_{U}^{(t)}|\})$
pseudorandom function operations. We notice that
typically $|K_{U}^{(t)}| = O(\log(|K_{\GC}^{(t)}|))$ (e.g., \cite{WGL00}). This should be
insignificant for most applications including MANETs and sensor
networks.
\end{itemize}

Now we prove that $\SSGC$ is indeed {\tt strongly-secure}.
The intuition that $\SSGC$ is {\tt strongly-secure} (and thus defeats
the attacks presented in the Introduction) is due to the following fact:
compromise of all of the current keys held by a user does not necessarily allow
the adversary to recover any of the past keys, which may have been used
to secure the transimission of other keys.

\begin{theorem}
\label{theorem:main1}
Assume $\{f_k\}$ is a secure pseudorandom function family (as
specified in Section \ref{sec:tools}).
If $\SGC$ is {\tt secure}, then $\SSGC$ is {\tt strongly-secure}.
\end{theorem}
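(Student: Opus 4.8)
The plan is to prove the theorem by a reduction that isolates the single feature distinguishing {\tt strong-security} (Definition~\ref{def:stateful-backward-secrecy}) from {\tt security} (Definition~\ref{def:stateful-forward-secrecy}): their conditions coincide except in item~(5), where a {\tt strong-security} adversary is additionally permitted to issue $\Ora_{Corrupt}(V,t_3)$ queries for $t_3>t$. Everything else in the two games is identical. Hence it suffices to show that, against $\SSGC$, these ``after-the-challenge'' corruptions convey no useful information about the challenge group key $k^{(t)}$; once this is established, any adversary attacking the {\tt strong-security} of $\SSGC$ is effectively an adversary attacking the {\tt security} of $\SSGC$, which is in turn inherited from the assumed {\tt security} of $\SGC$.

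The key structural observation driving the argument is how the compiler uses the two PRF ``tags''. At the rekeying event occurring at time $t$, each key $k_i$ that protects (directly or indirectly) the transmission of $k^{(t)}$ is used only through its value $f_{k_i}(0)$, whereas the value actually \emph{stored} by the members afterwards --- and hence the only value an adversary can extract from a member at any time $t_3>t$ --- is $f_{k_i}(1)$, together with the further iterates $f_{f_{k_i}(1)}(1),\ldots$ produced by later rekeying events. By the pseudorandomness of $\{f_k\}$, the string $f_{k_i}(0)$ is computationally independent of the whole forward chain $f_{k_i}(1), f_{f_{k_i}(1)}(1),\ldots$; in particular no probabilistic polynomial-time algorithm can recover an encryption key $f_{k_i}(0)$ used at time $t$ from the keys obtainable at any time $t_3>t$. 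This is exactly the mechanism that defeats the attack of Section~\ref{sec:motivation}.

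To turn this observation into a proof I would proceed by a sequence of game hops. Starting from the real {\tt strong-security} experiment for $\SSGC$, I replace, one evolution step at a time, each PRF instance $f_{k_i}$ lying on a chain that originates from a key active at or before time $t$ by a truly random function; each hop is indistinguishable by the security of $\{f_k\}$, and since the number of keys and rekeying events is polynomial in $\kappa$ the accumulated loss is negligible. In the resulting ``ideal'' experiment the encryption values $f_{k_i}(0)$ used up to time $t$ and the stored values returned by any $\Ora_{Corrupt}(V,t_3)$ query with $t_3>t$ become independent, uniformly random strings, so the after-the-challenge corruption answers can be simulated with fresh random keys and carry no information about $k^{(t)}$. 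The compiler's applicability hypothesis --- that the distinct keys in each $K_{\GC}^{(t)}$ are computationally independent, in the sense of the computational-independence notion of Section~\ref{sec:tools} --- lets me treat the remaining key material as independent random values as well. At this point the residual adversary respects condition~(5) of {\tt security}, so its advantage in the ideal experiment is bounded by that of a {\tt security} adversary; since the per-event encryptions performed up to time $t$ use freshly random keys, precisely the regime governed by the {\tt security} of $\SGC$, the assumption that $\SGC$ is {\tt secure} bounds this advantage by a negligible quantity.

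The main obstacle I anticipate is the interplay of \emph{adaptivity} and the \emph{evolving key-tree structure}. Because $\A$ chooses which members to corrupt, and at which times $t_3>t$, only after observing its view, the set of PRF chains that must be randomized cannot be fixed in advance; moreover $\join^{*}$ and $\leave^{*}$ repeatedly restructure the tree, so ``the same'' key slot changes position and ownership over time and its forward chain may branch. Making the hybrid sequence well-defined in this setting --- so that every key an adaptive adversary can reach after time $t$ is provably located on a randomized forward chain, while every encryption key used up to time $t$ remains tied to an $f_{k_i}(0)$ value on the unrandomized side --- is the delicate part, and it is precisely where the careful bookkeeping of the compiler (the clean separation of the $f_{k_i}(0)$ and $f_{k_i}(1)$ roles at every event) does the real work.
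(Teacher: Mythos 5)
Your proposal is correct and takes essentially the same route as the paper's proof: the paper likewise exploits the $f_{k_i}(0)$/$f_{k_i}(1)$ tag separation, first defining a ``mental'' variant of $\SSGC$ in which the $f_{k_i}(0)$ encryption keys are replaced by fresh random keys (proved {\tt strongly-secure} by a reduction that embeds the $\SGC$ challenge at a guessed rekeying event), and then showing $\SSGC$ indistinguishable from that variant by a hybrid argument over the polynomially many key chains $f_{k_i}(1), f_{f_{k_i}(1)}(1),\ldots$ --- exactly your chain-randomization step, just performed in the opposite order. Both treatments remain at sketch level on the adaptivity/bookkeeping issue you flag, so your proposal matches the paper's argument in substance and in its limitations.
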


\begin{proof} (sketch) We show that if $\SSGC$ is not {\tt strongly-secure}, then
$\SGC$ is not {\tt secure}. Note that the key difference between the two security notions
is whether the adversary is allowed to corrupt a legitimate
user after the rekeying event of interest.
Note also that after a rekeying event in $\SSGC$ all the new
keys are either information-theoretically
or computationally independent of each other.

First, consider a mental scheme that is the same as in $\SSGC$ except that
the keys of the form $f_{k_i}(0)$
are always substituted with freshly and independently
chosen random keys, where $k_i$ is not held by any corrupt user.
We claim that this mental scheme achieves {\tt strong-security};
otherwise, there is an efficient
algorithm to break the {\tt security} of $\SGC$.
To see this, we construct a simulator that has access to a challenge $\SGC$ environment.
Since the number of rekeying events in $\SSGC$
is polynomially-bounded, the simulator has an inverse
polynomial probability in successfully guessing
the rekeying event of interest -- the event
corresponding to the $\OO_{Test}(\cdot,\cdot)$ query.
\begin{enumerate}
\item The simulator interacts with the adversary as in $\SSGC$; this can be done because
the simulator has complete control over the keys utilized in the $\SSGC$.
We notice that the simulator can answer any queries,
including $\OO_{reveal}$ and $\OO_{corrupt}$.

\item When the simulated $\SSGC$ execution reaches the point of interest,
the simulator asks the challenge $\SGC$ environment to establish an
instance of $\SGC$ with the same
set of users. The establishment of the instance is
based on the rekeying event incurred by the adversary in $\SSGC$,
so that the legitimate users hold the corresponding keys as in $\SGC.{\sf Setup}$.
This substitution can get through because the definition of {\tt strong-security}
ensures that the adversary in $\SSGC$ does not corrupt any
legitimate user during the rekeying event.

\item At the next rekeying event, the simulator can continue its execution of
the $\SSGC$ because it can utilize independent secrets that are freshly chosen by itself.
We notice that the simulator can answer any queries,
including $\OO_{reveal}$ and $\OO_{corrupt}$, as it can
simulate the $\SSGC$ environment in any future rekeying events.
\end{enumerate}

Second, it is clear that the difference between $\SSGC$ and the
aforementioned mental scheme is that the keys utilized in the rekeying events are
either series of keys in the forms of $f_{k_i}(1)$ where
the $k_i$'s are secret from the adversary,
or freshly and independently chosen at random.
We claim that the two cases are indistinguishable
as long as the pseudorandom function family is secure.
To see this, we notice that no adversary can, with a non-negligible probability,
distinguish a single key-chain of a fixed key identity,
namely $f_{k_i}(1)$, $f_{f_{k_i}(1)}(1)$, $f_{f_{f_{k_i}(1)}(1)}(1), \ldots$ where
$k_i$ is secret from the adversary, from a sequence of random secrets.
Otherwise, we can construct an algorithm to distinguish a pseudorandom function from a random one
with a non-negligible probability (because the number of rekeying events is polynomially-bounded).
Conditioned on the fact that the number of keys is polynomially-bounded,
we conclude that the keys derived from pseudorandom functions are indistinguishable from
the keys that are freshly and independently chosen;
otherwise, a standard hybrid argument shows that there
exists an algorithm that is able to distinguish a pseudorandom function from a random one
(because there are at most a polynomial number of key chains). \qed
\end{proof}

\subsection{Performance Optimization}
\label{sec:improvement}

In this section we show how to reduce the communication complexity in the $\SSGC$;
this might be very useful in applications such as MANETs and sensor networks.
Suppose a {\sf Join} event occurs at time $t$. The key observation includes:
\begin{enumerate}
\item In $\join^*$ of $\SSGC$ we could simply let the server
sends the updated keys to the joining user $U$.
We notice that, before $U$ receiving the rekeying message from the server,
$K_{U}^{(t-1)}$ consists of a single key, denoted by $k^*$,
that is also known to the server. After sending the rekeying message,
the server update $K_{\GC}^{(t-1)}=\{k_i\}$ to $K_{\GC}^{(t)}=\{f_{k_i}(1)\} \cup \{f_{k^*}(1)\}$.

\item When the joining user $U$ executes $\rekey^*$
corresponding to the $\join^*$ (i.e., after receiving
the rekeying message),
it lets $f_{k^*}(0)$ plays the role of $k^*$.
Then, $U$ updates $k^*$ to $f_{k^*}(1)$ while keeping intact
the other keys received from the server.

\item When an existing user $V \in \Delta^{(t-1)}$
executes $\rekey^*$ corresponding to the $\join^*$, it
simply updates every $k_i \in K_V^{(t-1)}$ (including the group key) to $f_{k_i}(1)$.

\item The encryption of group communications is
based on new group key $k^{(t)}=f_{k^{(t-1)}}(1)$.
\end{enumerate}

We notice that the idea of substituting $k_i$ via a certain function was
pointed out in \cite{VersaKeyJSAC99,CanettiEurocrypt99} with respect to the specific scheme
of \cite{WGL00}. Here we show that it
can actually be extended to accommodate the class of group communication
schemes discussed in this paper. This justifies why we treat it as a possible
feature of the compiler, which we call the {\em optimized compiler}.

\begin{theorem}
\label{theorem:main2} Assume $\{f_k\}$ is a secure pseudorandom
function family, and $\SGC$ is {\tt secure}. If $\SGC$ does not
adopt the afore-discussed performance optimization (otherwise, the
optimized compiler does not gain anything over the original
compiler), then the scheme output by the optimized compiler is
also {\tt strongly-secure}.
\end{theorem}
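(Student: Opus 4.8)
The plan is to reuse, essentially verbatim, the two-step architecture of the proof of Theorem \ref{theorem:main1}, adjusting only where the optimized Join deviates from the generic compiler. As before I would argue the contrapositive: assuming a probabilistic polynomial-time $\A$ breaks the {\tt strong-security} of the optimized $\SSGC$, I build a simulator that breaks the {\tt security} of $\SGC$. The simulator first guesses the rekeying event that $\A$ will target with its $\OO_{Test}$ query (correct with inverse-polynomial probability, since the number of events is polynomially bounded), runs the optimized $\SSGC$ for $\A$ up to that point using keys it fully controls, and at the target event embeds a fresh $\SGC$ instance on the current user set as in Theorem \ref{theorem:main1}. The only substantive change is that the optimized Join transmits nothing to the existing members—each $V\in\Delta^{(t-1)}$ merely updating every $k_i$ locally to $f_{k_i}(1)$—so the simulator's view at a Join is strictly smaller than under the generic compiler; the embedding therefore only becomes easier, since there are no suppressed ciphertexts to reconcile. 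The hypothesis that $\SGC$ does not itself adopt the optimization is what makes this a genuine modification (so that $\SGC.\join$ really does send full rekeying messages).

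Concretely, I would first pass to the same mental scheme, in which every evaluation $f_{k_i}(0)$ with $k_i$ not held by any corrupt user is replaced by an independently chosen random key. I claim this mental scheme is {\tt strongly-secure}, for otherwise the simulator above breaks $\SGC$: the restrictions of {\tt strong-security} guarantee that at the target event no legitimate member is corrupt, so the $\SGC$ instance can be planted with the joining and remaining users holding exactly the keys that $\SGC.\setup$ would produce, and the challenge $\SGC$ group key can be exported as the optimized $\SSGC$ group key $k^{(t)}$. Because corruption is allowed after the target event, the simulator must still answer later $\OO_{Corrupt}$ and $\OO_{Reveal}$ queries; it does so by continuing the optimized $\SSGC$ with its own freshly sampled secrets from the next event onward, exactly as before.

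The second step is the pseudorandomness hybrid. The optimized $\SSGC$ differs from the mental scheme only in that the keys carried across events are single-identity chains $f_{k_i}(1), f_{f_{k_i}(1)}(1), \ldots$ rather than fresh randomness. In the optimized Join these chains now also subsume the locally recomputed keys of the existing members and, crucially, the evolved group key $k^{(t)} = f_{k^{(t-1)}}(1)$. Since each such chain is still a chain rooted at a fixed secret key identity, the same reasoning applies: a distinguisher between a chain and fresh randomness yields a PRF distinguisher, and a standard hybrid over the polynomially-many chains upgrades this to indistinguishability of the entire key material. Hence the optimized $\SSGC$ is indistinguishable from the (strongly-secure) mental scheme, giving {\tt strong-security}.

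The step I expect to be the main obstacle is verifying that the optimized Join genuinely preserves backward-security—that a just-admitted member cannot recover any past group key even though it now receives the evolved keys directly from $\GC$ over its individual channel $k^*$. The point to nail down is that every new upper-level key is the \emph{forward} image $f_{(\cdot)}(1)$ of the corresponding old key, while the joining user never held any of those old keys, so recovering a past key would require inverting $f$, contradicting pseudorandomness. I would also need to discharge the two consistency conditions the optimization quietly relies on: that all existing members sharing a tree node derive the \emph{same} $f_{k_i}(1)$ because they share the old $k_i$ (so {\tt correctness} survives without any transmission), and that the simulator correctly handles the asymmetry between the joining user, who is handed the new keys, and the existing users, who regenerate them, since only the former appears as an encrypted recipient in the reduction's $\SGC$ instance.
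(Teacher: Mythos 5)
Your proposal is correct and follows exactly the route the paper intends: the paper omits this proof, stating only that it is ``similar to the proof of Theorem \ref{theorem:main1}'', and your argument is precisely that adaptation --- the same two-step structure (mental scheme with fresh random keys plus a PRF-chain hybrid), with the right adjustments for the optimized {\sf Join} (no ciphertexts to existing members, the group key evolved as part of a key chain, and the correctness/consistency of locally recomputed keys).
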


The proof is similar to the proof of Theorem \ref{theorem:main1}, and thus omitted.

\section{A Concrete {\tt Strongly-Secure} Stateful Group Communication Scheme}
\label{sec:concrete}

In the last section we presented a compiler that can transform a certain {\tt secure} stateful
group communication scheme
into a {\tt strongly-secure} one. In this section we present a concrete {\tt strongly-secure}
stateful group communication scheme, which is obtained by applying the compiler to
LKH \cite{WGL00} that is shown to be {\tt secure} in Section \ref{sec:wgl-security}.
First we briefly review LKH.

\subsection{The Model of LKH}
\label{sec:a-model}

The model of LKH is best known as a {\em key tree}, which outperforms the others
(e.g., star key graph, or general key graph which actually leads to a certain NP-hard problem
as we always need to minimize the communication complexity).
A key tree $T$ can be seen as a special class of directed acyclic
graph with two types of nodes: {\em $u$-nodes}
representing users and {\em $k$-nodes} representing keys. Each $u$-node is a leaf
that has one outgoing edge but no incoming edge,
and each $k$-node is an inner node that has one or more incoming edges.
Moreover, there is a $k$-node (i.e., the root) that has incoming edges but no outgoing edge.
In other words, the edges go from leaves towards the root.

Let $U$ be a finite and nonempty set of users and $K$ be a finite and nonempty set of keys.
We are interested in a relation, $R \subseteq U \times K$, that can be
specified by a key tree $T$ as follows:
\begin{itemize}
\item There is a one-to-one correspondence between $U$ and the set of $u$-nodes in $T$.
\item There is a one-to-one correspondence between $K$ and the set of $k$-nodes in $T$.
\item $(u,k) \in R$ if and only if there is a directed path in $T$
from the $u$-node that corresponds to
a user $u \in U$ to the $k$-node that corresponds to a key $k \in K$.
\end{itemize}
This means that the group key is at the root of the tree,
which is shared by all the users in $U$.
Since a key tree can be specified by two parameters --
the height $h$ of the tree is the length (in number of edges) of the
longest directed path in the tree,
and the degree $d$ of the tree is the maximum number of incoming edges
of a node in the tree --
each user in $U$ has at most $h$ keys.

In order to clarify the presentation, we define two functions,
$\keyset: U \to K$ and $\userset: K \to U$, as follows:
\begin{eqnarray*}
\keyset(u) & = & \{k | (u,k) \in R\}, \\
\userset(k) & = & \{u | (u,k) \in R\}.
\end{eqnarray*}
Intuitively, $\keyset(u)$ is the set of keys held by user $u \in
U$, and $\userset(k)$ is the set of users that hold key $k \in K$.
Moreover, it is natural to generalize the definitions of
$\keyset(u \in U)$ to $\keyset(U' \subseteq U) =
\bigcup_{ u \in U'} \keyset(u)$, and of $\userset(k \in K)$ to
$\userset(K' \subseteq K)=\bigcup_{ k\in K'}\userset(k)$.

\subsection{A {\tt Strongly-Secure} Stateful Group Communication Scheme}
\label{sec:new-scheme}

The new scheme is obtained by applying the compiler described in Section \ref{sec:compiler}
to LKH based on the so-called {\em group-oriented}
rekeying strategy, which is reviewed in Appendix \ref{appendix:wgl} for completeness.
(LKH can be based on the less efficient {\em key-oriented} and
{\em user-oriented} strategies \cite{WGL00}. Nevertheless, it should be straightforward to
adapt our scheme to these rekeying strategies.) The scheme
consists of four protocols, namely $\SSGC=(\setup^*,\join^*,\leave^*,\rekey^*)$.

\smallskip

\noindent{\bf $\setup^*$:} The key server generates a key $k_i$ for each $k$-node. After the
initialization, each user (corresponding to a $u$-node) holds the
keys corresponding to the path from its parent
$k$-node to the root.

For example, if the initial system configuration is like in Figure \ref{fig:attack}.(a),
then user $u_5$ holds keys, $k_5, k_{456},k_{1-8}$, where $k_{1-8}$ is the group key.

\smallskip

\noindent{\bf $\join^*$:} After granting a join request from user
$u$, the key server $s$ creates a new $u$-node for user $u$ and a
new $k$-node for its individual key $k_u$. Then,
server $s$ finds an existing $k$-node (called the {\em joining
point} for this join request) in the key tree and attaches the
$k$-node $k_u$ to the joining point as its child. As a
consequence, the keys corresponding to the path -- starting at the
joining point and ending at the root -- need to be changed. The
algorithm is specified in Figure \ref{fig:join}, whose basic idea
can be summarized as follows:
\begin{enumerate}
\item For each $k$-node $x$ whose key needs to be changed, say, from $\bk_i$ to
freshly chosen $\hk_i$, the server constructs
two rekeying messages. The first rekeying message is the encryption of
new key $\hk_i$ with $f_{\bk_i}(0)$, where $\bk_i$ is a non-root key that needs to be changed, and is sent to
$\userset(\bk_i)$, namely the set of users that share $\bk_i$.
The second rekeying message contains the encryption of the
new key $\hk_i$ with the individual key
of the joining user, and is sent to the joining user. Moreover,
these rekeying messages are appropriately grouped together.
\item Any other key $\bk_j$ that needs not to be changed is replaced by $f_{\bk_j}(1)$.
\end{enumerate}

\begin{figure*}[ht]
\begin{center}
\fbox{
\begin{minipage}{41pc}
\begin{tabbing}
123\=123\=123\=\kill
Join protocol for group-oriented rekeying: \ \ \ \ \ \ \ \ // suppose user $u$ joins the group \\
\> server $s$ generates a new key $k_u$ for user $u$ \\
\> server $s$ finds a joining point $x_{j}$ \\
\> server $s$ attaches $k_u$ to $x_j$ \\
\> let $x_0$ be the root \\
\> denote by $x_{i-1}$ the parent of $x_i$ for $1 \leq i \leq j$ \\
\> $\bk_{j+1} \leftarrow k_u$ \\
\> let $\bk_0,\bk_1, \ldots,\bk_j$ be the current keys of $x_0,\ldots,x_j$, respectively \\
\> server $s$ generates fresh keys $\hk_0,\hk_1, \ldots, \hk_j$ \ \ \ \ //
new keys of $x_0,\ldots,x_j$ \\
\> $s \to \userset(\bk_0): \{\hk_0\}_{{\bk_0}}, \{\hk_1\}_{f_{\bk_1}(0)}, \ldots,\{\hk_j\}_{f_{\bk_j}(0)}$ \\
\> $s \to \{u\}: \{\hk_0, \hk_1, \ldots, \hk_j\}_{f_{\bk_{j+1}}(0)}$ \\
\> {\tt FOR} all $\bk \in (\keyset(\userset(\bk_0)) \setminus \{\bk_0, \bk_1, \ldots,\bk_j\})
\cup \{\bk_{j+1} \}\cup \{\hk_{1}, \ldots, \hk_{j} \}$ \\
\>\> $\bk \leftarrow f_{\bk}(1)$
\end{tabbing}
\end{minipage}
}
\end{center}
\caption{Join-incurred group-oriented rekeying}
\label{fig:join}
\end{figure*}

For example, if $u_9$ joins the group configured as in Figure \ref{fig:attack}.(a), then
$u_9$ is granted to join at joining point of $k$-node $k_{78}$.
Then, the group key is changed from $k_{1-8}$
to $k_{1-9}$, and $k_{78}$ is replaced with a new $k_{789}$.
The rekeying messages sent to the users are:
\begin{eqnarray*}
\begin{array}{lll}
s \to \{u_1, \ldots, u_8\} & : & \{k_{1-9}\}_{{k_{1-8}}}, \{k_{789}\}_{f_{k_{78}}(0)}, \\
s \to \{u_9 \}& : & \{k_{1-9}, k_{789} \}_{f_{k_9}(0)}.
\end{array}
\end{eqnarray*}
Finally, $k_i$ is substituted with $f_{k_i}(1)$ for $i \in \{1,2,3,4,5,6,7,8,9,123,456,789\}$.
The attack presented in the Introduction is blocked because, for example,
compromise of $f_{k_9}(1)$ does not lead to the exposure of $f_{k_9}(0)$,
where $k_9$ is not known to the adversary (because it has been securely erased).
As a result, the past group key
$k_{1-9}$ cannot be recovered by the adversary.

\smallskip

\noindent{\bf $\leave^*$:} After granting a leave request from
user $u$, the key server $s$ updates the key tree by deleting the
$u$-node for user $u$ and the $k$-node for its individual key from
the key tree. The parent of the $k$-node corresponding to the
user's individual key is called the {\em leaving point}. As a
consequence, the keys corresponding to the path -- starting at the
leaving point and ending at the root -- need to be changed.
The algorithm is specified in Figure \ref{fig:leave},
whose basic idea can be summarized as follows:
\begin{enumerate}
\item For each $k$-node $x$ whose key needs to be changed, say, from $\bk_i$ to
freshly chosen $\hk_i$, the server constructs
a rekeying message that is the encryption of
$\hk_i$ with the keys of $x$'s children in the new key tree.
Note that ``the keys of $x$'s children in the new key tree" are either certain new keys
that need to be distributed, or some current keys that need not to be changed (although
they will be appropriately updated).
\item Any other key $\bk_j$ that needs not to be changed is replaced by $f_{\bk_j}(1)$.
\end{enumerate}

\begin{figure*}[ht]
\begin{center}
\fbox{
\begin{minipage}{41pc} 
\begin{tabbing}
123\=123\=123\=\kill
Leave protocol for group-oriented rekeying: \ \ \ \ \ \ \ \ // suppose $u$ leaves the group \\
\> let $x_{j+1}$ be the deleted $k$-node for $k_u$ \\
\> $\bk_{j+1} \leftarrow k_u$ \\
\> server $s$ finds the leaving point $x_j$ (parent of $k_u$)  \\
\> server $s$ removes $\bk_{j+1}$ from the key tree \\
\> let $x_0$ be the root \\
\> denote by $x_{i-1}$ the parent of $x_i$ where $1 \leq i \leq j$ \\
\> let $\bk_0,\bk_1, \ldots, \bk_j$ be the keys of $x_0, x_1, \ldots,x_j$ \ \ \ // they need to be changed \\
\> server $s$ generates fresh keys $\hk_0,\hk_1, \ldots, \hk_j$ as the new keys of $x_0, x_1, \ldots,x_j$ \\
\> {\tt FOR} $i = 0 \text{ \tt TO } j-1$ \\
\>\> let $\bk_{i_1}, \ldots, \bk_{i_{z_i}}$ be the keys at the children of $x_i$ in the new key tree \\
\>\>\> where $\bk_{i_a}$ is to be changed to $\hk_{i+1}$ for some $a \in \{1, \ldots, {z_i}\}$ \\
\>\> $L_i \leftarrow (\{\hk_i\}_{f_{\bk_{i_1}}(0)}, \ldots, \{\hk_i\}_{f_{\bk_{i_{a-1}} (0)}},
\{\hk_i\}_{f_{\hk_{i+1}}(0)}, \{\hk_i\}_{f_{\bk_{i_{a+1}}}(0)}, \ldots, \{\hk_i\}_{f_{\bk_{i_z}}(0)})$ \\
\> let $\bk_{j_1}, \ldots, \bk_{j_{z_j}}$ be the keys at the children of $x_j$ in the new key tree \\
\> $L_j \leftarrow (\{\hk_j\}_{f_{\bk_{j_1}}(0)}, \ldots, \{\hk_j\}_{f_{\bk_{j_z}}(0)})$ \\
\> $s \to \userset(\bk_0) \setminus \{u\}: (L_0, \ldots, L_j)$ \\
\> {\tt FOR} all $\bk \in (\keyset(\userset(\bk_0)) \setminus \{\bk_0,\bk_1,\ldots,\bk_{j+1}\}) \cup
\{\hk_{1}, \ldots, \hk_{j}\}$ \\
\>\> $\bk \leftarrow f_{\bk}(1)$
\end{tabbing}
\end{minipage}
}
\end{center}
\caption{Leave-incurred group-oriented rekeying}
\label{fig:leave}
\end{figure*}

For example, if $u_8$ leave the group as configured in Figure \ref{fig:attack}.(b),
the leaving point is the $k$-node $k_{789}$. Then, the group key is changed from $k_{1-9}$
to $k_{1-7,9}$, and the key of leaving point is changed from $k_{789}$ to
$k_{7,9}$ in Figure \ref{fig:attack}.(c).
The rekeying message sent to the users is:
\begin{eqnarray*}
s \to \{u_1, \ldots, u_7, u_9\}: \{k_{1-7,9}\}_{f_{k_{123}}(0)}, \{k_{1-7,9}\}_{f_{k_{456}}(0)},
\{k_{1-7,9}\}_{f_{k_{7,9}}(0)}, \{k_{7,9}\}_{f_{k_7}(0)}, \{k_{7,9}\}_{f_{k_9}(0)}.
\end{eqnarray*}
Finally, $k_i$ is updated to $f_{k_i}(1)$ for
$i \in \{1,2,3,4,5,6,7,9,123,456\}$, and $k_{7,9}$ is updated to $f_{k_{7,9}}(1)$.
The attack presented in the Introduction is blocked because, for example,
compromise of $f_{k_{9}}(1)$ does not lead to the exposure of $f_{k_{9}}(0)$,
where $k_{9}$ is not known to the adversary (because it has been securely erased).
As a result, $k_{7,9}$, and thus the past group key
$k_{1-7,9}$ cannot be recovered by the adversary.

\smallskip

\noindent{\bf $\rekey^*$:} If the rekeying event is incurred by a
join event, a legitimate user (i.e., an existing one or a joining
one) obtains a subset $\Theta'$ of $\Theta=\{\hk_0, \hk_1, \ldots,
\hk_{j}\}$, and updates each $\hk \in \Theta' \setminus \{\hk_0\}$ to $f_{\hk}(1)$.
If the rekeying event is incurred by a leave event,
a legitimate user (i.e., one remaining in the group) obtains a subset $\Theta'$ of
$\Theta=\{\hk_0, \hk_1, \ldots, \hk_{j}\}$,
and updates each $\hk \in \Theta' \setminus \{\hk_0\}$ to $f_{\hk}(1)$.
In any case, a legitimate user $u$ updates each $k_i \in \keyset(u)$
to $f_{k_i}(1)$, as long as $k_i$ is not changed to any key belonging to $\Theta$,
and erases the outdated keys.

For example, corresponding to the event that $u_9$ joins the group
as shown in Figure \ref{fig:attack}.(a), $u_1$ obtains $k_{1-9}$,
updates $k_{123}$ to $f_{k_{123}}(1)$, and updates $k_1$ to
$f_{k_1}(1)$. Whereas, $u_9$ obtains $k_{1-9}$ as well as
$k_{789}$, updates $k_{789}$ to $f_{k_{789}}(1)$, and updates
$k_9$ to $f_{k_9}(1)$. Corresponding to the event that $u_8$
leaves the group as shown in Figure \ref{fig:attack}.(b), $u_1$
obtains $k_{1-7,9}$, updates $k_{123}$ to $f_{k_{123}}(1)$, and
updates $k_1$ to $f_{k_1}(1)$. Whereas, $u_9$ obtains $k_{1-7,9}$
as well as $k_{7,9}$, updates $k_{7,9}$ to $f_{k_{7,9}}(1)$, and
updates $k_9$ to $f_{k_9}(1)$.

\subsection{Analysis}
\label{sec:wgl-security}

\begin{theorem}
\label{theorem:wgl} Assume that the stand-alone encryptions
utilized in LKH are based on a secure
pseudorandom function family. Then, LKH is {\tt secure}.
\end{theorem}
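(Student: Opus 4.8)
The plan is to prove {\tt security} of LKH by reducing to the chosen-plaintext security of the underlying stand-alone encryption, which follows from the assumed pseudorandom function family by the standard construction recalled in Section \ref{sec:tools}. Assuming toward a contradiction a PPT adversary $\A$ that wins the {\tt security} game (the event {\sf Succ}) against LKH with non-negligible advantage, I would build from $\A$ a distinguisher against the encryption scheme and hence against the PRF.

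The conceptual core is a structural invariant about the key tree. At the challenge time $t$ the group key $k^{(t)}$ sits at the root, and in every rekeying message the root key is ever transmitted only encrypted under the keys of its children on the current path, which are in turn transmitted only under keys held by current, non-corrupt members. I would make this precise by declaring a key to be \emph{exposed} if the adversary obtains it through an $\OO_{Corrupt}$ or $\OO_{Reveal}$ query, or if it appears in some observed rekeying ciphertext encrypted under an already-exposed key, and then prove by induction over the virtual time $0,1,2,\ldots$ (and, within a single rekeying event, over the tree height) that $k^{(t)}$ is never exposed. The induction uses all three restrictions essentially: restriction (3) rules out a direct $\OO_{Reveal}$ of the challenge key; restriction (4) forces every corrupted user to have been revoked by time $t$, and a {\sf Leave} re-keys every node from the leaving point up to the root while encrypting each new upward key only under sibling keys the leaving user never held, so nothing a corrupt user learned can be ``climbed back up'' onto the challenge path; restriction (5) forbids any corruption at or after $t$.

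To convert this information-flow statement into a computational bound I would use a hybrid argument in the style of the proof of Theorem \ref{theorem:main1}. Since the numbers of rekeying events and of tree keys are polynomial in $\kappa$, the distinguisher guesses the challenge event with inverse-polynomial probability and embeds its external challenge ciphertext there, while simulating all other keys itself; it can therefore answer every $\OO_{Send}$, $\OO_{Reveal}$, and $\OO_{Corrupt}$ query consistently, precisely because, by the invariant, no such query ever forces knowledge of an unexposed key on the challenge path. Walking through a sequence of hybrids that replace, one unexposed key at a time, the plaintexts encrypted under that key by a fixed dummy value, consecutive hybrids are indistinguishable by chosen-plaintext security; in the terminal hybrid $k^{(t)}$ is information-theoretically independent of $\A$'s view, so $\A$'s advantage there is $0$. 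Summing the polynomially many negligible hybrid gaps contradicts the assumed non-negligible advantage of $\A$.

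The main obstacle is the invariant of the second paragraph under a fully \emph{adaptive} adversary that interleaves joins, leaves, corruptions, and reveals in an arbitrary order. The delicate case is showing that a key learned by corrupting a user $V$ while $V$ was still legitimate cannot later be reused---after $V$'s mandated revocation---to decrypt a rekeying ciphertext that protects $k^{(t)}$; this is exactly where restriction (4) and the re-keying structure of {\sf Leave} (new root-ward keys are encrypted only under keys off the leaver's path) are indispensable, and where the bookkeeping of which keys are live at each virtual time, and of what a non-challenge $\OO_{Reveal}$ can contribute, must be done carefully. Once the invariant is established, the reduction and the hybrid accounting are routine.
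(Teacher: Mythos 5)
Your skeleton is essentially the paper's. The paper's brief proof sketch considers a mental variant of LKH in which the encryptions at the challenge rekeying event are computed with truly random functions, justifies the substitution by the fact that the {\tt security} game forbids corrupt members with respect to that event, and finishes with a standard hybrid over the polynomially many encryptions plus an inverse-polynomial guess of the challenge event. Your exposure-based induction and global hybrid are a finer-grained execution of exactly this plan; in fact your global hybrid (dummying out ciphertexts under every unexposed key throughout the execution) is what is actually needed to back up the paper's local substitution, since the keys used at the challenge event are only computationally hidden --- they were themselves transmitted, encrypted, in earlier rekeying messages --- a recursion the paper compresses into the single clause ``because the definition of {\tt security} requires that there are no corrupt users.''

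Where your proposal goes wrong is in the allocation of difficulty. You call the invariant the main obstacle under an adaptive adversary and declare the reduction ``routine'' once the invariant is in hand; it is the other way around. The invariant is a purely combinatorial information-flow statement and holds against any adversary, adaptive or not, by induction on the order of events, exactly as you sketch. What is not routine is the reduction itself: restriction (4) lets the adversary corrupt a member $V$ at time $t_1$ and only afterwards have $V$ revoked, so at the moment the simulator emits a rekeying ciphertext it cannot yet know whether the key encrypting it will end up exposed; hence it cannot decide whether that ciphertext must decrypt correctly or may carry a dummy/embedded CPA challenge, and if it embeds a challenge it cannot answer a later $\Ora_{Corrupt}$ query for that key. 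Guessing the exposure status of each key and aborting on a wrong guess does not compose benignly across the nested key dependencies of the tree. This ``commitment problem'' is precisely the adaptive-corruption difficulty that the paper itself concedes it cannot resolve for the neighboring implication (passive versus active forward-security, where it retreats to static adversaries), and the paper's own sketch silently steps over it here as well --- guessing the challenge event does not help, because the problematic corruptions occur before that event. So your proposal is no weaker than the published argument, but the one step you certify as routine is exactly where both your proof and the paper's have a real hole; closing it requires either a static/selective adversary or substantially more machinery than a vanilla hybrid.
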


\begin{proof} (sketch)
Consider a mental scheme that is the same as LKH,
except that the encryptions corresponding to the rekeying event of
interest -- the event corresponding to the
$\OO_{Test}(\cdot,\cdot)$ query -- are based on random functions.
This substitution can get through because the definition of {\tt security}
requires that there are no corrupt users.
We claim that this mental scheme is {\tt secure}; otherwise, a
standard hybrid argument shows that the pseudorandom function family is
broken because the number of encryptions is polynomially-bounded
(which is further based on the fact that the size of the key-tree
is polynomially-bounded). Conditioned on the fact that there are a
polynomially-bounded number of rekeying events, we conclude that
LKH is {\tt secure}. \qed
\end{proof}

As a corollary of Theorem \ref{theorem:main1} (which states that
the compiler transforms a {\tt secure} stateful group
communication scheme to a {\tt strongly-secure} one) and Theorem
\ref{theorem:wgl} (which states that LKH is indeed {\tt
secure}), we have:

\begin{corollary}
\label{corollary:main}
The scheme presented in Section \ref{sec:new-scheme}
is {\tt strongly-secure}.
\end{corollary}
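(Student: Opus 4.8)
The plan is to obtain the statement as a direct composition of the two theorems already proved, after first checking that the generic machinery genuinely applies to the concrete scheme at hand. First I would observe that the scheme of Section~\ref{sec:new-scheme} is, by construction, nothing other than the output of the compiler of Section~\ref{sec:compiler} instantiated on LKH with the group-oriented rekeying strategy: the explicit use of $f_{k_i}(0)$ as encryption keys and the replacements $k_i \leftarrow f_{k_i}(1)$ prescribed in Figures~\ref{fig:join} and~\ref{fig:leave}, together with the $\rekey^*$ update rule, are exactly the $\join^*$, $\leave^*$, and $\rekey^*$ transformations that the compiler derives from $\SGC.\join$, $\SGC.\leave$, and $\SGC.\rekey$. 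Thus, writing $\SGC$ for LKH, the scheme in question is precisely $\SSGC$.

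Next I would verify that LKH meets the precondition under which the compiler is proved correct, namely that the keys in $K_{\GC}^{(t)}$ are computationally independent of one another for every $t$. In LKH each $k$-node of the key tree is assigned a freshly and independently chosen random key at setup and at every rekeying event, so the server's keys are in fact information-theoretically independent, which is more than enough to satisfy the computational-independence hypothesis required by the compiler.

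With these two facts in hand the result follows at once: Theorem~\ref{theorem:wgl} states that LKH is {\tt secure}, and Theorem~\ref{theorem:main1} states that the compiler transforms any {\tt secure} scheme satisfying the independence precondition into a {\tt strongly-secure} one. Applying Theorem~\ref{theorem:main1} with $\SGC$ taken to be LKH therefore yields that $\SSGC$, i.e.\ the scheme of Section~\ref{sec:new-scheme}, is {\tt strongly-secure}, as claimed.

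I expect the only real content---beyond the bookkeeping of the composition---to be the verification that the concrete pseudorandom-function placements in Figures~\ref{fig:join} and~\ref{fig:leave} coincide with the abstract compiler's prescription, and that the independence precondition genuinely holds for LKH. Neither is hard, but both are essential: if the concrete scheme deviated from the plain compiler's output (for instance by incorporating the performance optimization of Section~\ref{sec:improvement}), one would instead have to route the argument through Theorem~\ref{theorem:main2} in place of Theorem~\ref{theorem:main1}.
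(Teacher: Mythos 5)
Your proposal is correct and follows essentially the same route as the paper: the paper obtains the corollary precisely by composing Theorem \ref{theorem:wgl} (LKH is {\tt secure}) with Theorem \ref{theorem:main1} (the compiler turns a {\tt secure} scheme satisfying the independence precondition into a {\tt strongly-secure} one), the scheme of Section \ref{sec:new-scheme} being exactly the compiler's output on LKH with group-oriented rekeying. Your explicit checks---that the pseudorandom-function placements in Figures \ref{fig:join} and \ref{fig:leave} match the abstract compiler, that LKH's freshly chosen keys satisfy the computational-independence hypothesis, and that Theorem \ref{theorem:main2} would be needed only for the optimized variant of Section \ref{sec:improvement}---are details the paper leaves implicit but do not alter the argument.
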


\subsection{Performance Optimization}
\label{sec:stateful-optimized-scheme}

The improved scheme differs from $\SSGC$ only in $\join^*$ and $\rekey^*$.

\smallskip

\noindent{\bf Improved $\join^{*}$:} This algorithm is specified in Figure \ref{fig:join1}.
It is the same as the $\join^*$ except the following: (1)
instead of freshly choosing new keys for the $k$-nodes on the
path starting at a joining point and ending at the root,
we simply update every existing key $k$ as $f_k(1)$, and (2) the new group key for encrypting actual
group communications is $f_k(0)$, where $k$ is the already updated key at the root.

\begin{figure*}[ht]
\begin{center}
\fbox{
\begin{minipage}{41pc} %{16cm}
\begin{tabbing}
123\=123\=123\=\kill
Join protocol for group-oriented rekeying: \ \ \ \ \ \ \ \  // suppose user $u$ joins the group \\
\> server $s$ generates a new key $k_u$ for user $u$  \\
\> server $s$ finds a joining point $x_{j}$ \\
\> server $s$ attaches $k_u$ to $x_j$ \\
\> let $x_0$ be the root \\
\> denote by $x_{i-1}$ the parent of $x_i$ for $1 \leq i \leq j$ \\
\> let $\bk_0,\bk_1, \ldots,\bk_j$ be the current keys of $x_0,\ldots,x_j$, respectively \\
\> $\bk_{j+1} \leftarrow k_u$  \\
\> $s \to \userset(\bk_0):$ ``key update" \\
\> $s \to \{u\}: \{f_{\bk_0}(1), f_{\bk_1}(1), \ldots, f_{\bk_j}(1)\}_{f_{\bk_{j+1}}(0)}$ \\
\> {\tt FOR} all $\bk \in \keyset(\userset(\bk_0)) \cup \{\bk_{j+1}\}$ \\
\>\> $\bk \leftarrow f_{\bk}(1)$
\end{tabbing}
\end{minipage}
}
\end{center}
\caption{Improved join-incurred group-oriented rekeying}
\label{fig:join1}
\end{figure*}

\noindent{\bf Improved $\rekey^*$:} It is the same as $\rekey^*$ except that when
the rekeying is incurred by a join event:
every existing user $v$ holding a key set $\keyset(v)$
needs to update every $k \in \keyset(v)$ to $f_{k}(1)$,
whereas the joining user $u$
only needs to update its common key $k_u$, which is established
during the out-of-band approval of the join request, to $f_{k_u}(1)$.
Note that the new group key for encrypting actual
group communications is $f_k(0)$, where $k$ is the already updated key at the root.

\smallskip

For example, if $u_9$ joins the group configured as in Figure \ref{fig:attack}.(a), then
$u_9$ is granted to join at the joining point of $k$-node $k_{78}$.
The key at the root is updated from $k_{1-8}$
to $k_{1-9} =f_{k_{1-8}}(1)$ such that the new key for encrypting actual group communications
is $f_{k_{1-9}}(0)$, and $k_{78}$ is updated to $k_{789} =f_{k_{78}}(1)$.
The sever sends the following messages:
\begin{eqnarray*}
\begin{array}{lll}
s \to \{u_1, \ldots, u_8\} & : & \text{``key update"} \\
s \to \{u_9\} & : & \{k_{1-9}, k_{789} \}_{f_{k_9}(0)}.
\end{array}
\end{eqnarray*}
Every existing user $u_i$, $i \in \{1,2,3,4,5,6,7,8\}$, with key set $K_i$,
updates every $k \in K_i$ as $f_{k}(1)$.
For example, $u_7$ updates $k_{1-8}$ to $k_{1-9} =f_{k_{1-8}}(1)$,
updates $k_{78}$ to $k_{789} =f_{k_{78}}(1)$, and
updates $k_7$ to $k_7 = f_{k_7}(1)$.
On the other hand, the joining user $u_9$ only
needs to update $k_9$ to $k_9=f_{k_9}(1)$, which means that it keeps
$(k_{1-9}, k_{89}, k_9 = f_{k_9}(1))$.

As a corollary of Theorem \ref{theorem:main2} (which states that the
optimized compiler in Section \ref{sec:improvement} transforms a {\tt secure}
stateful group communication scheme into a {\tt strongly-secure} one)
and Theorem \ref{theorem:wgl} (which states that LKH is indeed {\tt secure}), we have

\begin{corollary}
The optimized scheme in Section \ref{sec:stateful-optimized-scheme}
is {\tt strongly-secure}.
\end{corollary}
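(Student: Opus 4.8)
The plan is to obtain this corollary as a direct composition of the two results cited just before it, in exact parallel to how Corollary~\ref{corollary:main} was derived from Theorem~\ref{theorem:main1} and Theorem~\ref{theorem:wgl}. First I would make explicit that the optimized scheme of Section~\ref{sec:stateful-optimized-scheme} is, by construction, nothing other than the output of the optimized compiler of Section~\ref{sec:improvement} instantiated with $\SGC$ taken to be LKH under the group-oriented rekeying strategy: the improved $\join^*$ and $\rekey^*$ specified there are precisely the specialization of the compiler's optimized $\join^*/\rekey^*$ rules to LKH's key-tree structure (updating each existing $k_i$ to $f_{k_i}(1)$, letting the joining user evolve only its individual key, and taking the new group key to be $f_k(0)$ at the already-updated root key $k$). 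Establishing this identification is a matter of matching notation rather than of proof.

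Second, I would invoke Theorem~\ref{theorem:wgl}, which under the assumption that the stand-alone encryptions of LKH rest on a secure pseudorandom function family certifies that LKH is {\tt secure}, i.e.\ forward-secure in the active outsider attack model. This discharges the ``$\SGC$ is {\tt secure}'' hypothesis of Theorem~\ref{theorem:main2}. With the scheme identified as an instance of the optimized compiler and the base scheme shown {\tt secure}, Theorem~\ref{theorem:main2} then delivers that the output is {\tt strongly-secure}, which is exactly the claim.

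The part that actually needs care is checking the remaining hypotheses so that Theorem~\ref{theorem:main2} genuinely applies, and I expect this hypothesis-checking --- rather than any fresh reduction --- to be the only substantive step, since the security reduction itself is already packaged inside Theorem~\ref{theorem:main2}. There are two conditions. The first is inherited from the underlying compiler of Section~\ref{sec:compiler}, which applies only to the subclass of schemes in which the distinct keys of $K_{\GC}^{(t)}$ are computationally independent of one another; I would verify this for LKH by noting that $\setup^*$ draws a fresh, independently sampled key for every $k$-node, so the keys are information-theoretically (hence computationally) independent at setup, and that each rekeying event either installs freshly chosen keys or evolves existing ones through the pseudorandom function, preserving independence. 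The second, explicit in the statement of Theorem~\ref{theorem:main2}, is that $\SGC$ must not itself already employ the performance optimization; here I would remark that LKH as reviewed in Appendix~\ref{appendix:wgl} picks genuinely fresh replacement keys at each rekeying rather than deriving them by a one-way key evolution, so the hypothesis is met and the optimized compiler indeed does real work over the plain one.
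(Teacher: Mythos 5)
Your proposal is correct and follows essentially the same route as the paper, which likewise obtains the corollary by composing Theorem~\ref{theorem:main2} (the optimized compiler maps {\tt secure} schemes to {\tt strongly-secure} ones) with Theorem~\ref{theorem:wgl} (LKH is {\tt secure}), viewing the scheme of Section~\ref{sec:stateful-optimized-scheme} as the optimized compiler applied to LKH. Your extra hypothesis-checking --- the computational independence of LKH's keys and the fact that LKH does not already employ the optimization --- simply makes explicit what the paper leaves implicit.
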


\section{The Case of Stateless Group Communication Schemes}
\label{sec:stateless-case}

Recall that we briefly reviewed the subset-cover
framework \cite{NaorCrypto01} in Section \ref{sec:motivation}.
This section is organized as follows. In Section
\ref{sec:stateless:security-definition} we discuss the models
and security definitions, including the
notions of {\tt strong-security} (i.e., backward-security in the
active outsider attack model) and of {\tt security} (i.e.,
forward-security in the passive attack model).
In Section \ref{sec:stateless:relationships}
we explore the relationships between the security notions. In Section
\ref{sec:stateless:compiler} we present a compiler that can
transform a {\em subclass} of {\tt secure} stateless group
communication schemes into {\tt strongly-secure} ones, whose
security is analyzed in Section
\ref{sec:stateless:security-analysis}. A concrete {\tt
strongly-secure} stateless group communication scheme, which is
based on the {\em complete subtree method} \cite{NaorCrypto01}, is
presented in Section \ref{sec:stateless:concrete-scheme}.
Some practical issues are
discussed in Section \ref{sec:stateless:discussion}.

\subsection{Model and Security of Stateless Group Communication Schemes}
\label{sec:stateless:security-definition}

The subset-cover framework of \cite{NaorCrypto01} was briefly reviewed in
Fig. \ref{fig:stateless:framework}. More specifically, let $\kappa$ be a security
parameter, ${\cal N}$ be the set of all users such that $|{\cal
N}| =N$ is polynomially-bounded, and ${\cal R} \subset {\cal N}$
be a group of $|{\cal R}|=r$ users whose decryption privileges
should be revoked. Let $E_L$ be a symmetric key cryptosystem
secure against an adaptive chosen-plaintext attack, and $F_K$ be a
symmetric key cryptosystem with a weaker security property called
indistinguishability under a single chosen-plaintext
attack in \cite{NaorCrypto01} (which is called ``IND-P0-C0 security"
in \cite{KYSTOC00}).\footnote{Notice that \cite{NaorCrypto01}
required that $E_L$ be secure against chosen-ciphertext attacks, whereas
we require it to be secure against chosen-plaintext attacks.
The reason is that we need to we assume that the underlying communication channels
are authenticated. While this naturally prevents chosen-ciphertext attacks,
it also avoid another subtle attack, namely that a dishonest user could
successfully cheat an honest user into accepting an impersonating message.
The reason is simply due to the fact that $E_L$ being secure against
chosen-ciphertext attacks does not necessarily prevent this attack,
because the dishonest user also knows the common secret key.
This subtlety is well understood in the context of group communications
(cf. \cite{BonehEurocrypt01}).
}

Recall that the goal of a stateless group communication scheme is to allow a center
(or group controller, server, or sender) to transmit a message $M$
to all users such that any user $u \in {\cal N} \setminus {\cal R}$ can decrypt
the message correctly, while even a coalition consisting of all members of ${\cal R}$
cannot decrypt it.
Suppose $S_1,\ldots,S_w$ are a collection of subsets of users, where
$S_j \subseteq {\cal N}$ for $1 \leq j \leq w$, and each $S_j$ is assigned a long-lived
key $L_j$ such that each $u \in S_j$ should be able to
deduce $L_j$ from its secret information $I_u$.
Given a revoked set ${\cal R}$, if one can partition ${\cal N} \setminus {\cal R}$
into (ideally disjoint) sets $S_{i_1},\ldots,S_{i_m}$ such that
${\cal N} \setminus {\cal R} \subseteq \cup_{\ell=1}^m S_{i_\ell}$, then
a message-encryption key $K$ can be encrypted $m$ times with $L_{i_1},\ldots,L_{i_m}$, and
each user $u \in {\cal N} \setminus {\cal R}$ can obtain $K$ and thus $M$.

In what follows, by ``$\A$ corrupts a user $u$" we mean that
not only the internal state of $u$ (including $I_u$) is given to $\A$, but also
$u$ will behave under $\A$'s control (i.e., Byzantine);
by ``$u$ is revoked" we mean that $u$ is not entitled to receive the message with respect to
the specified session(s).
For simplicity, we assume that a user, once corrupted, is always corrupt.

Stateless group communication schemes are indeed simpler than
stateful ones because (1) both the joining and leaving operations
are implicit --- the rekeying messages may even be coupled with the
payload, and (2) when a user (re-)joins a group, its long-term
keys can indeed be reused. Therefore, the model of stateless group
communication schemes can also be correspondingly simplified. In
particular, we assume the center keeps an incremental counter for
each broadcast messages so that encryptions
may be simply denoted by $C^{(1)},C^{(2)},\ldots$ and the
corresponding plaintext messages may be denoted by
$M^{(1)},M^{(2)},\ldots$. One may think each $C^{(i)}$ corresponds
to an ``rekeying" event with revocation set ${\cal R}^{(i)}$ for
$i=1,2,\ldots$. Note that ${\cal R}^{(i)} \neq {\cal R}^{(i+1)}$.

We assume that during the system initialization the center can
communicate with each legitimate user through an {\em authenticated
private} channel. In practice, the authenticated private channel can
be implemented by a two-party authenticated key-exchange protocol,
which should also ensure, as in the case of stateful group
communication schemes, that certain relevant keys are securely
erased after the initialization. Further, we assume that after the
system initialization the center can communicate with a user
through an {\em authenticated} channel.

In parallel to the case of stateful group communication schemes,
we define two adversarial models for stateless group communication schemes:
the {\em active outsider attack model} and the {\em passive outsider attack model}.

\begin{definition}\emph{(active outsider attack model)}
\label{def:stateless:adversarial-model-active}
With respect to a given $i^{th}$ broadcast message $C^{(i)}$,
we say a user $u$ is legitimate if $u \in {\cal N} \setminus {\cal R}^{(i)}$, and
is illegitimate (or an outsider) otherwise.
By ``active outsider attack model" we mean the adversarial model in which
an outsider $\A$ of the $i^{th}$ broadcast message, which
may be called the ``challenge" message,
is allowed to corrupt legitimate users of $C^{(j)}$ for any $j>i$ (i.e.,
$u \in {\cal N} \setminus {\cal R}^{(j)}$).
\end{definition}

\begin{definition}\emph{(passive attack model)}
\label{def:stateless:adversarial-model-passive}
In this adversarial model, the adversary $\A$ is not allowed to corrupt any
other legitimate member. In other words, the adversary is only allowed to
decide when it is to be revoked (though in an arbitrary fashion).
Formally, $\A$ cannot corrupt any
$u \in {\cal N} \setminus \{\A\}$ for $i=1,2,\ldots$.
\end{definition}

In each of the two models, we define two security notions: backward-security and forward-security.
That is, we have four security notions:
(1) forward-security in the active outsider attack model or simply {\tt security} for short,
(2) backward-security in the active outsider attack model or simply {\tt strong-security} for short,
(3) forward-security in the passive attack model, and
(4) backward-security in the passive attack model.

\begin{definition}\emph{({\tt security}; adapted from \cite{NaorCrypto01})}
\label{def:stateless:security}
Consider an adversary $\A$ that gets to
\begin{enumerate}
\item Select adaptively ${\cal R}^{(1)}, {\cal R}^{(2)}, \ldots,{\cal R}^{(\ell_1)}$ of receivers,
obtain $I_u$ for all $u \in {\cal R}^{(i)}$ and see $C^{(1)}, C^{(2)}, \ldots,C^{(\ell_1)}$
for $i=1,2,\ldots,\ell_1$.

\item Choose a message $M$ as the challenge plaintext and a set ${\cal R}$
of revoked users that must include
all the ones it corrupted (but may contain more); i.e.,
$\cup_{i=1}^{\ell_1}{\cal R}^{(i)} \subseteq {\cal R}$.
$\A$ then receives an encrypted message $C$ with a revoked set ${\cal R}$, where $C$ is
the encryption of either $M$ or a random message of the same length. We may call this
the ``challenge" message.

\item For $i=\ell_1+2,\ell_1+3,\ldots$,
the following restrictions apply.
(1) Even if $\A \in {\cal N}$, $\A$ can only decide whether $\A \in {\cal R}^{(i)}$.
(2) Even if ${\cal R}^{(i)} \setminus \{\A\} \neq \emptyset$, $\A$ has no access to
any $I_u$ for $u \in {\cal R}^{(i)} \setminus \{\A\}$.
\end{enumerate}
Now $\A$ has to guess whether $C$ corresponds to the encryption of the real message $M$ or a random message.
Denote by {\sf Succ} the event that $\A$ makes the right guess.
The advantage of $\A$ is defined as
${\sf Adv}_{\A}(\kappa) = | 2 \cdot \Pr[{\sf Succ}] -1|$,
where $\Pr[{\sf Succ}]$ is the probability that the event {\sf Succ} occurs, and the probability
is taken over the coins used by the center and by $\A$.
We say that a stateless group communication scheme is {\tt secure}
(or forward-secure in the active outsider attack model) if, for any probabilistic
polynomial-time $\A$ as above, it holds that
${\sf Adv}_{\A}(\kappa)$ is negligible in $\kappa$.
\end{definition}

\begin{definition}\emph{({\tt strong-security})}
\label{def:stateless:strong-security}
Consider an adversary $\A$ that gets to
\begin{enumerate}
\item Select adaptively ${\cal R}^{(1)}, {\cal R}^{(2)}, \ldots,{\cal R}^{(\ell_1)}$ of receivers,
obtain $I_u$ for all $u \in {\cal R}^{(i)}$ and see $C^{(1)}, C^{(2)}, \ldots,C^{(\ell_1)}$
for $i=1,2,\ldots,\ell_1$.

\item Choose a message $M$ as the challenge plaintext and a set ${\cal R}$
of revoked users that must include
all the ones it corrupted (but may contain more); i.e.,
$\cup_{i=1}^{\ell_1}{\cal R}^{(i)} \subseteq {\cal R}$.
$\A$ then receives an encrypted message $C$ with a revoked set ${\cal R}$, where $C$ is
the encryption of either $M$ or a random message of the same length. We may call this
the ``challenge" message.

\item Select adaptively ${\cal R}^{(\ell_1+2)}, {\cal R}^{(\ell_1+3)}, \ldots$ of receivers
and obtain $I_u$ for all $u \in {\cal R}^{(i)}$ for $i=\ell_1+2, \ell_1+3, \ldots$.
Besides, $\A$ may select messages $M^{(\ell_1 + 2)},M^{(\ell_1 + 3)},\ldots$
and see the encryption of $C^{(\ell_1 + 2)},C^{(\ell_1 + 3)},\ldots$.
\end{enumerate}
Now $\A$ has to guess whether $C$ corresponds to the
encryption of the real message $M$ or a random message.
Denote by {\sf Succ} the event that $\A$ makes the right guess.
The advantage of $\A$ is defined as
${\sf Adv}_{\A}(\kappa) = | 2 \cdot \Pr[{\sf Succ}] -1|$,
where $\Pr[{\sf Succ}]$ is the probability that the event {\sf Succ} occurs, and the probability
is taken over the coins used by the center and by $\A$.
We say that a stateless group communication scheme is {\tt strongly-secure}
(or backward-secure in the active outsider attack model) if, for any probabilistic
polynomial-time $\A$ as above, it holds that
${\sf Adv}_{\A}(\kappa)$ is negligible in $\kappa$.
\end{definition}

\begin{definition}\emph{(forward-security in the passive attack model)}
\label{def:stateless:forward-secrecy-passive}
Consider an adversary $\A$ that gets to
\begin{enumerate}
\item Select adaptively ${\cal R}^{(1)}, {\cal R}^{(2)}, \ldots,{\cal R}^{(\ell_1)}$ of receivers,
and see $C^{(1)}, C^{(2)}, \ldots,C^{(\ell_1)}$ for $i=1,2,\ldots,\ell_1$.
However, $\A$ does not have access to any $I_u$, where $u \in {\cal R}^{(i)}\setminus \{\A\}$
and $i=1,2,\ldots,\ell_1$.

\item Choose a message $M$ as the challenge plaintext and a set ${\cal R}$
of revoked users that must include
all the ones it corrupted (but may contain more); i.e.,
$\cup_{i=1}^{\ell_1}{\cal R}^{(i)} \subseteq {\cal R}$.
$\A$ then receives an encrypted message $C$ with a revoked set ${\cal R}$, where $C$ is
the encryption of either $M$ or a random message of the same length. We may call this
the ``challenge" message.

\end{enumerate}
Now $\A$ has to guess whether $C$ corresponds to the encryption of the real message $M$ or a random message.
Denote by {\sf Succ} the event that $\A$ makes the right guess.
The advantage of $\A$ is defined as
${\sf Adv}_{\A}(\kappa) = | 2 \cdot \Pr[{\sf Succ}] -1|$,
where $\Pr[{\sf Succ}]$ is the probability that the event {\sf Succ} occurs, and the probability
is taken over the coins used by the center and by $\A$.
We say that a stateless group communication scheme is {\tt secure}
(or forward-secure in the active outsider attack model) if, for any probabilistic
polynomial-time $\A$ as above, it holds that
${\sf Adv}_{\A}(\kappa)$ is negligible in $\kappa$.
\end{definition}

\begin{definition}\emph{(backward-security in the passive attack model)}
\label{def:stateless:backward-secrecy-passive}
Consider an adversary $\A$ that gets to
\begin{enumerate}
\item Select adaptively ${\cal R}^{(1)}, {\cal R}^{(2)}, \ldots,{\cal R}^{(\ell_1)}$ of receivers,
and see $C^{(1)}, C^{(2)}, \ldots,C^{(\ell_1)}$ for $i=1,2,\ldots,\ell_1$.
However, $\A$ does not have access to any $I_u$, where $u \in {\cal R}^{(i)} \setminus \{\A\}$ and
$i=1,2,\ldots,\ell_1$.

\item Choose a message $M$ as the challenge plaintext and a set ${\cal R}$
of revoked users that must include
all the ones it corrupted (but may contain more); i.e.,
$\cup_{i=1}^{\ell_1}{\cal R}^{(i)} \subseteq {\cal R}$.
$\A$ then receives an encrypted message $C$ with a revoked set ${\cal R}$, where $C$ is
the encryption of either $M$ or a random message of the same length. We may call this
the ``challenge" message.

\item Select adaptively ${\cal R}^{(\ell_1+2)}, {\cal R}^{(\ell_1+3)}, \ldots$ of receivers,
and possibly select messages $M^{(\ell_1 + 2)},M^{(\ell_1 + 3)},\ldots$
and see the encryption of $C^{(\ell_1 + 2)},C^{(\ell_1 + 3)},\ldots$.
However, $\A$ does not have access to any $I_u$, where $u \in {\cal R}^{(i)} \setminus \{\A\}$
and $i=\ell_1+2, \ell_1+3, \ldots$.
\end{enumerate}
Now $\A$ has to guess whether $C$ corresponds to the
encryption of the real message $M$ or a random message.
Denote by {\sf Succ} the event that $\A$ makes the right guess.
The advantage of $\A$ is defined as
${\sf Adv}_{\A}(\kappa) = | 2 \cdot \Pr[{\sf Succ}] -1|$,
where $\Pr[{\sf Succ}]$ is the probability that the event {\sf Succ} occurs, and the probability
is taken over the coins used by the center and by $\A$.
We say that a stateless group communication scheme is {\tt strongly-secure}
(or backward-secure in the active outsider attack model) if, for any probabilistic
polynomial-time $\A$ as above, it holds that
${\sf Adv}_{\A}(\kappa)$ is negligible in $\kappa$.
\end{definition}

It is trivial to see that {\tt strong-security} implies
{\em backward-security in the passive model}, and that
{\tt security} implies {\em forward-secure in the passive attack model}.

\subsection{Relationships between the Security Notions}
\label{sec:stateless:relationships}

We summarize the relationships between the security notions of
stateless group communication schemes in Fig.
\ref{fig:stateless:relationships}, where $X \rightarrow Y$ means
$X$ is stronger than $Y$, $X \leftrightarrow Y$ means $X$ is
equivalent to $Y$, $X \not\rightarrow Y$ means $X$ does not imply
$Y$, and $X \stackrel{?}{\rightarrow} Y$ means it is unclear where
$X$ implies $Y$. Below we elaborate on the non-trivial
relationships showed in Fig. \ref{fig:stateless:relationships}.

\begin{figure*}[h]
\unitlength=1.00mm \special{em:linewidth 0.4pt} \linethickness{0.4pt}
\begin{picture}(132.00,40.00)
\put(16.00,38.00){{backward-security in the}}
\put(16.00,34.00){{active outsider attack model}}
\put(16.00,30.00){{(i.e., {\tt strong-security})}}
\put(15.00,29.00){\framebox(46, 12)}
\put(68.00,40.00){{Proposition \ref{proposition:stateless:active-model:backward-implies-forward}}}
\put(61.00,39.00){\vector(1,0){38}}
\put(68.00,35.00){{Proposition \ref{proposition:stateless:active-model:backward-does-not-imply-forward}}}
\put(78.00,32.00){/}
\put(99.00,33.00){\vector(-1,0){38}}
\put(16.00,8.00){{backward-security in the}}
\put(16.00,2.00){{passive attack model}}
\put(15.00,0.00){\framebox(46, 12)}
\put(14.00,20.00){trivial}
\put(24.00,29.00){\vector(0,-1){17.00}}
\put(54.00,20.00){Theorem \ref{theorem:stateless:counter-example}}
\put(52.00,21.00){/}
\put(53.00,12.00){\vector(0,1){17.00}}
\put(100.00,38.00){{forward-security in the}}
\put(100.00,34.00){{active outsider attack model}}
\put(100.00,30.00){{(i.e., {\tt security})}}
\put(99.00,29.00){\framebox(46, 12)}
\put(68.00,7.00){{Proposition \ref{proposition:stateless:passive-model:backward-equals-to-forward}}}
\put(81.00,6.00){\vector(1,0){18}}
\put(81.00,6.00){\vector(-1,0){20.00}}
\put(100.00,8.00){{forward-security in the}}
\put(100.00,2.00){{passive attack model}}
\put(99.00,0.00){\framebox(46, 12)}
\put(98.00,20.00){trivial}
\put(108.00,29.00){\vector(0,-1){17.00}}
\put(136.00,21.00){{/}}
\put(139.00,18.00){{\huge ?}}
\put(137.00,12.00){\vector(0,1){17.00}}
\end{picture}
\caption{The relationships between the security notions in stateless group communication schemes}
\label{fig:stateless:relationships}
\end{figure*}
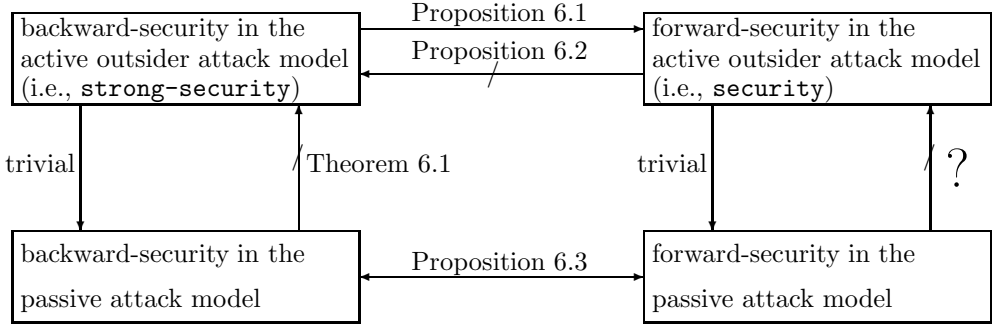

\begin{proposition}
\label{proposition:stateless:active-model:backward-implies-forward}
If a stateless group communication scheme is {\tt strongly-secure}, then it is also {\tt secure}.
\end{proposition}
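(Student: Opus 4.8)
The plan is to mirror the proof of Proposition \ref{proposition:active-model:backward-implies-forward} from the stateful setting: I would argue that the {\tt security} game is merely a restricted version of the {\tt strong-security} game, so that any adversary against the former is automatically an adversary against the latter achieving the same advantage. Concretely, I would compare Definition \ref{def:stateless:security} and Definition \ref{def:stateless:strong-security} phase by phase. Phases (1) and (2) --- the pre-challenge selection of ${\cal R}^{(1)},\ldots,{\cal R}^{(\ell_1)}$ with the corresponding corruptions and ciphertext observations, together with the choice of the challenge message $M$ and the revoked set ${\cal R} \supseteq \cup_{i=1}^{\ell_1}{\cal R}^{(i)}$ --- are literally identical in the two definitions. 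The only difference lies in phase (3): in the {\tt strong-security} game the adversary may adaptively select the post-challenge revocation sets ${\cal R}^{(\ell_1+2)},{\cal R}^{(\ell_1+3)},\ldots$ and obtain $I_u$ for every $u \in {\cal R}^{(i)}$ (i.e., it may corrupt legitimate users of later broadcasts), whereas in the {\tt security} game the adversary is barred from any such post-challenge corruption and may only decide its own revocation status.

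First I would observe that this makes the post-challenge capabilities of a {\tt security} adversary a subset of those of a {\tt strong-security} adversary. Given any probabilistic polynomial-time adversary $\A$ against {\tt security}, I would build an adversary $\A'$ against {\tt strong-security} that runs $\A$ as a subroutine and relays all of $\A$'s interactions to its own challenger. In phases (1) and (2) the relaying is verbatim. In phase (3), every action $\A$ takes --- deciding whether $\A$ itself is revoked in a later round and observing the subsequent broadcasts --- is something $\A'$ is explicitly permitted to do, while $\A'$ simply never exercises its additional power to request $I_u$ for the other revoked users. Hence $\A'$ perfectly reproduces the view $\A$ would see in the {\tt security} game and outputs whatever $\A$ outputs as its guess for the challenge bit.

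Consequently the respective events {\sf Succ} coincide, so ${\sf Adv}_{\A'}(\kappa) = {\sf Adv}_{\A}(\kappa)$. If the scheme is {\tt strongly-secure}, then ${\sf Adv}_{\A'}(\kappa)$ is negligible for every such $\A'$, and therefore ${\sf Adv}_{\A}(\kappa)$ is negligible for every {\tt security} adversary $\A$; this is exactly the statement that the scheme is {\tt secure}.

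I expect no substantive obstacle here: the argument is a containment of adversary classes rather than a reduction to a computational assumption, so no hybrid or pseudorandomness argument is needed. The one point that will require care --- and the only place the argument could fail --- is the phase-by-phase verification that the {\tt security} adversary's allowed moves are genuinely a subset of the {\tt strong-security} adversary's; in particular I must confirm that restriction (2) of Definition \ref{def:stateless:security} (no access to $I_u$ for $u \in {\cal R}^{(i)} \setminus \{\A\}$ after the challenge) is precisely an option the {\tt strong-security} adversary is free to forgo. Once this is checked, the simulation is perfect and the conclusion is immediate.
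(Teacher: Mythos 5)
Your proposal is correct and takes essentially the same approach as the paper: the paper's proof likewise observes that the {\tt security} game grants the adversary a strict subset of the capabilities of the {\tt strong-security} game (namely, the post-challenge corruption power is removed), so secrecy of the challenge $C$ under the stronger adversary immediately implies it under the weaker one. Your explicit construction of the relaying adversary $\A'$ that forgoes the extra power merely formalizes what the paper states in a single sentence.
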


\begin{proof}
This is almost immediate because, on one hand, the definition of {\tt strong-security}
ensures the secrecy of the encrypted content of $C$
even if $\A$ can have access to $I_u$ for $u \in {\cal N} \setminus {\cal R}^{(i)}$
for $i = \ell_1+2,\ell_1+3,\ldots$,
and on the other hand,
the definition of {\tt security}
ensures the secrecy of the encrypted content of $C$ only if $\A$ does not
have access to any $I_u$ such that $u \in {\cal N} \setminus {\cal R}^{(i)}$ and
$i \in \{\ell_1+2,\ell_1+3,\ldots\}$.
\end{proof}

\begin{proposition}
\label{proposition:stateless:active-model:backward-does-not-imply-forward}
A stateless group communication scheme that is {\tt secure} is not
necessarily {\tt strong-secure}.
\end{proposition}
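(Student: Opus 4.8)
The plan is to follow the template of Proposition~\ref{proposition:active-model:backward-does-not-imply-forward} from the stateful case and exhibit a single concrete scheme that is {\tt secure} yet not {\tt strongly-secure}. The natural witness is the subset-cover framework of \cite{NaorCrypto01} itself (instantiated, say, by the complete subtree method), because Section~\ref{sec:motivation} has already described why it falls to the active outsider attack, while its forward-security in our model follows from an adaptation of the original analysis. Thus the separation reduces to verifying two claims: that this scheme meets Definition~\ref{def:stateless:security}, and that it fails Definition~\ref{def:stateless:strong-security}.

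For the positive half I would argue {\tt security} directly. In Definition~\ref{def:stateless:security} every user the adversary corrupts up to the challenge is forced into the revoked set ${\cal R}$ (since $\cup_{i=1}^{\ell_1}{\cal R}^{(i)} \subseteq {\cal R}$), and after the challenge the adversary obtains no further secret information $I_u$. Hence no corrupted user holds any of the subset keys $L_{i_1},\ldots,L_{i_m}$ used to wrap the challenge session key $K$. The indistinguishability of $C$ then follows from the adaptive chosen-plaintext security of $E_L$ (to hide $K$ inside the $E_{L_{i_j}}(K)$ components) together with the single-chosen-plaintext security of $F_K$ (to hide $M$ inside $F_K(M)$), exactly as in \cite{NaorCrypto01} but restricted to the chosen-plaintext setting justified by the authenticated-channel assumption.

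For the negative half I would invoke the attack of Section~\ref{sec:motivation} essentially verbatim. An outsider $\A \notin {\cal N}$ with respect to the challenge records $C$, then uses the extra power granted by Definition~\ref{def:stateless:strong-security} to corrupt, at some later round $j>\ell_1+1$, a user $u$ that was legitimate for the challenge (i.e. $u \in {\cal N}\setminus{\cal R}$). This yields $I_u$, from which $\A$ derives the subset key $L_{i_j}$ associated with the subset $S_{i_j}$ that contains $u$, decrypts the component $E_{L_{i_j}}(K)$ to recover $K$, and hence distinguishes whether $C$ encrypts $M$ or a random message with advantage essentially $1$. Since such a $u$ exists whenever ${\cal N}\setminus{\cal R}\neq\emptyset$, the scheme is not {\tt strongly-secure}.

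The main obstacle is the positive half rather than the attack: I must discharge the adaptive {\tt security} reduction cleanly, making sure the hybrid over the (polynomially many) ciphertext components and sessions respects both the weaker $F_K$ notion and the requirement that every corrupted user lies in ${\cal R}$, so that the reduction never needs a key the adversary could legitimately hold. Once both halves are established the proposition is immediate, and (as in the stateful case) it can be absorbed into the statement of Theorem~\ref{theorem:stateless:counter-example}.
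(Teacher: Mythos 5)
Your proposal is correct and takes essentially the same route as the paper: the paper's proof also uses the complete subtree instantiation of the subset-cover framework as the witness, citing Theorem~\ref{theorem:stateless:security} for the {\tt security} half and the attack scenario of Section~\ref{sec:motivation} for the failure of {\tt strong-security}. The only difference is that you spell out both halves (the reduction adapted from the original analysis, and the corrupt-after-challenge attack) where the paper simply cites them.
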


\begin{proof}
The fact that {\tt security} does not imply {\tt strong-security} is implied by (1) Theorem
\ref{theorem:stateless:security}, which states that the complete subtree method of
the subset-cover framework is {\tt secure}, and (2) that
the subset-cover framework is insecure against an active outsider
attacker (cf. the attack scenario in
Section \ref{sec:motivation}). The key observation is indeed that
the adversary's capability in the {\tt strong-security} is strictly stronger.
\qed
\end{proof}

The above proposition implies that for a stateless group communication scheme, one only needs to
show that it is {\tt strongly-secure}.

\begin{proposition}
\label{proposition:stateless:passive-model:backward-equals-to-forward}
A stateless group communication scheme is backward-secure in the passive attack model
iff it is forward-secure in the passive attack model.
\end{proposition}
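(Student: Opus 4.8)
The plan is to follow the template of the proof of Proposition~\ref{proposition:passive-model:backward-equals-to-forward} for the stateful case, exploiting the fact that the two passive-model games here differ only in whether the adversary is granted a post-challenge phase. Comparing Definition~\ref{def:stateless:forward-secrecy-passive} with Definition~\ref{def:stateless:backward-secrecy-passive}, the first two stages are verbatim identical, and the backward game merely appends a third stage in which $\A$ keeps choosing revocation sets ${\cal R}^{(i)}$ (and messages) and observing ciphertexts $C^{(i)}$ for $i \geq \ell_1+2$. I would therefore establish the biconditional through two contrapositive implications, exactly as in the stateful argument, with ``re-joining'' now realized as $\A$ un-revoking itself in a later session.

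For the direction ``not forward-secure $\Rightarrow$ not backward-secure'' I would note that the forward game is literally the backward game with an empty third stage. Hence any $\A$ that distinguishes the challenge $C$ using only the first two stages is already a legitimate backward adversary, namely one that declines every query for $i \geq \ell_1+2$, with the same advantage. This is the stateless incarnation of the re-joining step: the same $C$ that is a \emph{future} ciphertext relative to $\A$'s revocation may be reread as a \emph{past} ciphertext relative to a later (vacuous or genuine) re-admission, and appending such later sessions cannot alter $\A$'s view up to the challenge.

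The substantive direction is ``not backward-secure $\Rightarrow$ not forward-secure.'' Here I would take a backward adversary $\A$ of non-negligible advantage and build a forward adversary $\B$ that runs \emph{all} of $\A$'s sessions---both those $\A$ issues before the challenge and those it issues afterward---inside its single pre-challenge stage, recording every ciphertext, then requests the challenge, and finally executes $\A$, feeding it the recorded ciphertexts and the challenge in $\A$'s original temporal order and echoing $\A$'s guess. Correctness rests on the defining feature of statelessness: each $I_u$ is fixed once and for all, so re-joining grants $\A$ nothing new, and every broadcast encrypts a freshly and independently chosen session key $K^{(i)}$ under the fixed long-lived keys $L_j$. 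Consequently the joint distribution of $(C^{(1)},\ldots,C^{(\ell_1)},C,C^{(\ell_1+2)},\ldots)$ is invariant under the order in which $\B$ elicits the broadcasts, so $\A$'s simulated view is distributed exactly as in the real backward game and its advantage is preserved.

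The hard part will be reconciling the pre-challenge restriction $\cup_i {\cal R}^{(i)} \subseteq {\cal R}$ of the forward game with the freedom $\A$ enjoys in the backward game's third stage, where its post-challenge revocation sets need not lie in ${\cal R}$. When $\B$ folds those sessions into its pre-challenge stage, the constraint would force ${\cal R}$ to absorb every user $\A$ revokes later, which in principle perturbs the challenge cover and hence $C$ itself. I expect to dispatch this by arguing it is without loss of generality to assume $\A$'s post-challenge revocation sets sit inside ${\cal R}$: in the passive model $\A$ never obtains $I_v$ for $v \neq \A$, so revoking an extra $v \notin {\cal R}$ only produces encryptions of an independent fresh session key under keys $L_j$ with $\A \notin S_j$, which $\A$ cannot decrypt and which, being independent of the challenge's session key, cannot raise its distinguishing advantage. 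With this normalization the constraint is satisfiable and the reduction goes through; the remaining bookkeeping---that $\B$ honors the no-foreign-$I_u$ restriction, which it does since it corrupts no one---is routine.
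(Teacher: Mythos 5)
Your first direction is fine, and in fact cleaner than the paper's: since Definition~\ref{def:stateless:backward-secrecy-passive} is Definition~\ref{def:stateless:forward-secrecy-passive} plus an optional third stage, every forward adversary is literally a backward adversary with an empty third stage. The genuine gap is in your substantive direction. Your reduction requires $\B$ to issue \emph{all} of $\A$'s broadcast queries --- including the post-challenge ones --- inside $\B$'s single pre-challenge stage, and only afterwards to execute $\A$. But $\A$ is adaptive: its stage-3 revocation sets and messages are functions of the ciphertexts it has already seen, \emph{including the challenge $C$ itself}, which $\B$ does not possess until after it has committed to every pre-challenge query. So $\B$ cannot know which sessions to pre-fetch, and it cannot manufacture them on its own since the stage-3 ciphertexts are encryptions under long-lived keys $L_j$ that $\B$ does not hold. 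Your invariance claim (``the joint distribution of the broadcasts is invariant under the order in which $\B$ elicits them'') is correct only once the query sequence is fixed, i.e., only for a \emph{static} (non-adaptive) $\A$. This is exactly the static-versus-adaptive obstruction the paper itself flags in the discussion immediately following this proposition, for the related implication from passive forward-security to active forward-security: provable when the adversary is static, unknown how to prove when it is adaptive. As written, your argument establishes the equivalence only against static adversaries.

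The normalization you invoke to handle the constraint $\cup_i {\cal R}^{(i)} \subseteq {\cal R}$ suffers from a related defect: replacing a stage-3 query ${\cal R}^{(i)} \not\subseteq {\cal R}$ by one contained in ${\cal R}$ changes the cover of ${\cal N} \setminus {\cal R}^{(i)}$, and the ciphertext header explicitly lists the cover indices $i_1,\ldots,i_m$, so the modified view is visibly different to $\A$; arguing that the advantage is nevertheless preserved would require simulating the original responses, which again needs keys nobody in the reduction has (and those responses are not independent of the challenge, since covers for different revocation sets can share long-lived keys with the challenge cover). The paper's own proof avoids all of this by not performing a reduction at all: it takes the backward adversary's success event --- distinguishing a message $M^{(i)}$ broadcast at some time $i$ with $i_2 \leq i < i_3$, during which $\A$ was revoked --- and re-reads that very same execution as a forward-security violation with respect to the revocation event at time $i_2$. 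No queries are moved or reordered, so adaptivity never enters. (The paper's reinterpretation is itself informal about the fact that the formal forward game ends at the guess, but it is the route the paper takes, and it does not hit your obstruction.)
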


\begin{proof}
First we show that a stateless group communication scheme
that is not forward-secure in the passive attack
model is also not backward-secure in the passive attack model.
Suppose $\A$ is legitimate with respect to $C^{(i_1)}$, and illegitimate
with respect to $C^{(i_2)}$ where $i_1 < i_2$.
Since the scheme is not forward-secure in the passive attack
model, $\A$ can derive some information about $M$ corresponding to the $({\ell_1+1})^{th}$ broadcast $C$
with a non-negligible probability, where $i_2 \leq \ell_1+1$.
Now suppose $\A$ is legitimate with respect to $C^{(i_3)}$ where $\ell_1+1 < i_3$.
Then, with respect to $C^{(i_3)}$, $\A$
can derive some information about
a past encrypted message $M$ with respect to the $(\ell_1+1)^{th}$ broadcast
with a non-negligible probability.
Since $\A$ does not corrupt any other legitimate user $u \in {\cal R}^{j} \setminus \{\A\}$
for $j=\ell_1+2, \ell_1+3,\ldots$,
the scheme is not backward-secure in the passive attack model.

Second we show that a group communication scheme that is not
backward-secure in the passive attack model is also not
forward-secure in the passive attack model. Suppose $\A \in {\cal
N} \setminus {\cal R}_{i_1}$, $\A \notin {\cal N} \setminus {\cal
R}_{i_2}$, and $\A \in {\cal N} \setminus {\cal R}_{i_3}$, where
$i_1 < i_2 < i_3 $. Since the scheme is not backward-secure in the
passive attack model, without loss of generality, $\A$ can derive
some information about $M^{(i)}$ for some $i_2 \leq i < i_3$ with a non-negligible
probability. This also means that, with respect to $C^{(i_{2})}$,
$\A$ can derive some information about a future message $M^{(i)}$ for some $i \geq i_2$.
Since $\A$ does not corrupt any other legitimate users, the scheme
is not forward-secure in the passive attack model. \qed
\end{proof}

We do not know whether forward-security in the passive attack mode
also implies forward-security in the active outsider attack model.
The relationship may seem trivial at a first glance, since all the
corrupt members are revoked before the ``challenge" session, and
the adversary is not allowed to corrupt any member after the
``challenge" session. Although it can indeed be shown that the
implication holds, provided that the adversary is {\em static}
(meaning that the adversary decides which principals in ${\cal N}$
it will corrupt), in the more interesting case that the adversary
is {\em adaptive}, we do not know how to prove it.

\begin{theorem}
\label{theorem:stateless:counter-example}
There exists a stateless group communication scheme that
is ``backward-secure in the passive attack model"
but not {\tt strongly-secure} (i.e., backward-secure in the active outsider attack model).
\end{theorem}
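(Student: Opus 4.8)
The plan is to mirror the proof of Theorem~\ref{theorem:counter-example} from the stateful setting, using the complete subtree method of the subset-cover framework as the separating witness. Concretely, I would establish the positive half (backward-security in the passive attack model) by a short chain of implications, and the negative half (failure of {\tt strong-security}) by formalizing the attack already sketched in Section~\ref{sec:motivation}.

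For the positive half, I would first invoke Theorem~\ref{theorem:stateless:security}, which asserts that the complete subtree method is {\tt secure} (i.e., forward-secure in the active outsider attack model). Since {\tt security} trivially implies forward-security in the passive attack model---the passive adversary being merely a restricted active one that never corrupts any $u \in {\cal N} \setminus \{\A\}$---the scheme is forward-secure in the passive attack model. Then Proposition~\ref{proposition:stateless:passive-model:backward-equals-to-forward}, which establishes the equivalence of forward- and backward-security in the passive attack model, lets me conclude that the complete subtree method is also backward-secure in the passive attack model.

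For the negative half, I would exhibit a concrete adversary $\A$ winning the game of Definition~\ref{def:stateless:strong-security}. The adversary records the challenge ciphertext $C$ (broadcast to ${\cal N} \setminus {\cal R}$) and fixes a target user $u \in {\cal N} \setminus {\cal R}$, i.e., a user entitled to decrypt $C$. Using the capability granted in step~3 of the game, $\A$ then chooses a later revocation set ${\cal R}^{(\ell_1+2)} \ni u$ and obtains $I_u$. Since $u$ was a legitimate recipient of $C$ and the long-lived keys of a stateless scheme are reused, its secret information $I_u$ allows it to deduce the key $L_{i_j}$ under which the message-encryption key $K$ was wrapped in $C$; hence $\A$ recovers $K$, decrypts the payload, and distinguishes the real message from a random one with overwhelming advantage. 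This shows the scheme is not {\tt strongly-secure}, completing the separation.

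The main obstacle is not conceptual but lies in checking that this attack is actually \emph{permitted} by the rules of Definition~\ref{def:stateless:strong-security} and that it yields non-negligible advantage. Specifically, I must verify that corrupting a user $u$ who was a legitimate recipient at the challenge---but is revoked in some session $i \geq \ell_1+2$---falls within step~3 (which grants access to $I_u$ precisely for $u \in {\cal R}^{(i)}$ with $i \geq \ell_1+2$), and that no game restriction forces $u \in {\cal R}$ at the challenge: indeed the constraint $\cup_{i=1}^{\ell_1}{\cal R}^{(i)} \subseteq {\cal R}$ only pins down users corrupted \emph{before} the challenge, whereas $u$ is corrupted afterwards. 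Once it is confirmed that the game allows corrupting a post-challenge-revoked recipient of $C$, the distinguishing advantage follows immediately from the correctness of decryption in the complete subtree method.
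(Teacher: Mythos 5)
Your proposal is correct and follows essentially the same route as the paper's proof: the positive half uses exactly the same chain (Theorem~\ref{theorem:stateless:security}, the trivial implication to forward-security in the passive model, then Proposition~\ref{proposition:stateless:passive-model:backward-equals-to-forward}), and the negative half is the attack from Section~\ref{sec:motivation}, which the paper cites directly and you merely spell out inside the game of Definition~\ref{def:stateless:strong-security}. Your explicit check that corrupting a post-challenge-revoked recipient is permitted by step~3 and unconstrained by the condition $\cup_{i=1}^{\ell_1}{\cal R}^{(i)} \subseteq {\cal R}$ is a welcome bit of added rigor, but it does not change the argument.
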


\begin{proof}
Theorem \ref{theorem:stateless:security} shows that the complete subtree revocation scheme in
the subset-cover framework is {\tt secure}
(i.e., forward-secure in the active outsider attack model), which trivially means that
it is also forward-secure in the passive attack model.
Then, Proposition \ref{proposition:stateless:passive-model:backward-equals-to-forward}
shows that it is also backward-secure in the passive attack model.

On the other hand, the attack scenario showed in Section
\ref{sec:motivation} states that the subset-cover framework
is not backward-secure in the active outsider attack model. \qed
\end{proof}

\subsection{A Compiler for Stateless Group Communication Schemes}
\label{sec:stateless:compiler}

Now we present a compiler that can transform a subclass
of {\tt secure} stateless group communication schemes
falling into the subset-cover framework (called the input schemes)
into {\tt strongly-secure} ones.
The subclass of stateless group communication schemes has the characteristics that
the different keys belonging to $\{L_i\}_i \cup \{I_u\}_u$ are computationally
independent of each other. Let $\{f_k\}$ be a pseudorandom function family.
The compiler is specified in Fig. \ref{fig:stateless:strong-security}.

\begin{figure*}[h]
\begin{center}
\fbox{
\begin{minipage}{38pc} 
\begin{description}
\item[Initialization:] This is the same as in the input scheme.

\item[Broadcasting:] Given a set ${\cal R}$ (which may be
empty at the first broadcasting after initialization),
the center executes the following:
\begin{enumerate}
\item Choose a session encryption key $K$.
\item Find a partition of the users
in ${\cal N} \setminus {\cal R}$ into disjoint subsets $S_{i_1}, \ldots, S_{i_m}$.
Let $L_{i_1},\ldots, L_{i_m}$ be the keys associated with the above subsets.
\item Encrypt $K$ with keys $f_{L_{i_1}}(0),\ldots, f_{L_{i_m}}(0)$ and send the ciphertext
$$
\langle [i_1, \ldots, i_m, E_{f_{L_{i_1}}(0)}(K), \ldots, E_{f_{L_{i_m}}(0)}(K)], F_K(M)\rangle.
$$
\item Update $L_{i}$ to $f_{L_i}(1)$ for all $i$ if ${\cal R} \neq \emptyset$.
\end{enumerate}

\item[Decryption:] A receiver $u$, upon receiving a broadcast message
$\langle [i_1, \ldots, i_m, C_1, \ldots, C_m], C \rangle$, executes as follows.
\begin{enumerate}
\item Find $i_j$ such that $u \in S_{i_j}$ (in the case $u \in
{\cal R}$ the result is {\sc null}). \item Extract the
corresponding key $L_{i_j}$ from $I_u$. \item Decrypt $C_j$ using
key $f_{L_{i_j}}(0)$ to obtain $K$. \item Decrypt $C$ using key
$K$ to obtain the message $M$. \item Update $L_i$ to $f_{L_i}(1)$
for all the $i$ it holds if the broadcast is incurred by a
revocation event (i.e., $m > 1$).
\end{enumerate}
\end{description}
\end{minipage}
}
\end{center}
\caption{The compiler for stateless group communication schemes}
\label{fig:stateless:strong-security}
\end{figure*}

\subsection{Security Analysis of the Compiler}
\label{sec:stateless:security-analysis}

The key idea that the scheme resulting from the above compiler is
not subject to the attack presented in the introduction is 
the following: compromise of a user at time $t$ does not
allow the adversary to recover keys corresponding to time $t_1 <t$.
This is fulfilled by updating the keys using an appropriate
family of pseudorandom functions.

\begin{theorem}
\label{theorem:stateless:compiler-security}
Suppose the input scheme is {\tt secure},
and $\{f_k\}$ is a secure pseudorandom function family,
and the different keys belonging to $\{L_i\}_i \cup \{I_u\}_u$ are
computationally independent of each other.
Then, the above scheme is {\tt strongly-secure}
in the sense of Definition \ref{def:stateless:strong-security}.
\end{theorem}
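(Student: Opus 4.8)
The plan is to follow the blueprint of the proof of Theorem \ref{theorem:main1} for the stateful compiler, transposed to the subset-cover setting: I would show that if the compiled scheme of Fig.~\ref{fig:stateless:strong-security} fails to be {\tt strongly-secure} in the sense of Definition \ref{def:stateless:strong-security}, then either the input scheme is not {\tt secure} in the sense of Definition \ref{def:stateless:security}, or the family $\{f_k\}$ is not pseudorandom. The only gap between the two security notions is that under {\tt strong-security} the adversary $\A$ is permitted, \emph{after} seeing the challenge ciphertext $C$ (the $(\ell_1+1)^{th}$ broadcast), to obtain $I_u$ for users revoked in later broadcasts (item 3 of Definition \ref{def:stateless:strong-security}), whereas {\tt security} forbids any such post-challenge access to secret information. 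The whole argument therefore hinges on showing that these post-challenge corruptions are useless against the challenge, which is exactly what the key update $L_i \mapsto f_{L_i}(1)$ buys us.

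First I would introduce a mental scheme identical to the compiled scheme except that every encryption key of the form $f_{L_i}(0)$, for an $L_i$ not known to $\A$ at the time of use, is replaced by a freshly and independently chosen random key, and likewise every updated key $f_{L_i}(1)$ of an uncorrupted chain is replaced by a fresh independent random value. I claim this mental scheme is {\tt strongly-secure}; otherwise I can break the {\tt security} of the input scheme via a simulator $\B$ that runs $\A$ against a simulated copy of the mental scheme. Because $\B$ chooses all keys itself, it can answer every request --- selection of the revoked sets ${\cal R}^{(i)}$, release of $I_u$ for revoked users, and delivery of the ciphertexts $C^{(i)}$. Since the number of broadcasts is polynomially bounded, $\B$ guesses the index of the challenge broadcast with inverse-polynomial probability; at that broadcast it hands the partition and the revoked set ${\cal R}$ to the input-scheme challenger and forwards the latter's challenge ciphertext to $\A$. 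This embedding is legitimate precisely because item 2 of Definition \ref{def:stateless:strong-security} guarantees that every corrupted user lies in ${\cal R}$, so the input scheme is invoked on a configuration in which all corrupt users are revoked. For the post-challenge queries of item 3, $\B$ answers a request for $I_u$ by returning, for each subset $S_i \ni u$, the fresh random value that has replaced the updated key $f_{L_i}(1)$; in the mental scheme this value is information-theoretically independent of the challenge-time encryption key $f_{L_i}(0)$, so these answers leak nothing about the challenge, and any advantage of $\A$ against the mental scheme yields advantage against the input-scheme challenger.

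Next I would bridge the mental scheme and the actual compiled scheme by a standard hybrid argument. The only difference is that in the real scheme the keys of a fixed subset form a pseudorandom chain $L_i, f_{L_i}(1), f_{f_{L_i}(1)}(1), \ldots$ with the encryption keys $f_{(\cdot)}(0)$ hanging off it, whereas in the mental scheme these are truly random. Using the assumed computational independence of $\{L_i\}_i \cup \{I_u\}_u$, the number of such chains is at most $w = poly(\kappa)$ and the number of broadcasts is polynomial, so I replace one chain at a time; a single chain distinguishable from a random sequence yields a distinguisher for $f$ on the secret seed $L_i$, contradicting pseudorandomness. Combining the two steps establishes that the compiled scheme is {\tt strongly-secure}.

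I expect the main obstacle to be exactly the post-challenge corruption of item 3 together with the forward evolution of $I_u$: I must argue carefully that releasing the updated secret information $f_{L_i}(1)$ (and its descendants) does not compromise the challenge-time encryption key $f_{L_i}(0)$. This is where the proof genuinely departs from \cite{BellareCTRSA03} and from the input-scheme analysis, since it requires that $f_{L_i}(0)$ and $f_{L_i}(1)$ be computationally independent whenever the seed $L_i$ is secret --- a property furnished by the pseudorandomness of $\{f_k\}$ in concert with the assumed computational independence of the long-lived keys. A secondary subtlety to handle cleanly is that, unlike in the stateful case, corruption here gives $\A$ the Byzantine control of $u$ rather than merely its keys, so I would note that, once a corrupt user is revoked in ${\cal R}$, its future behaviour is irrelevant to the secrecy of the challenge and the reduction is unaffected.
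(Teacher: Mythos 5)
Your proposal is correct and follows essentially the same approach as the paper: both proofs proceed by (i) introducing a mental scheme in which the evolved keys of incorrupt chains are replaced by freshly and independently chosen random keys, arguing it is {\tt strongly-secure} via the input scheme's {\tt security} because post-challenge corruptions then yield values information-theoretically independent of the challenge-time encryption keys, and (ii) bridging the mental scheme and the compiled scheme by a polynomial hybrid argument whose failure would contradict the pseudorandomness of $\{f_k\}$. The only cosmetic difference is the indexing of the hybrid: you replace the pseudorandom chains one subset at a time (mirroring the paper's proof of Theorem \ref{theorem:main1} for the stateful compiler), whereas the paper's proof of this theorem walks the hybrids over revocation events via the experiments ${\sf EXPT}_j$, switching all incorrupt chains to random up to stage $j$ --- two equivalent arrangements of the same polynomially-bounded grid of hybrids.
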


\begin{proof}
Consider a mental game in which the system is initialized as in
the input scheme. However, with respect to each broadcast
operation, each incorrupt $L_i$ is substituted with a pair of
independently chosen random keys $\langle L^{(a,0)}_i, L^{(a,1)}_i
\rangle$ such that $L^{(a,0)}_i$ is used to encrypt the message-encryption
key $K$ (if selected), where $a=1,2,\ldots$. We claim that this
scheme is {\tt secure}. This is because the keys that are used to
encrypt the session key are freshly and independently chosen at
random, which means that it is essentially a ``short-lived"
version of the input scheme.
We also claim that this scheme is {\tt strongly-secure}. This is because the keys that
are used to encrypt the message-encryption
key are freshly and independently chosen at random, which means that
the secrets compromised after the ``challenge" message are information-theoretically
independent of the the secrets used to encrypt the session key in the ``challenge" message.
Therefore, this scheme is {\tt strongly-secure}.

Suppose the scheme output by the compiler (called the real-life scheme) is not {\tt strongly-secure}.
We observe that the difference between the above mental game and the real-life scheme
is ``how the incorrupt keys are evolved." Specifically, for $a>0$, in the former case,
the $\langle L^{(a,0)}_i, L^{(a,1)}_i \rangle$ are independently chosen at random;
in the latter case,
$\langle L^{(a,0)}_i=f_{f^{a-1}_{L_i}(1)}(0), L^{(a,1)}_i = f^a_{L_i}(1)\rangle$,
where $f^0_X(\cdot)=X$, $f^1_X(\cdot)=f_X(\cdot)$, and $f^2_X(\cdot) = f_{f_{X}(\cdot)}(\cdot)$.

Now we consider the following experiment ${\sf EXPT}_j$, where $0 \leq j \leq \ell$ and $\ell$
is the total number of revocation operations (which is polynomially bounded).
The experiment is initialized as in the above mental game or as in the real-life scheme
(both are the same at this stage). For any $0 \leq a \leq j$, any {\em incorrupt}
$\langle L^{(a,0)}_i, L^{(a,1)}_i \rangle$ are independently chosen at random.
For any $j < a \leq \ell$, any {\em incorrupt}
$\langle L^{(a,0)}_i, L^{(a,1)}_i \rangle$ is defined as
$\langle L^{(a,0)}_i=f_{f^{a-j-1}_{L^{(j,1)}_i}(1)}(0),
L^{(a,1)}_i = f^{a-j}_{L^{(j,1)}_i}(1)\rangle$.
The experiments can get through because the secrets (some of them are used for encrypting
the message-encryption key) are (at least) computationally independent of each other.
We observe that ${\sf EXPT}_0$ corresponds to the real-life scheme,
and ${\sf EXPT}_\ell$ corresponds to the above mental scheme.
Since we assumed that ${\sf EXPT}_0$ is not {\tt strongly-secure}, it holds that
$\A$ has a non-negligible success probability
$\varepsilon_0$ with respect to Definition \ref{def:stateless:strong-security}.
On the other hand, we already know that ${\sf EXPT}_\ell$ is {\tt strongly-secure},
which means that $\A$ has only a negligible success probability $\varepsilon_\ell$.
Since $\ell$ is polynomially bounded, there must
exist $0 \leq j < \ell$ such that ${\sf EXPT}_j$ and ${\sf EXPT}_{j+1}$ are
distinguishable with a non-negligible probability (by the means of the adversary $\A$
that may or may not break the {\tt strong-security} in the respective experiments).
Suppose ${\sf f}$ is a challenge oracle that is either a random function or a pseudorandom function
with equal probability. Then, we can distinguish a random function from a pseudorandom one, via
black-box access to ${\sf f}$, with a non-negligible probability by
letting $\langle L^{(j,0)}_i, L^{(j,1)}_i \rangle$ be obtained from an
oracle query to ${\sf f}$
with respect to $L^{(j-1,1)}_i$. \qed
\end{proof}

\subsection{A Concrete {\tt strongly-secure} Stateless Group Communication Scheme}
\label{sec:stateless:concrete-scheme}

Within the subset-cover revocation framework, \cite{NaorCrypto01}
presented two concrete algorithms,
namely the {\em complete subtree method} and the {\em subset difference method}.
The difference between the two methods is how the
collection of subsets (covering ${\cal N} \setminus {\cal R}$)
is selected. Now we briefly review the {\em complete subtree method}, to which the above
compiler is applicable.

Suppose the receivers are the leaves in a rooted full binary tree with $N$ leaves
(assume that $N$ is a power of 2). Such a tree contains $2N-1$ nodes (leaves plus internal nodes)
and for any $1 \leq i \leq 2N-1$ we assume that $v_i$ is a node in the tree.
Denote by $ST({\cal R})$ the unique (directed) Steiner Tree
induced by the set ${\cal R}$ or vertices and
the root; i.e., the minimal subtree of the full binary tree that connects all the leaves in
${\cal R}$. The collection of subsets $S_1,\ldots,S_w$ in this scheme corresponds to all
complete subtrees in the full binary tree. For any node $v_i$ in
the full binary tree (either an internal node
or a leaf, $2N-1$ altogether) let subset $S_i$ be the collection of receivers $u$ that
correspond to the leaves of the subtree rooted at node $v_i$. In other words, $u \in S_i$ iff
$v_i$ is an ancestor of $u$.

The {\bf initialization} algorithm is simple: assign an {\em
independent} and random key $L_i$ to every node $v_i$ in the
complete tree, and provide every receiver $u$ with the $\log N+1$
keys associated with the nodes along the path from the root to
leaf $u$. (As said before, if the secret information $I_u$ is transmitted using a
key established via a two-party authenticated key-exchange protocol, then
the key is securely erased after the initialization.)
The {\bf broadcasting} algorithm is as follows. For a
given set ${\cal R}$ of revoked receivers, let $u_1,\ldots,u_r$ be
the leaves corresponding to the elements in ${\cal R}$. The method
to partition ${\cal N}\setminus {\cal R}$ into disjoint subsets is
as follows. Let $S_{i_1},\ldots,S_{i_m}$ be all the subtrees of
the original tree that ``hang" off $ST({\cal R})$; i.e., all
subtrees whose roots $v_1,\ldots,v_m$ are adjacent to nodes of
outdegree 1 in $ST({\cal R})$, but are not in $ST({\cal R})$.
It follows immediately that this collection covers all nodes in
${\cal N}\setminus {\cal R}$ and only those. As a result, in the
{\bf decryption} algorithm, given a message
$$
\langle [i_1,\ldots,i_m,E_{L_{i_1}}(K),\ldots,E_{L_{i_m}}(K)],F_K(M)\rangle
$$
a receiver $u$ needs to find whether any of its ancestors is
among $i_1,\ldots,i_m$; note that there
can be only one such ancestor, so $u$ may belong to at most one subset.

Note that the number of
subsets in a cover with $N$ users and $r$ revocations is at most $r \log \frac{N}{r}$.
The message length is of at most $r \log \frac{N}{r}$ keys.
Each receiver stores $\log N$ keys, and the center stores $2N-1$ keys.
The decryption process incurs $O(\log \log N)$ comparison
operations (for finding the cover) plus two decryption operations.

Proof of the following theorem can be straightforwardly adapted from \cite{NaorCrypto01}.

\begin{theorem}
\label{theorem:stateless:security}
The complete subtree revocation scheme of the subset-cover framework is {\tt secure}.
\end{theorem}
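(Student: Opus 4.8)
The plan is to adapt the reduction of Naor et al.~\cite{NaorCrypto01} to the security notion of Definition~\ref{def:stateless:security}, isolating the one property of the complete subtree method that makes everything work, namely \emph{key-indistinguishability}. Recall that the challenge ciphertext has the form $\langle [i_1,\ldots,i_m, E_{L_{i_1}}(K),\ldots,E_{L_{i_m}}(K)], F_K(M)\rangle$, where $S_{i_1},\ldots,S_{i_m}$ is the complete-subtree cover of ${\cal N}\setminus{\cal R}$. Because ${\cal R}$ is required to contain every user that $\A$ corrupted, each $S_{i_\ell}$ is a subtree hanging off $ST({\cal R})$ and hence contains no revoked (in particular no corrupted) leaf; equivalently, the node $v_{i_\ell}$ rooting it is an ancestor of no corrupted user. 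Since the initialization assigns an \emph{independent} random key to every node and hands user $u$ exactly the keys on its root-to-leaf path, none of the corrupted users' secret information $I_u$ contains, or depends on, any of $L_{i_1},\ldots,L_{i_m}$. Moreover, the post-challenge restriction in item~(3) of Definition~\ref{def:stateless:security} forbids $\A$ from obtaining $I_u$ for any further revoked $u$, so these keys stay hidden from $\A$ for the whole game.

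First I would show, by a hybrid argument over $\ell=1,\ldots,m$, that replacing each plaintext $K$ inside $E_{L_{i_\ell}}(\cdot)$ by a freshly chosen independent value is undetectable. The $\ell$-th hybrid step reduces to the chosen-plaintext security of $E$ (which, by the preliminaries of Section~\ref{sec:tools}, follows from the security of the pseudorandom function family): the reduction plants its external $E$-challenge in position $\ell$, exploiting that $L_{i_\ell}$ is never known to $\A$; all \emph{other} occurrences of the key $L_{i_\ell}$ appear only \emph{inside} encryptions (in the broadcasts $C^{(a)}$ of other sessions, since the same long-lived keys are reused), and the reduction produces those ciphertexts through its encryption oracle rather than from the key itself. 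After the $m$-th hybrid, the session key $K$ feeding $F_K(M)$ is information-theoretically independent of everything $\A$ learns through the subset-encryption component.

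With $K$ now an unknown independent key, I would finish by invoking the indistinguishability-under-single-chosen-plaintext (IND-P0-C0) security of $F_K$ to argue that $F_K(M)$ is indistinguishable from the encryption of a random message of the same length; hence the challenge $C$ is computationally independent of the hidden bit, and $\A$'s guessing advantage is negligible. Chaining the two reductions, any $\A$ with non-negligible advantage in the game of Definition~\ref{def:stateless:security} yields either a distinguisher for $E$ (and thus for the pseudorandom function family) or a distinguisher for $F$, a contradiction.

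The step I expect to be the main obstacle is the bookkeeping for key-indistinguishability \emph{across sessions}: because the complete subtree method uses long-lived keys, each $L_{i_\ell}$ may be used to encrypt session keys in many of the broadcasts $C^{(1)},\ldots,C^{(\ell_1)}$ seen before the challenge as well as in the post-challenge broadcasts $C^{(\ell_1+2)},\ldots$, so the reduction must faithfully simulate all of these without holding $L_{i_\ell}$, which is precisely what the multi-message flavour of chosen-plaintext security of $E$ buys us. It is also worth emphasizing that the argument uses the post-challenge restriction of item~(3) in an essential way: were $\A$ permitted to corrupt a user of some $S_{i_\ell}$ after the challenge (as it is in the {\tt strong-security} game of Definition~\ref{def:stateless:strong-security}), it would learn $L_{i_\ell}$ directly and recover $K$, which is exactly the attack of Section~\ref{sec:motivation}. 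This is why the present theorem yields only {\tt security} and not {\tt strong-security}.
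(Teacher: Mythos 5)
Your proposal is correct and takes essentially the same approach as the paper: the paper gives no explicit proof, stating only that it ``can be straightforwardly adapted from \cite{NaorCrypto01},'' and your argument is exactly that adaptation --- the independence of the cover keys $L_{i_1},\ldots,L_{i_m}$ from the corrupted users' $I_u$ (which holds information-theoretically for the complete subtree method), a hybrid over the $m$ subset encryptions reducing to the chosen-plaintext security of $E$, and a final appeal to the IND-P0-C0 security of $F$. Your closing remark on why the post-challenge corruption restriction is essential, and hence why this yields only {\tt security} and not {\tt strong-security}, is likewise consistent with the paper's discussion in Section \ref{sec:motivation} and Proposition \ref{proposition:stateless:active-model:backward-does-not-imply-forward}.
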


As showed before, the subset-cover framework, and thus the
complete subtree revocation scheme, is not {\tt strongly secure}.
As a corollary of Theorem
\ref{theorem:stateless:compiler-security} (which states that the
compiler in Section \ref{sec:stateless:compiler} can transforms a
{\tt secure} stateless group communication scheme into a {\tt
strongly-secure} one) and Theorem \ref{theorem:stateless:security}
(which states that the above complete subtree method is a {\tt
secure} stateless group communication scheme), the scheme output
by the compiler is {\tt strongly-secure}.

\begin{corollary}
The stateless group communication scheme obtained by applying the
compiler in Section \ref{sec:stateless:compiler} to the above {\tt
secure} complete subtree revocation scheme is {\tt
strongly-secure}.
\end{corollary}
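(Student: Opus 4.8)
The plan is to obtain the statement by chaining together the two relevant results already in hand and then discharging the single structural side condition they require. Specifically, Theorem~\ref{theorem:stateless:compiler-security} says that the compiler of Section~\ref{sec:stateless:compiler} upgrades a {\tt secure} input scheme to a {\tt strongly-secure} one, provided that (i) the input scheme is {\tt secure}, (ii) the pseudorandom function family $\{f_k\}$ is secure, and (iii) the keys in $\{L_i\}_i \cup \{I_u\}_u$ are computationally independent of each other. Premise (i) is exactly Theorem~\ref{theorem:stateless:security}, which asserts that the complete subtree revocation scheme is {\tt secure}, and premise (ii) is our standing assumption on $\{f_k\}$. Hence the only content that requires verification is premise (iii) for the particular key layout of the complete subtree method.

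First I would recall the key assignment of the complete subtree method as reviewed in Section~\ref{sec:stateless:concrete-scheme}: initialization draws an \emph{independent} and uniformly random key $L_i$ for each of the $2N-1$ nodes $v_i$ of the full binary tree, and every receiver $u$ is given the $\log N+1$ keys $I_u$ lying on the path from the root to the leaf $u$. The crucial observation is that each key in $I_u$ is literally one of the node keys $L_i$; therefore, viewed as a set of distinct secrets, $\{L_i\}_i \cup \{I_u\}_u = \{L_i\}_i$, and no new key material is introduced beyond the node keys themselves.

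Next I would discharge premise (iii) against the definition of computational independence from Section~\ref{sec:tools}. Because the $L_i$ are sampled independently and uniformly over $\{0,1\}^\kappa$, the tuple $(L_1,\ldots,L_{2N-1})$ is \emph{identically distributed} to a tuple $(r_1,\ldots,r_{2N-1})$ of freshly drawn uniform strings. Consequently no algorithm at all---let alone a probabilistic polynomial-time one---can distinguish the two distributions, so the distinguishing advantage is exactly $0$, which is trivially negligible. This establishes computational (indeed information-theoretic) independence of the keys in $\{L_i\}_i \cup \{I_u\}_u$. I would also note, for completeness, that transmitting each $I_u$ over the authenticated private channel at initialization, with secure erasure afterward as stipulated by the model, does not disturb this independence, so the hypothesis genuinely applies to the scheme as deployed.

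With all three premises of Theorem~\ref{theorem:stateless:compiler-security} verified, I would simply invoke that theorem to conclude that the scheme obtained by applying the compiler of Section~\ref{sec:stateless:compiler} to the complete subtree method is {\tt strongly-secure} in the sense of Definition~\ref{def:stateless:strong-security}. There is no genuine obstacle in this argument: the entire substance is the verification of the computational-independence side condition, and that step is immediate precisely because the complete subtree method uses information-theoretically independent node keys by construction. The corollary is therefore a direct consequence, and the proof amounts to a short citation of the two theorems together with this one-line check.
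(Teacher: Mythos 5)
Your proposal is correct and follows essentially the same route as the paper: the paper derives the corollary by chaining Theorem~\ref{theorem:stateless:compiler-security} (the compiler upgrades {\tt secure} to {\tt strongly-secure}) with Theorem~\ref{theorem:stateless:security} (the complete subtree method is {\tt secure}). Your explicit verification of the computational-independence premise is a welcome bit of extra care that the paper handles only implicitly---via the word ``independent'' in the initialization of the complete subtree method and a remark in Section~\ref{sec:stateless:discussion}---but it does not constitute a different argument.
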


Now we analyze the {\em extra} complexities (corresponding to
each revocation event) for achieving
{\tt strong-security}.
\begin{itemize}
\item The center updates its keys by evaluating $2N-1$ pseudorandom functions
(this corresponds to the worst case scenario that no keys have
been corrupt -- the corrupt keys, if known,
do not need to be updated). Moreover, in order to encrypt a message, the center needs to evaluate
$r \log \frac{N}{r}$ pseudorandom functions; this computational complexity can indeed be traded
with an extra $2N-1$ storage complexity. Since the center is typically powerful in terms of
computation, communication, and storage, these extra complexities are insignificant.
\item Each receiver needs to evaluate $2\log N$ pseudorandom functions and at most stores
$\log N$ keys. Even if the receivers are low-end equipment (e.g., sensors), these extra
complexities should still be insignificant.
\end{itemize}

\subsection{Discussions}
\label{sec:stateless:discussion}

The class of the stateless group communication schemes that can be
made {\tt strongly-secure} via the compiler in Section
\ref{sec:stateless:compiler} should possess the following
property: all the different keys belonging to $\{L_i\}_i \cup
\{I_u\}_u$ are computationally-independent of each
other. This explains why the above compiler applies to the {\em
complete subtree method} of \cite{NaorCrypto01}. On the other
hand, the {\em subset difference method} of \cite{NaorCrypto01},
which does not achieve the desired {\tt strong-security}, cannot
made {\tt strongly-secure} via the above compiler because the keys
belonging to $\{L_i\}_i \cup \{I_u\}_u$ are not computationally
independent.\footnote{The independence condition can indeed be
satisfied at the expense of each receiver storing $O(N)$ keys,
which is clearly not scalable.}

The stateless group communication schemes presented in
\cite{JhoEurocrypt05}, which outperforms
\cite{NaorCrypto01,HalevyCrypto02} under certain interesting
circumstances, are not {\tt strongly-secure}. Unfortunately, they
cannot be made {\tt strongly-secure} via the above compiler for a
similar reason. It is an interesting open question to make the
stateless group communication schemes of \cite{JhoEurocrypt05}
{\tt strongly-secure} at an expense similar to the extra
complexity imposed by the compilers presented in this paper.

\section{Conclusion and Open Problems}
\label{sec:conclusion}

We showed that a class of existing group communication schemes,
stateful and stateless alike, are vulnerable to a realistic severe attack.
We presented formal models that allow us to capture the desired security properties,
and explore the relationships between the security notions.
We showed how some methods can make a {\em subclass} of existing schemes immune to the attack
at a very small extra cost. An interesting open question is to
make other schemes (e.g., the stateful \cite{ShermanOFT03,BalensonOFTDraft}
and the stateless \cite{JhoEurocrypt05}) secure against the attack without imposing any
significant extra complexity.

\section*{Acknowledgements}
We thank Jonathan Katz for illuminating discussions that led to the refined model
in Section \ref{sec:model}, our SASN'05 shepherd, Donggang Liu, for helpful feedback
and communication, and the SASN'05 anonymous reviewers for useful comments.
We thank the anonymous reviewers of this special issue for detailed and
constructive suggestions that improved this paper. We thank Paul Parker for
a careful proofreading that helped polish the writing.

This work was supported in part by ARO, NSF, and UTSA.

%\bibliographystyle{abbrv}
%\bibliography{crypto}

\begin{appendix}

\section{Join and Leave Protocols of LKH}
\label{appendix:wgl}

For completeness, we briefly review join and leave protocols of LKH
in Figure \ref{fig:wgl}. The notations are consistent with
the main body of the paper.

\begin{figure*}[ht]
\begin{center}
\fbox{
\begin{minipage}{41pc} 
\begin{tabbing}
123\=123\=123\=\kill
Join protocol for group-oriented rekeying: \ \ \ \ \ \ \ \ // suppose user $u$ joins the group \\
\> server $s$ generates a new key $k_u$ for user $u$ \\
\> server $s$ finds a joining point $x_{j}$ \\
\> server $s$ attaches $k_u$ to $x_j$ \\
\> let $x_0$ be the root \\
\> $\bk_{j+1} \leftarrow k_u$ \\
\> denote by $x_{i-1}$ the parent of $x_i$ for $1 \leq i \leq j$ \\
\> let $\bk_0,\bk_1, \ldots,\bk_j$ be the current keys of $x_0,\ldots,x_j$, respectively \\
\> server $s$ generates fresh keys $\hk_0,\hk_1, \ldots, \hk_j$ \ \ \ \ //
new keys of $x_0,\ldots,x_j$ \\
\> $s \to \userset(\bk_0): \{\hk_0\}_{\bk_0}, \{\hk_1\}_{\bk_1}, \ldots,\{\hk_j\}_{\bk_j}$ \\
\> $s \to u: \{\hk_0, \hk_1, \ldots, \hk_j\}_{k_u}$
\end{tabbing}
\begin{tabbing}
123\=123\=123\=\kill
Leave protocol for group-oriented rekeying: \ \ \ \ \ \ \ \ // suppose $u$ leaves the group \\
\> let $x_{j+1}$ be the deleted $k$-node for $k_u$ \\
\> $\bk_{j+1} \leftarrow k_u$ \\
\> server $s$ finds the leaving point $x_j$ (parent of $k_u$)  \\
\> server $s$ removes $\bk_{j+1}$ from the key tree \\
\> let $x_0$ be the root \\
\> denote by $x_{i-1}$ the parent of $x_i$ where $1 \leq i \leq j$ \\
\> let $\bk_0,\bk_1, \ldots, \bk_j$ be the keys of $x_0, x_1, \ldots,x_j$
\ \ \ // they need to be changed \\
\> server $s$ generates fresh keys $\hk_0,\hk_1, \ldots, \hk_j$ as
the new keys of $x_0, x_1, \ldots,x_j$ \\
\> {\tt FOR} $i = 0 \text{ \tt TO } j$ \\
\>\> let $\bk_{i_1},\ldots \bk_{i_{z_i}}$ be the keys
at the children of $x_i$ in the new key tree \\
\>\> $L_i \leftarrow (\{\hk_i\}_{\bk_{i_1}}, \ldots, \{\hk_i\}_{\bk_{i_{z_i}}})$ \\
\> $s \to \userset(\bk_0) \setminus \{u\}: (L_0, \ldots, L_j)$
\end{tabbing}
\end{minipage}
}
\end{center}
\caption{Join- and leave-incurred group-oriented rekeying in LKH}
\label{fig:wgl}
\end{figure*}

\end{appendix}

\end{document}